\theoremstyle{plain}
\newtheorem{thm}{\protect\theoremname}
\newtheorem{lem}{Lemma}
\newtheorem{prop}{Proposition}
\DeclareMathOperator*{\argmin}{\arg\,\min}
\crefname{appsec}{Appendix}{Appendices}
\crefname{appsubsec}{Appendix}{Appendices}
\global\long\def\IXMs{I(X;M)}%
\global\long\def\IXMf{I(X';M)}%
\newcommandx\deltaFunc[1][usedefault,addprefix=\global,1=x]{\delta_{#1}}
\newcommandx\IaccS[1][usedefault, addprefix=\global, 1=\klopBase]{I_\mathrm{acc}^{#1}(X;M)}%
\newcommandx\IaccF[1][usedefault, addprefix=\global, 1=\klopBase]{I_\mathrm{acc}^{#1}(X';M)}%
\newcommand{\pconds}{p_{X|m}}
\newcommand{\pcondf}{p_{X'|m}'}
\newcommand{\PCONDS}{p_{X|M}}
\newcommand{\PCONDF}{p_{X'|M}'}
\newcommandx\effCPs{\klop{\pconds}}%
\newcommandx\effCPf{\klop{\pcondf}}%
\newcommandx\effCPsM{\klop{\PCONDS}}%
\newcommandx\effCPfM{\klop{\PCONDF}}%
\newcommandx\effMIs{I_{\klopBase}(X;M)}%
\newcommandx\effMIf{I_{\klopBase}(X';M)}%
\newcommandx\IinaccS{D(\PCONDS \Vert\effCPsM)}
\newcommandx\IinaccF{D(\PCONDF \Vert\effCPfM)}
\newcommand{\Atheta}{A_\theta}
\newcommand{\ptheta}{p_\theta}
\newcommand{\pthetapi}{p_{\theta+\pi}}
\newcommand{\figdir}{.} 
\global\long\def\HorE{E}%
\global\long\def\ps{p}%
\global\long\def\Hs{\HorE}%
\global\long\def\pf{p'}%
\global\long\def\Hf{\HorE'}%
\global\long\def\traj{\bm{x}}
\global\long\def\rtraj{\tilde{\traj}}
\global\long\def\dTrans#1#2{#1\!\shortrightarrow\!#2}%
\global\long\def\pptp{\dTrans{\pf}{\ps}}
\global\long\def\ptpp{\dTrans{\ps}{\pf}}
\global\long\def\klopptpp{\dTrans{\klop \ps}{\klop \pf}}%
\global\long\def\ft{1}%
\global\long\def\EP{\Sigma}%
\global\long\def\EPr{\dot{\Sigma}}%
\global\long\def\EPp{\EP(\ptpp)}%
\global\long\def\Wp{W(\ptpp)}%
\global\long\def\LL{L}%
\global\long\def\klopBase{\phi}%
\global\long\def\klop#1{\klopBase(#1)}%
\global\long\def\ddt{{\textstyle \frac{d}{dt}}}%
\global\long\def\sX{X}%
\global\long\def\LLL{\Lambda}%
\global\long\def\LL{L}%
\newcommandx\sTrans[3][usedefault, addprefix=\global, 1=L]{#1_{#3#2}}%
\global\long\def\Lij{\sTrans{x}{x'}}%
\global\long\def\Lji{\sTrans{x'}{x}}%
\global\long\def\DLbase{\mathsf{D}}%
\global\long\def\uL{\mathsf{F}^{\LL}}%
\global\long\def\DL{\DLbase^{\LL}}%
\global\long\def\dL{\dLbase^{\LL}}%
\global\long\def\uL{\mathsf{A}}%
\global\long\def\DL{\DLbase}%
\global\long\def\dL{\beta^{-1}}%
\global\long\def\Dset{\mathcal{P}}%
\global\long\def\boltz#1{e^{#1}}%
\global\long\def\ppt{{\textstyle \partial_{t}}}%
\newcommandx\piL[1][usedefault, addprefix=\global, 1=L]{\pi^{#1}}%
\newcommand{\img}{\mathrm{img}\;}
\newcommand{\supp}{\mathrm{supp}\;}
\newcommand{\imgKLOP}{\img \klopBase}
\global\long\def\modDecomp{\mathcal{C}}%
\global\long\def\otherS{B}%
\newcommandx\LLA[1][usedefault, addprefix=\global, 1=A]{\LL^{#1}}%
\global\long\def\projMbase{\klopBase_{{\modDecomp}}}%
\global\long\def\projM#1{\projMbase(#1)}%
\global\long\def\modSubsys{A}%
\global\long\def\olaps{O(\modDecomp)}%
\global\long\def\olapsShort{O}%
\newcommand{\invBeta}{\beta^{-1}}
\newcommand{\lntwooverbeta}{(\ln 2)/\beta}
\newcommand{\inftime}{\tau}
\newcommandx\eLb[1][usedefault, addprefix=\global, 1=p, 1=\inftime]{e^{#1 L}}%
\newcommandx\eL[2][usedefault, addprefix=\global, 1=p, 2=\inftime]{\eLb[#2]#1}%
\newcommandx\NFE[2][usedefault, addprefix=\global, 1=\ps,2=\Hs]{F_{#2}(#1)} %
\newcommandx\NFEs[1][usedefault, addprefix=\global, 1=\ps]{\NFE[#1][\Hs]} %
\newcommandx\NFEf[1][usedefault, addprefix=\global, 1=\pf]{\NFE[#1][\Hf]}%
\def\cgMarker{\hat}
\global\long\def\cgLL{\cgMarker{\LL}}%
\global\long\def\cgPhi{\cgMarker{\klopBase}}%
\global\long\def\cgf{\xi}%
\global\long\def\cgEPna{\cgMarker{\EP}}%
\global\long\def\CG{Z}%
\newcommandx\cgP[1][usedefault, addprefix=\global, 1=p]{#1_\CG}%
\newcommandx\cgPf[1][usedefault, addprefix=\global, 1=p]{{#1_\CG'}}%
\global\long\def\z{z}%
\global\long\def\zz{z'}%
\newcommandx\cgUL[1][usedefault, addprefix=\global, 1=\LL]{\cgMarker{u}^{#1}}%
\global\long\def\cgDset{{{\Dset}_\CG}}%
\global\long\def\cgEnergy{\hat{F}}%
\global\long\def\Hunif{E^{\varnothing}}%
\global\long\def\po{u}%
\global\long\def\paramA{\lambda}
\newcommand{\boxLeft}{-1}
\newcommand{\boxRight}{1}
\newcommand{\boxExtent}{[\boxLeft,\boxRight]}
\global\long\def\grav{\kappa}%
\crefname{thm}{Theorem}{Theorems}
\crefname{cor}{Corollary}{Corollary}
\crefname{lem}{Lemma}{Lemmas}
\crefname{prop}{Proposition}{Propositions}
\crefname{defn}{Definition}{Definitions}
\crefname{rem}{Remark}{Remarks}
\DeclareFontFamily{U}{mathx}{\hyphenchar\font45}
\DeclareFontShape{U}{mathx}{m}{n}{
      <5> <6> <7> <8> <9> <10>
      <10.95> <12> <14.4> <17.28> <20.74> <24.88>
      mathx10
      }{}
\DeclareSymbolFont{mathx}{U}{mathx}{m}{n}
\DeclareMathSymbol{\bigtimes}{1}{mathx}{"91}
\providecommand{\theoremname}{Theorem}
\newcommand{\ak}[1] {}
 	\renewcommand{\ak}[1] {{\color{Orange}{AK: #1}}}
\begin{document}
\bstctlcite{BSTcontrol} %

\title{Work, entropy production, and thermodynamics of information under protocol constraints}
\author{Artemy Kolchinsky}
\altaffiliation{artemyk@gmail.com}
\affiliation{Santa Fe Institute, Santa Fe, New Mexico}

\author{David H. Wolpert}
\altaffiliation{
Complexity Science Hub, Vienna; Arizona State University, Tempe, Arizona; 
\texttt{http://davidwolpert.weebly.com}}
\affiliation{Santa Fe Institute, Santa Fe, New Mexico}

\begin{abstract}
In many real-world situations, there are constraints on the ways in which a physical system can be manipulated.  
We investigate the entropy production (EP) and extractable work involved in bringing a system from some initial distribution
$\ps$ to some final distribution $\pf$, given that the set of master equations available to the driving protocol obeys some constraints. 
We first derive general bounds on EP and extractable work, 
as well as a decomposition of the nonequilibrium free energy into an ``accessible free energy'' (which can be extracted as work, given a set of constraints) and an ``inaccessible free energy'' (which must be dissipated as EP).
In a similar vein, we consider the thermodynamics of information in the presence of constraints, and decompose the information acquired in a measurement into ``accessible'' and ``inaccessible'' components.   This decomposition allows us 
to consider the thermodynamic efficiency of different measurements of the same system, given a set of constraints. 
 We use our framework to analyze protocols subject to symmetry,  
modularity, and coarse-grained constraints, and consider various examples including the Szilard box, the 2D %
Ising model, and a multi-particle flashing ratchet. %
\end{abstract}
\maketitle

\section{Introduction}

\subsection{Background}

One of the foundational issues in thermodynamics is quantifying how much work is required to transform %
a system between two thermodynamic states. 
 Recent
results in statistical physics have derived general bounds on work which hold even for transformations between nonequilibrium
states~\citep{takara_generalization_2010,parrondo2015thermodynamics}.
In particular, suppose one wishes to transform a system with initial distribution $p$
and energy function $\Hs$ to some final distribution $p'$ and energy
function $\Hf$. For an \emph{isothermal} process, during which the system remains in contact with
a single heat bath at inverse temperature $\beta$, the work extracted
during this transformation obeys %
\begin{equation}
\Wp \le \NFEs  -\NFEf,\label{eq:workbound}
\end{equation}
where $\NFEs :=\left\langle \HorE\right\rangle _{p}-S(p)/\beta$
is the \emph{(nonequilibrium) free energy} of distribution $p$ given energy function $E$~\citep{takara_generalization_2010,parrondo2015thermodynamics,esposito2011second}. %
This inequality comes from the second law of thermodynamics, which states that \emph{entropy production} (EP), the total increase of the entropy of the system and all coupled reservoirs, is non-negative. For an isothermal process that carries out the transformation $\ptpp$, EP is given by %
\begin{align}
\EPp = \beta[\NFEs-\NFEf-\Wp] \ge 0.
\label{eq:EPw}
\end{align}
\cref{eq:workbound}  follows from \cref{eq:EPw} by a simple rearrangement.

To extract work from a system, %
one must manipulate the system by applying a {driving protocol}. 
There are many different driving protocols that can be used to transform some initial distribution $p$ to some final distribution $p'$, which generally 
incur different amounts of  EP and work. 
Achieving the fundamental bounds set by the second
law, such as \cref{eq:workbound}, typically requires 
idealized protocols, which make use of arbitrary energy functions, infinite timescales, etc.  
In  many real-world scenarios, however, there are strong practical constraints on how one can manipulate a  system, and %
 such idealized protocols are unavailable.  

The goal of this paper is to derive stronger bounds on EP and work involves in carrying out %
the transformation
$p \rightarrow p'$, given constraints on the set of master equations available to the driving protocol. 
Ultimately, such stronger bounds on EP and work can provide new insights 
into various real-world thermodynamic processes and work-harvesting devices, ranging from biological
organisms to artificial engines. %
They %
can also cast new light on some %
well-studied scenarios in
statistical physics.

\begin{figure}[b]
\begin{centering}
\includegraphics[width=0.65\columnwidth]{}
\par\end{centering} 
\caption{A two-dimensional Szilard box with a single Brownian particle, where a vertical partition
(blue) can be positioned at different horizontal locations in
the box. We demonstrate that only information about the particle's
horizontal position, not its vertical position, can be used to extract work from the system.\label{fig:szilard0}}
\end{figure}

For example, consider a two-dimensional Szilard
box connected to a heat bath~\footnote{We use a Brownian model of the Szilard engine, which is similar to setups commonly employed in modern nonequilibrium statistical physics~\cite{berut2012experimental,roldan2014universal,koski2014experimental,shizume1995heat,gong2016stochastic,parrondo2015thermodynamics}, as shown in \cref{fig:szilard0}.  This model can be justified by imagining a box that contains a large colloidal 
particle, as well as a medium of small solvent particles to
which the vertical partition is permeable.  Note that this model differs from Szilard's original proposal~\cite{szilard1929entropieverminderung}, in which the box contains a single particle in a vacuum, which has been analyzed in~\cite{proesmans2015efficiency,hondou2007equation,bhat2017unusual}.}, which contains a single Brownian particle and a vertical partition, and suppose that the driving protocols can manipulate the horizontal position of this partition. 
Imagine that  the particle is initially located in the \emph{left half} of the box. 
How much work can be extracted by transforming this initial distribution to a uniform final distribution, assuming the system begins and ends with a uniform energy function? %
A simple application of \cref{eq:workbound} shows that the extractable work is upper bounded by $\lntwooverbeta$. This bound can be achieved by quickly moving the vertical partition to the middle of the box, and then slowly expanding it rightward.  Now imagine an alternative scenario, in which the particle is initially located in the \emph{top half} of the box.  
By \cref{eq:workbound}, the work that can be extracted by bringing this initial distribution to a uniform final distribution is again upper bounded by $\lntwooverbeta$. Intuitively, however, it seems that this bound should not be achievable, given the constrained set of available protocols (i.e., one can only manipulate the system by moving the  vertical partition left and right).  %
Our results will make this intuition rigorous for the two-dimensional Szilard box, as well as various other systems that can only be manipulated by a constrained set of driving protocols.

This phenomenon also occurs when the starting and ending distributions can depend on %
the outcome of a measurement of the system. This kind of setup, which was first used to analyze the thermodynamics of information in various kinds of Maxwellian demons, is sometimes called ``feedback control'' in the literature~\cite{sagawa2008second,parrondo2015thermodynamics}. %
Imagine that the state of some system $X$ is first measured
using some %
observation channel (conditional distribution) $q(m\vert x)$, producing measurement outcome $m$ with probability $p(m)=\sum_x p(x)q(m\vert x)$. The system then %
 undergoes a driving protocol which can depend on $m$. 
For simplicity, we assume that the system's energy function begins as $\Hs$ and ends as $\Hf$ for all measurement outcomes. 
Let $\pconds$ and $\pcondf$ indicate the system's initial and final conditional distributions given measurement outcome $m$, %
and let $\ps(x)=\sum_m p(m)\pconds(x\vert m)$ and $\pf(x')=\sum_m p(m)\pcondf(x'\vert m)$ indicate the system's initial and final marginal distributions (for simplicity, below we often use notation like $\ps$, instead of $\ps(x)$). 
We can then take  expectations of both sides of  \cref{eq:workbound} across measurement outcomes, thereby bounding the average 
extractable work  as~\footnote{As common in the literature, in \cref{eq:workbound-2a} we consider only the work that is extractable from the system after the measurement is made. We do not account for the possible work cost of making the measurement, nor any work exchanges that may be incurred by the measurement apparatus during the driving.} %
\begin{align}
\left\langle W\right\rangle  & \le\sum_m p(m) [\NFEs[\pconds]-\NFEf[\pcondf] ].\label{eq:workbound-2a}
\end{align}
By adding and subtracting $[S(p)-S(p')]/\beta$ on the right hand side, %
we can further rewrite \cref{eq:workbound-2a} in terms of the drop of the free energy in the marginal distribution, plus the loss of information between the measurement and the system over the course of the protocol,
\begin{align}
\left\langle W\right\rangle  \le\NFEs - \NFEf+[\IXMs-\IXMf]/\beta,\label{eq:workbound-2}
\end{align}
where $\IXMs$ and $\IXMf$ indicate the mutual information under the conditional distributions $\pconds$ and $\pcondf$ respectively.  
Comparing \cref{eq:workbound} and \cref{eq:workbound-2},  the bound on average extractable work increases with the drop of mutual
information. This is a classic result from the ``thermodynamics of information''~\cite{sagawa2008second,parrondo2015thermodynamics}, which shows that information about the state of a system  can be used to increase the work extracted from this system. 

Just like \cref{eq:workbound}, the bound in \cref{eq:workbound-2}
is typically saturated by idealized protocols, which have access
to arbitrary energy functions, infinite timescales, etc. As mentioned above, in the real-world there are
typically constraints on the available protocols, in which case the bound of \cref{eq:workbound-2} may not be achievable. 
For example, consider again  the Szilard box shown in   
 \cref{fig:szilard0}.   
Imagine measuring a bit of information about the location
of the particle and then using this information to extract work from the system while
driving it back to a uniform equilibrium distribution. In this case 
 $\IXMs=\ln2$ and  %
$\IXMf=0$, so if the system starts and ends with the uniform energy function,  \cref{eq:workbound-2} states that %
$\langle W \rangle \le \lntwooverbeta$. 
Intuitively, however, it seems that measuring the particle's horizontal position
should be useful for extracting work from the system, while measuring
the particle's vertical position should not be useful. The general
bound of \cref{eq:workbound-2} does not distinguish between
these two kinds of measurements.  In fact, this bound depends
only on the overall \emph{amount}
of information acquired by the measurement (as quantified by $\IXMs$), and is therefore completely insensitive to the \emph{content} of that information (i.e., the particular pattern of correlations quantified by $\IXMs$).

\subsection{Summary of results and roadmap}

In this paper we derive bounds on extractable work and
EP %
which arise when carrying out the transformation $\ptpp$ under constraints on the driving protocol. 
We consider a system coupled to a single heat bath which undergoes a driving protocol over some time interval $t\in[0,\ft]$ (where the units of time are arbitrary). A driving protocol is represented as a continuous-time master equation $\LL(t)$, where $\LL(t)$ refers to the \emph{(infinitesimal) generator} at time $t$.    
For example, a driving protocol could be a trajectory of time-dependent discrete-state rate matrices, or a trajectory of time-dependent Fokker-Planck operators for a continuous-state system.  %

We say that a driving protocol is \emph{constrained} if there is some restricted set of generators $\LLL$ such that  $L(t)\in \LLL$ at all times $t\in[0,\ft]$. 
As discussed below, the particular choice of $\LLL$ depends on 
the specific constraints being considered. For example, $\LLL$ might represent a set of generators that are invariant under some particular symmetry group (e.g., representing the dynamics of a set of indistinguishable particles, or a spin system on a lattice with symmetries).

Our analysis proceeds at three different ``levels'' of generality, which we summarize in the following subsections.

\subsubsection*{Level 1: General mathematical framework}

In the first level of analysis, presented in \cref{sec:info-geom-framework,sec:thermo-of-info}, we provide a general mathematical framework for deriving bounds on EP and work for constrained driving protocols. %

To develop our framework, given some some set of allowed generators $\LLL$, we consider an associated operator %
operator $\klopBase$ over distributions which 
 satisfies two conditions: 
it obeys the so-called \emph{Pythagorean identity} from information geometry, and it commutes with the  dynamics generated by elements of $\LLL$ (\cref{eq:pyth,eq:comm0} below). 
Given such an operator $\klopBase$, in \cref{sec:info-geom-framework} we show  that  
for any distribution $p$,
the distribution $\klop p$ contains only that part of the free energy in $p$ which may be turned into work by a constrained driving protocol. %
Formally, we decompose the nonequilibrium free energy of distribution $p$ and energy function $E$ as
\begin{align}
\NFEs=\NFEs[\klop p]+ D(p\Vert\klop p)/\beta,\label{eq:feffdecomp0}
\end{align}
where $D(\cdot \Vert \cdot)$ indicates the Kullback-Leibler divergence.  
Then, for any constrained driving protocol that carries out the transformation $\ptpp$, the extractable work is bounded as 
\begin{align}
\Wp \le 
\NFEs[\klop p]-\NFEf[\klop \pf].\label{eq:ourworkbound0}
\end{align}
We also demonstrate that EP can be lower bounded by %
the contraction of the Kullback-Leibler (KL) divergence between $p$ and $\klop p$ over the course of the protocol,
\begin{equation}
\EPp \ge D(\ps \Vert\klop \ps)-D(\pf\Vert\klop{\pf}).\label{eq:epfreebound0}
\end{equation}

Given these bounds, it can be seen that 
\cref{eq:feffdecomp0} decomposes the nonequilibrium free energy $\NFEs$ into two terms: an \emph{accessible free energy} $\NFEs[\klop p]$, %
 whose decrease over the course of the protocol may be extractable as work, and  an \emph{inaccessible free energy} $D(p\Vert\klop p)/\beta$, %
whose decrease over the course of the protocol    %
cannot be turned into work and must be dissipated as EP.   
The accessible free energy is always less than the overall free energy, $\NFEs[\klop \ps]\le \NFEs$, which follows from \cref{eq:feffdecomp0} and the non-negativity of KL divergence. We also show
that the 
right hand side of \cref{eq:epfreebound0} is non-negative,%
\begin{equation}
D(p\Vert\klop p)-D(p'\Vert\klop{\pf}) \ge 0 ,
\label{eq:d1}
\end{equation}
which implies that our bounds on EP and work,  \cref{eq:epfreebound0,eq:ourworkbound0} respectively, are stronger than the general bounds provided by the second law ($\EP\ge 0$ and \cref{eq:workbound}). 
Note that
\cref{eq:d1} also implies %
an irreversibility condition %
on the dynamics: for any two distributions $p$ and $p'$,
a constrained driving protocol can either carry out the transformation $\ptpp$ %
or the transformation $\pptp$ %
but not both --- unless $D(p\Vert\klop p)=D(p'\Vert\klop{p'})$. 
In \cref{sec:thermo-of-info}, we show that the general framework summarized above  has %
important implications for thermodynamics of information. We consider the type of feedback-control setup discussed above: an observation apparatus first makes a measurement $m$ of the system, then the system undergoes a driving protocol (which can depend on $m$) that carries out the transformation $\dTrans{\pconds}{\pcondf}$. %
Suppose that the driving protocols corresponding to all $m$ obey bounds like \cref{eq:ourworkbound0} for the same operator $\klopBase$.  %
This operator then gives rise to %
the ``mapped'' initial and final conditional distributions $\effCPs$ and $\effCPf$. 
We can then 
bound average extractable work for feedback control under constraints as
\begin{align*}
\left\langle W\right\rangle  \le\NFEs - \NFEf+[\IaccS-\IaccF]/\beta,%
\end{align*}
where the \emph{accessible information} component of the initial mutual information $I(X;M)$ is defined as 
\begin{align}
\IaccS=\IXMs - \IinaccS,
\label{eq:introAccInfo}
\end{align}
and similarly for similarly for $\IaccF$. %
This bound is a refinement of \cref{eq:workbound-2} in the presence of protocol constraints, which shows 
that the amount of extractable work depends on the accessible information $\IaccS$, rather than the actual mutual information $\IXMs$.  
Loosely speaking, the accessible information %
reflects the ``alignment'' between the choice of measured observable and the way the system can be manipulated, given some protocol constraints.
This means that, in the presence of constraints, the thermodynamic value of information 
depends not only on the {{amount}} of measured information, but also the {{content}} of that information~\citep{corning1998thermodynamics,kolchinsky2018semantic}. (See also \cite{kauffman2000investigations} for a popular discussion of some related issues.)

It is important to note that at this general level of analysis, we do not describe how to construct the  operator $\klopBase$, as this construction will typically depend on the structure of the set $\LLL$. However, as described in the following subsection, we do provide explicit expressions for $\klopBase$ for three broad classes of protocol  constraints, which we term  symmetry, modularity, and coarse-grained constraints.

\subsubsection*{Level 2: Symmetry, modularity, and coarse-grained constraints}

At the second level of our analysis, 
we apply the general framework described above to derive bounds on EP and work for three broad classes of protocol constraints:

\begin{itemize}[wide,labelindent=0pt,labelwidth=!]

    \item 
    \cref{sec:Symmetry-constraints} considers \emph{symmetry constraints},  when the available generators possess some symmetry group. %
    Examples of systems with symmetry constraints include the Szilard box in \cref{fig:szilard0}, spin systems on lattices, and 
    gases of indistinguishable particles. %
    The  operator $\klopBase$ corresponding to symmetry constraints, defined in \cref{eq:symmOop}, maps distributions to their ``symmetrized'' versions (which are invariant under the action of the symmetry group).

 \item 
    \cref{sec:modularity} considers \emph{modularity constraints},  when the available generators cause different (though possibly overlapping) subsystems of a multivariate system to evolve independently of each other.  Examples of systems with modularity constraints include digital circuits~\cite{wolpert2020thermodynamic}, ideal gases, and multi-particle Maxwellian demons. The  operator $\klopBase$ corresponding to modularity constraints, defined in \cref{eq:projMbasedef}, maps distributions to their ``uncorrelated'' versions, without statistical dependencies between independent subsystems.

 \item 
\cref{sec:cg} considers \emph{coarse-grained constraints}, when the available generators exhibit closed coarse-grained dynamics which obey some constraints (e.g., coarse-grained symmetry or modularity constraints). %
An example is provided by the Szilard box in \cref{fig:szilard0}:  the particle's vertical position (the coarse-grained macrostate) evolves in a way that does not depend on the horizontal position, and the macrostate equilibrium distribution cannot be controlled by moving the partition. Given a protocol that obeys coarse-grained constraints, we show that the EP can be lower bounded in terms of a ``coarse-grained EP'', \cref{eq:macroNaineq,eq:coarsegrainedEP,eq:cgEP2}, and that this coarse-grained EP can itself be lower bounded by a coarse-grained version of \cref{eq:epfreebound0}.

\end{itemize}

In addition, we also discuss how tighter bounds on work and EP can be derived by combining different kinds of constraints (e.g., when a system obeys  two different symmetry groups, or when it obeys both symmetry \emph{and} modularity constraints).

\subsubsection*{Level 3: Concrete examples}

At the third (and most concrete) level, we illustrate our results for symmetry, modularity, and coarse-grained constraints on  several example systems:

\begin{itemize}[wide,labelindent=0pt,labelwidth=!]

    \item 
    In \cref{subsec:Szilard-box-example}, we use symmetry constraints to derive thermodynamic bounds for the Szilard box in \cref{fig:szilard0}, which possesses vertical reflection symmetry.  

    \item In \cref{sec:ising}, we use symmetry constraints to derive thermodynamic bounds for the  Ising model on a 2D lattice, which possesses translational symmetry.

 \item In \cref{subsec:Szilard-box-example-mod}, we use modularity constraints to derive thermodynamic bounds for the Szilard box in \cref{fig:szilard0}, which are different from the bounds derived in \cref{subsec:Szilard-box-example}. We also demonstrate that stronger results can be derived by combining bounds arising from symmetry and modularity constraints.

 \item In \cref{sec:generalizedszilard,sec:ratchetexample}, we use modularity constraints to derive bounds on work extraction for two multi-particle feedback-control protocols that have been proposed in the literature: a multi-particle Szilard box~\cite{song2021optimal} and a collective flashing ratchet~\cite{cao2004feedback}.

 \item In \cref{subsec:Szilard-box-example-cg}, we use coarse-grained constraints to derive thermodynamic bounds for a version of the Szilard box in \cref{fig:szilard0} in the presence of gravity. We also demonstrate that stronger results can be derived by combining bounds arising from coarse-grained and modularity constraints.

\end{itemize}

\subsubsection*{Literature review and discussion}

After presenting the results summarized above, in \cref{sec:priorwork} we discuss related prior literature. We also compare and contrast our results, such as the decomposition of nonequilibrium free energy in \cref{eq:feffdecomp0}, to some relevant work in quantum thermodynamics~\cite{janzing_quantum_2006,vaccaro_tradeoff_2008}. 
We conclude with a brief discussion in \cref{sec:conclusion}, which also touches upon how our approach generalizes beyond the assumption of a single heat bath. Proofs and derivations are in the appendices.

\section{Preliminaries}

\label{sec:physicalsetup}
\newcommand{\dims}{n}
We consider a physical system with state space $\sX$, which can be either discrete or continuous ($X = \mathbb{R}^{\dims}$). The term ``probability distribution''
 will refer to a probability mass function over $X$ in the discrete case and to a probability density function over $X$ in the continuous case. We  interchangeably use notation like $p(x)$ and $p_x$ (as will be clear from context) to indicate the probability of state $x$. 
We use $\Dset$ to refer to the set of all probability distributions over $X$. 

The system 
 evolves in a stochastic manner during a driving protocol over time $t\in[0,\ft]$. 
We will write $p(t)$ to indicate the distribution at time $t$  corresponding to some initial distribution $p(0)=p$, and $p(\ft)=\pf$ to indicate the distribution at the end of the protocol. 
For a discrete-state system, %
the distribution at time $t$ evolves according to the time-dependent master equation,
\begin{align}
\ppt p_x(t)=\sum_{x'}\left[\Lji(t)p_{x'}(t)-\Lij(t)p_x(t)\right],
\label{eq:ll}
\end{align}
where $\Lij(t)$ is the transition rate from state $x$ to state $x'$.  
We assume that the system is coupled to a heat bath at inverse temperature $\beta$, and so each $L(t)$ obeys local detailed balance (see \cref{sec:conclusion} for a generalization of this assumption).  
Formally, this means that
$\pi^{L(t)}_{x'}\Lji(t) = \pi^{L(t)}_{x} \Lij(t)$ for all $x$,$x'$, and $t$, where $\pi^{L(t)}$ is the stationary distribution of rate matrix $L(t)$, which we assume is unique (though this latter assumption can be relaxed~\footnote{The assumption of unique stationary distributions can be relaxed as long as the operator \unexpanded{$\klopBase$} (as discussed in \cref{sec:info-geom-framework}) satisfies the following weak technical condition: for all \unexpanded{$p\in\Dset$} and each stationary distribution \unexpanded{$\pi$} of each \unexpanded{$\LL\in \LLL$}, \unexpanded{$D(p\Vert \klop{\pi})<\infty$} whenever \unexpanded{$D(p\Vert \pi)<\infty$}. Note that \unexpanded{$\klop{\pi}$} is also a stationary distribution of \unexpanded{$\LL$} by \cref{lem:stlem} in \cref{app:theoretical}, so this condition is automatically satisfied when the generators have unique stationary distributions (since in that case \unexpanded{$\pi=\klop{\pi}$}).  
Note also that if some \unexpanded{$\LL\in\LLL$} have multiple stationary distributions \unexpanded{$\pi$}, the corresponding EP rate in \cref{eq:naBound} can be equivalently defined using any \unexpanded{$\pi$} such that \unexpanded{$D(p\Vert \pi)<\infty$}.}).

The rate of entropy production (\emph{EP rate}) incurred at time $t$ can be written as (Eq.~33 in \cite{esposito2010three})%
\begin{align}
\EPr(p(t),L(t))=-\sum_{x} \ppt p_x(t) \ln \frac{ p_x(t)}{ \pi^{L(t)}_x}
\ge0,%
\label{eq:naBound}
\end{align}
where $\ppt p_x(t)$ is defined in \cref{eq:ll}. 
Note that the right side of \cref{eq:naBound} is sometimes called the ``nonadiabatic EP rate'' in stochastic thermodynamics, and it is equal to the overall EP rate for a system coupled to a single bath and obeying detailed balance~\cite{esposito2010three}. 
The total EP incurred  by a time-extended protocol over $t\in[0,\ft]$ that carries out the transformation $\ptpp$ %
is given by the integral of the EP rate,
\begin{align}
\EPp=\int_0^1 \EPr(p(t),L(t)) \,dt.
\label{eq:totalEP}
\end{align} 
The work extracted during a protocol can be calculated by using \cref{eq:totalEP,eq:EPw}, 
once the initial and final nonequilibrium free energies,  $\NFEs$ and $\NFEf$, are specified. 
To define these free energies, we assume that there is some fixed pair of energy functions,  $E$ and $E'$, %
which specify the Boltzmann equilibrium distributions of $L(0)$ and $L(\ft)$ respectively. %

For a continuous-state system evolving under a continuous master equation~\citep{van1992stochastic,risken1996fokker},  the sums in \cref{eq:ll,eq:naBound} should be replaced by integrals (see Eq.~31 in \cite{van2010three}). 
A prototypical example of a continuous master equation, which we will use below, is a  
Fokker-Planck equation~\citep{ermak1978brownian,van1992stochastic}, %
\begin{align}
\ppt p(x,t) =-\nabla\cdot(\uL(x,t) p(x,t) -\DL(x,t) \nabla p(x,t)),\label{eq:fp-4}
\end{align}
where $\uL $ and $\DL $ are drift and diffusion terms. 

We will often write dynamical equations like  \cref{eq:ll,eq:fp-4} using the notation $\ppt p(t) =\LL(t) p(t)$, where $\LL(t)$ is a bounded linear operator that is called the \emph{(infinitesimal) generator} of the dynamics at time $t$. 
Note that for a continuous-state system in phase space, it may be that the system is isolated from the bath for some $t\in[0,\ft]$, in which case $\ppt p(t) =\LL(t) p(t)$ should be understood in terms of the Liouville equation. (For example, if a system is first isolated and evolves in a Hamiltonian manner, and is then brought in contact with a bath at inverse temperature $\beta$ and allowed to equilibrate).%

\section{General  framework}
\label{sec:info-geom-framework}

We begin by presenting our general mathematical framework. The application of this framework to concrete situations is described in latter sections.

A driving protocol $\{L(t):t\in[0,\ft]\}$ is said to be \emph{constrained} if there is %
some restricted set of generators $\LLL$ such that $\LL(t)\in\LLL$ at all $t$.  For a given  set of allowed generators $\LLL$, %
we consider an associated 
operator  $\klopBase : \Dset \to \Dset$ which  %
satisfies two conditions. The first condition states that%
\begin{align}
D(p\Vert q)=D(p\Vert\klop p)+D(\klop p\Vert q) %
\label{eq:pyth}
\end{align} 
for all $p\in\Dset$ and $q\in\imgKLOP$ with $D(p\Vert q)<\infty$ (where $\imgKLOP=\{\klop p : p\in\Dset\}$ is the image of the operator $\klopBase$). %
\cref{eq:pyth} %
is sometimes called the \emph{Pythagorean identity of KL divergence} in information geometry~\cite{amari2016information}. Any $\klopBase$ that obeys \cref{eq:pyth}  can be written in terms of the following projection~\footnote{This is because 
 \unexpanded{$D(p\Vert q)\ge D(p\Vert \klop p)$} for any \unexpanded{$q\in \imgKLOP$}, which follows from \cref{eq:pyth} and the non-negativity of KL divergence.}
\begin{align}
\klop p = \argmin_{q \in \imgKLOP} D(p\Vert q),
\label{eq:klprojection}
\end{align}
which shows that $D(p\Vert\klop p)$ is the minimal information-theoretic distance from $p$ to the set of distributions $\imgKLOP$.

The second condition is that $\klopBase$ obeys the following \emph{commutativity relation} for all $\LL\in\LLL$:
\begin{align}
e^{\inftime L}\klop p = \klop {e^{\inftime L} p} \quad \forall \inftime \ge0,p\in\Dset. %
\label{eq:comm0}
\end{align}
In other words, given any initial distribution $p$, the same final distribution is reached regardless of whether $p$ first relaxes under $L$ for time $\inftime$ and then undergoes $\klopBase$, or instead
first undergoes $\klopBase$ and then relaxes under $L$ for time $\inftime$. 

Note that the Pythagorean identity in \cref{eq:pyth} concerns only the operator $\klopBase$, while the commutativity relation in \cref{eq:comm0} concerns the relationship between $\klopBase$ and the generators in $\LLL$ (and therefore all of the generators $L(t)$ in the driving protocol, since $L(t)\in\LLL$ at all $t$ by assumption).  
Beyond these two conditions, the operator $\klopBase$ can be arbitrary, and may be linear or nonlinear. 
In the following sections of this paper, will show how to choose $\klopBase$ for various types of constrained protocols.

Importantly, any $\klopBase$ that satisfies the two conditions above maps any distribution $p$ to a corresponding ``accessible'' distribution $\klop p$, which controls the amount of work that can be extracted from $p$ by a constrained driving protocol. 
To prove this, we first show that for any $\LL\in\LLL$ that obeys \cref{eq:comm0}, the equilibrium distribution $\pi^\LL$ satisfies (\cref{lem:stlem} in \cref{app:theoretical})
 \begin{align}
 \pi^\LL \in \imgKLOP.\label{eq:equilinimgKLOP}
 \end{align}
We also derive the following mathematical result, will be central to much of what follows: if $\klopBase$ obeys \cref{eq:pyth} and \cref{eq:comm0} for some generator $\LL$, %
then the EP rate incurred by any distribution $p$ under $\LL$ can be written as the sum of two non-negative terms: the EP rate  incurred by $\klop{p}$ under $\LL$, and the instantaneous contraction of the KL divergence between $p$ and $\klop{p}$. 

\begin{figure}
\begin{centering}
\includegraphics[width=1\columnwidth]{}
\par\end{centering}
\caption{Visual explanation of \cref{thm:klinst}: distribution $p$ freely relaxes under $\LL$ for time $\inftime$ (solid gray line). 
The EP incurred during this relaxation (contraction of purple lines) can be decomposed into the  contraction of the KL divergence between $p$ and $\klop p$ (contraction of green lines), plus the EP incurred during the free relaxation of $\klop p$ (contraction of the red lines). The free relaxation of $\klop p$ under $\LL$ is represented by the dotted gray line. %
\label{fig:thm1}}
\end{figure}

\begin{thm}
\label{thm:klinst}
If $\klopBase$ obeys \cref{eq:pyth} and \cref{eq:comm0} for some generator $L$, %
then  for all $p\in \Dset$,
\[
\EPr(p,L) = \EPr(\klop {p},L) -\ddt D(p(t)\Vert\klop{p(t)}),
\]
and $-\ddt D(p(t)\Vert\klop{p(t)})\ge0$, where $\ppt p(t) = \LL p$.
\end{thm}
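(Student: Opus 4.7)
The plan is to apply the Pythagorean identity \cref{eq:pyth} with $q=\piL$, differentiate in time, and identify each piece of the resulting decomposition using the commutativity relation \cref{eq:comm0}.

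First I would verify that the Pythagorean identity can be invoked with $q=\piL$, i.e., that $\piL\in\imgKLOP$. Substituting $p=\piL$ in \cref{eq:comm0} yields $e^{\tau L}\klop{\piL}=\klop{\piL}$ for all $\tau\ge 0$, so $\klop{\piL}$ is itself a stationary distribution of $L$. The weak technical assumption that $\klop q$ contains the support of $q$, combined with the maximal-support convention for $\piL$, forces $\klop{\piL}$ to have the same support as $\piL$; hence one may take the reference stationary distribution to lie inside $\imgKLOP$ (replacing $\piL$ by $\klop{\piL}$ if necessary, which is legitimate since \cref{eq:pyth} implies the idempotence $\klop{\klop p}=\klop p$). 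With $\piL\in\imgKLOP$ in hand, the Pythagorean identity reads
\[
D(p(t)\Vert\piL)=D(p(t)\Vert\klop{p(t)})+D(\klop{p(t)}\Vert\piL).
\]

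Next I would differentiate both sides in $t$ and negate. The left-hand side equals $\EPr(p,L)$ by definition. For the second term on the right, commutativity gives $\klop{p(t)}=\klop{e^{tL}p(0)}=e^{tL}\klop{p(0)}$, so the trajectory $t\mapsto\klop{p(t)}$ solves $\ppt q(t)=Lq(t)$ with initial condition $\klop{p(0)}$; consequently $-\ddt D(\klop{p(t)}\Vert\piL)=\EPr(\klop p,L)$, which gives the claimed decomposition. For the inequality $-\ddt D(p(t)\Vert\klop{p(t)})\ge 0$, observe that by the same commutativity argument both $p(t)$ and $\klop{p(t)}$ evolve under the common Markov semigroup generated by $L$; the standard monotonicity (data-processing) property of KL divergence under a shared stochastic evolution then implies that $D(p(t)\Vert\klop{p(t)})$ is non-increasing.

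The step I expect to require the most care is the first: making precise that $\piL$ may always be chosen inside $\imgKLOP$ when $L$ admits more than one maximally-supported stationary distribution, since this hinges on verifying that the EP rate is insensitive to replacing $\piL$ by $\klop{\piL}$ (and on the idempotence of $\klopBase$). The remaining calculus and the appeal to monotonicity of KL divergence are standard.
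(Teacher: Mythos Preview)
Your proposal is correct and follows essentially the same route as the paper's proof: the paper packages your first step as \cref{lem:stlem} (showing one may take $\piL\in\imgKLOP$), then applies the Pythagorean identity with $q=\piL$, uses \cref{eq:comm0} to identify $\klop{p(t)}$ with the $L$-evolution of $\klop{p}$, and invokes data-processing for the non-negativity (packaged as \cref{lem:contlem}). The only cosmetic difference is that the paper phrases the computation via discrete-time difference quotients and a $\tau\to 0$ limit, whereas you differentiate directly; the content is the same.
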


We sketch the proof of this  theorem %
in terms of a discrete-time relaxation over interval $\inftime$, as shown in \cref{fig:thm1} (see \cref{app:theoretical} for details). Consider some distribution $p$ that relaxes for time $\inftime$ under the generator $\LL$, %
thereby reaching the distribution $e^{\inftime \LL}p$ (solid gray line). The EP incurred by this relaxation is given by the contraction of KL divergence to the equilibrium distribution $\pi$, $\EP(\dTrans{p}{ e^{\inftime \LL} p}) = D(p\Vert \pi) - D(e^{\inftime \LL} p\Vert \pi)$ %
(contraction of purple lines)~\cite{esposito2010three,van2010three}. Given \cref{eq:equilinimgKLOP}, we can apply the Pythagorean identity, \cref{eq:pyth}, to both $D(p\Vert \pi) $ and $D(e^{\inftime \LL} p\Vert \pi)$, which lets us rewrite $\EP(\dTrans{p}{ e^{\inftime \LL} p})$ as the sum of two terms:   $D(p\Vert \klop p)-D(e^{\inftime \LL} p\Vert \klopBase(e^{\inftime \LL} p)$  (green lines) and  $D(\klop p\Vert \pi) - D(\klopBase(e^{\inftime \LL} p)\Vert \pi)$ (red lines).  Applying the commutativity relation, \cref{eq:comm0}, shows that the first term is non-negative by the data-processing inequality and that the second term is equal to $\EP(\dTrans{\klop p}{e^{\inftime \LL} \klop p})$, the EP incurred by letting $\klop p$ relax freely under $\LL$. 
The continuous-time statement found in \cref{thm:klinst} follows by taking the appropriate $\inftime \to 0$ limit, while noting that the EP rate,  \cref{eq:naBound}, can be rewritten in terms of the limit $\lim_{\inftime \to 0} \frac{1}{\inftime} [D(p\Vert \pi) - D(e^{\inftime \LL} p\Vert \pi)]$.

Now suppose that \cref{eq:comm0} holds, so that the assumptions of \cref{thm:klinst} are satisfied during the entire protocol. In that case, as we show in \cref{lem:commProp} in \cref{app:theoretical}, any constrained protocol that carries out the transformation $\ptpp$ must also 
transform the  initial distribution $\klop p$ to the final distribution $\klop \pf$. %
We can then, in essence,
integrate  \cref{thm:klinst} over time and derive  %
the following result about total EP. 
\begin{thm}
\label{thm:resInt}
If $\klopBase$ obeys \cref{eq:pyth} and \cref{eq:comm0} for all $\LL \in \LLL$, then for any constrained  protocol that transforms $\ptpp$, %
\begin{align*} 
&\EPp = \EP(\klopptpp) +\left[ D( \ps \Vert\klop \ps )-D(\pf \Vert\klop{\pf})\right]
\end{align*}
and $D(p\Vert\klop \ps)-D(p'\Vert\klop{\pf})\ge 0$.
\end{thm}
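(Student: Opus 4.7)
The plan is to lift the instantaneous decomposition in \cref{thm:klinst} to an integrated one along the whole protocol, and then invoke non-negativity of EP to get the inequality.

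First, I would note that for a constrained protocol $L(t)$ with $L(t)\in\LLL$ for all $t$, the hypotheses of \cref{thm:klinst} are satisfied at every instant. Applied pointwise, this gives
\begin{equation*}
\EPr(p(t),L(t)) = -\ddt D(p(t)\Vert \klop{p(t)}) + \EPr(\klop{p(t)},L(t)),
\end{equation*}
with both terms on the right non-negative. Integrating over $t\in[0,\ft]$, the left-hand side is by definition $\EPp$, while the first term on the right telescopes to $D(\ps\Vert\klop\ps)-D(\pf\Vert\klop\pf)$.

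The main technical step is to identify $\int_0^\ft \EPr(\klop{p(t)},L(t))\,dt$ with $\EP(\klopptpp)$. For this I would invoke \cref{lem:commProp} from \cref{app:theoretical}, whose content is exactly that the commutativity condition \cref{eq:comm0} forces the image trajectory $t\mapsto\klop{p(t)}$ to be the solution of $\ppt q(t)=L(t)q(t)$ with initial condition $q(0)=\klop\ps$. Once this is in hand, $\klop{p(t)}$ is the physical trajectory produced by the very same constrained protocol acting on $\klop\ps$, ending at $\klop\pf$, so its time-integrated EP rate is precisely $\EP(\klopptpp)$. Combining these pieces yields the claimed decomposition.

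For the inequality $D(\ps\Vert\klop\ps)-D(\pf\Vert\klop\pf)\ge 0$, there are two equivalent routes. The direct route simply integrates the pointwise inequality $-\ddt D(p(t)\Vert\klop{p(t)})\ge 0$ supplied by \cref{thm:klinst}. Alternatively, one can rearrange the decomposition to write
\begin{equation*}
D(\ps\Vert\klop\ps)-D(\pf\Vert\klop\pf) = \EPp - \EP(\klopptpp),
\end{equation*}
and observe that since the projected trajectory is itself a solution of the same protocol, \cref{thm:klinst} applied along it gives $\EPr(\klop p(t),L(t))\le \EPr(p(t),L(t))$ pointwise, whence $\EP(\klopptpp)\le\EPp$. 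I would present the first route since it is shorter and makes monotonicity of $D(p(t)\Vert\klop{p(t)})$ manifest.

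The only real obstacle is justifying the trajectory identity for $\klop{p(t)}$, which reduces to \cref{lem:commProp}; the rest is bookkeeping, with standard regularity of $p(t)$ and $\klop{p(t)}$ ensuring that the integration of the instantaneous identity is legitimate.
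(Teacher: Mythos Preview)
Your proposal is correct and follows essentially the same route as the paper: integrate the instantaneous decomposition from \cref{thm:klinst} over the protocol, evaluate the first term by the fundamental theorem of calculus, and invoke \cref{lem:commProp} to identify $\klop{p(t)}$ with the trajectory generated by the same protocol from initial condition $\klop{\ps}$, thereby recognizing the second integral as $\EP(\klopptpp)$. The inequality is obtained in the paper exactly as in your ``direct route'', by integrating the pointwise non-negativity of $-\ddt D(p(t)\Vert\klop{p(t)})$.
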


\begin{figure}[b]
\begin{centering}
\includegraphics[width=0.8\columnwidth]{}
\par\end{centering}
\caption{
Illustration of \cref{thm:resInt}. Given an appropriate operator $\klopBase$, 
$\EP(\ptpp)$  (the EP incurred during some desired transformation $\ptpp$; solid gray line) 
is equal to 
$\EP(\klopptpp)$ (the EP incurred by that protocol when transforming $\klopptpp$; dashed gray line) plus the contraction of the KL divergence $D(p\Vert \klop p)-D(p'\Vert \klop{p'})$ (contraction of green lines). This contraction of KL divergence is a non-negative lower bound on $\EP(\ptpp)$, as in \cref{eq:epfreebound}. %
 \label{fig:introres}}
\end{figure}
We use \cref{thm:resInt} to derive several useful bounds on EP and work. First, since $\EP(\klopptpp)\ge 0$ by the non-negativity of EP, the contraction of KL divergence between $p$ and $\klop p$ bounds the EP incurred by a constrained driving protocol that carries out the transformation $\ptpp$,
\begin{align}
\EPp \ge D( \ps \Vert\klop \ps )-D(\pf \Vert\klop{\pf}) \ge 0,
\label{eq:epfreebound}
\end{align}
which appeared as \cref{eq:epfreebound0} in the introduction. Furthermore, $D(p\Vert\klop \ps)-D(p'\Vert\klop{\pf})\ge 0$ immediately implies that 
\begin{align}
\EPp \ge \EP(\klopptpp).
\label{eq:bnd3}
\end{align}

We can also derive the decomposition of free energy and the bound on extractable work, which appeared as \cref{eq:feffdecomp0,eq:ourworkbound0} in the introduction. Consider some transformation $\ptpp$, and write the initial nonequilibrium free energy as %
\begin{equation}
\NFEs =\NFEs[\pi]+D(p\Vert\pi)/\beta,\label{eq:Fkl}
\end{equation}
where $\pi \propto\boltz{-\beta E}$ is the Boltzmann distribution for the initial energy function $E$, and  $\NFEs[\pi]$ is  the equilibrium free energy~\citep{esposito2011second}.  
Using \cref{eq:equilinimgKLOP} and  the  %
Pythagorean identity, \cref{eq:pyth}, we %
decompose the nonequilibrium free energy into a sum of the accessible free energy and the inaccessible free energy,
\begin{align}
\NFEs & =\NFEs[\pi]+[D(p\Vert\klop p)+D(\klop p\Vert\pi)]/\beta \nonumber \\
 & =\NFEs[\klop p] + D(p\Vert\klop p)/\beta. \label{eq:feffdecomp}
\end{align}
Using a similar derivation, %
we can write the nonequilibrium free energy at the end of the protocol as
\begin{align}
\NFEf =\NFEf[\klop \pf]+D(\pf\Vert\klop \pf)/\beta. \label{eq:fkl3}
\end{align}
Subtracting  \cref{eq:fkl3} from \cref{eq:feffdecomp} shows that the drop in the nonequilibrium free energy during $\ptpp$ is given by
\begin{multline}
\NFEs-\NFEf =   \NFEs[\klop \ps]-\NFEf[\klop \pf]  + \\ \left[ D( \ps \Vert\klop \ps )-D(\pf \Vert\klop{\pf})\right]/\beta .
 \label{eq:fedrop}
\end{multline}
Combining this result with \cref{thm:resInt} and \cref{eq:EPw}, and then rearranging,  shows that the work involved in carrying out  $\ptpp$ is equal to  the  work involved in carrying out the accessible transformation $\klopptpp$:
\begin{align}
W(\ptpp)=W(\klopptpp).
\label{eq:workeq}
\end{align}
Finally, by combining with \cref{eq:workbound}, we arrive at an upper bound on work that can be extracted by a constrained protocol:
\begin{align}
W(\ptpp)\le \NFEs[\klop \ps]-\NFEf[\klop \pf] ,
\label{eq:ourworkbound}
\end{align}
which is tighter than the bound given by the second law, \cref{eq:workbound}.

The bounds in \cref{eq:epfreebound,eq:ourworkbound}, as well as the decomposition of free energy in \cref{eq:feffdecomp}, are the main theoretical results arising from our general framework. 
\cref{fig:introres} provides a schematic way of understanding these results. 
\cref{thm:resInt} states that, for a constrained protocol that carries out the map $\ptpp$, 
the EP incurred during the system's actual trajectory (solid gray line) is given by the EP that would incurred by a ``projected trajectory'' that carries out the transformation $\klopptpp$ (dashed gray line), plus the drop in the KL divergence from the system's  distribution to the set $\imgKLOP$ over the course of the protocol (contraction of green lines).  
Since the EP of the projected trajectory must be non-negative, the drop in the distance from the system's  distribution to $\imgKLOP$ 
serves as a lower bound on EP, as in \cref{eq:epfreebound}.  In addition, \cref{thm:resInt} states that this decrease in the KL divergence must be positive, meaning that the system's distribution must get closer to $\imgKLOP$ over the course of the protocol.  

Following \cref{fig:introres}, it can be helpful to think of the trajectory $\ptpp$ as composed of three segment: (1) from $p$ down to $\klop p$, (2) from $\klop p$ to $\klop {\pf}$ while staying within $\imgKLOP$, and (3) from $\klop \pf$ up to $\pf$ (note that this decomposition is useful for accounting purposes, but does not generally reflect the actual trajectory the system takes in going from $p$ to $p'$).  
The first and third segments contribute (positively and negatively, respectively) only to EP, while the projected second segment $\klopptpp$ contributes both to EP and to work. 
 Thus, the %
work involved in $\ptpp$ is determined entirely by the work involved in the second segment, as stated in \cref{eq:workeq}. 

Note also the formal similarity between our decomposition of the drop in free energy, \cref{eq:fedrop}, 
and the decompositions of EP in \cref{thm:resInt}. 
Indeed, like \cref{thm:resInt}, the result \cref{eq:fedrop}  can be illustrated with \cref{fig:introres}: 
during the transformation $\ptpp$ (solid gray line), the drop in free energy is given by the drop in free energy incurred by the transformation $\klopptpp$ (dotted gray line), plus the contraction of the KL divergence from the system's distribution to the set $\imgKLOP$ (green lines).

In general, our bounds on EP and work will not always be achievable. Suppose, however, that the final distribution $p'$ is in equilibrium, so $\pf= \klop{\pf}$ by \cref{eq:equilinimgKLOP}. %
\cref{eq:epfreebound} then gives
\begin{align}
\EPp \ge D(p\Vert\klop p).
\label{eq:simplebound1}
\end{align}
This bound is achievable if the generators in $\LLL$ have a continuous curve of equilibrium distributions from $\klop{p}$ to $\pf=\klop{\pf}$.  %
Imagine a protocol in which %
the initial distribution $p$ first relaxes to the equilibrium distribution $\klop p$, and then undergoes quasistatic driving from $\klop p$ to $\klop \pf$ while remaining in equilibrium throughout 
(in terms of \cref{fig:introres}, the system first relaxes along the green arrow connecting $p$ to $\klop p$, then follows the dashed line to $\klop{p'}$ quasistatically).
The relaxation step  incurs $D(p\Vert\klop p)$ of EP, while the quasistatic step incurs a vanishing amount of  EP, so the bound in \cref{eq:simplebound1} will be achieved. %

\subsection{Choice of the $\klopBase$ operator}
\label{sec:choiceofklop}

In general, the operator $\klopBase$ associated with a given set of generators $\LLL$ is not unique. 
For instance, for any driving protocol, the %
identity map $\klop p = p$ always satisfies \cref{eq:pyth} and \cref{eq:comm0}. Choosing $\klopBase$ to be the identity map, however, reduces  the results in \cref{thm:resInt} to  trivial identities and 
the lower bound on EP in \cref{eq:epfreebound} to $0$. 

At a high level, those $\klopBase$ which have smaller $\imgKLOP$ will generally give tighter bounds on EP %
(since, given \cref{eq:klprojection}, a smaller image leads to larger values of $D(p\Vert \klop p)$). 
To illustrate this phenomenon, 
consider the extreme case where all $\LL \in \LLL$ have the same equilibrium distribution $\pi$, so that any constrained driving protocol must be a free relaxation toward $\pi$. 
Then, the operator $\klop p = \pi$ for all $p$ (so $\imgKLOP$ is a singleton)  
satisfies \cref{eq:comm0,eq:pyth} and, when plugged into
\cref{eq:epfreebound}, gives the following bound on EP:
\begin{align}
\EPp \ge D(\ps \Vert \pi) -D(\pf \Vert \pi).
\label{eq:e2}
\end{align}
In fact, the right hand side is an exact expression for the EP incurred by the free relaxation, %
meaning that 
it is the tightest possible bound. 
If, however, the generators $L\in\LLL$ have  different equilibrium distributions, %
then the operator $\klop p = \pi$ (for whatever $\pi$) generally violates %
the commutativity relation in \cref{eq:comm0}, and bounds like \cref{eq:e2} will no longer hold.

In the following sections, we show how to use our results to derive thermodynamic bounds for $\LLL$ that obey some kind of symmetry group, modular decomposition, or coarse-grained structure. %
In more general, possibly unstructured cases, %
it is an open question of whether a non-trivial operator $\klopBase$ exists, and if so how to identify it. 
We explore related issues in a companion paper~\cite{kolchinsky_constraints_paper2}, where we use numerical optimization techniques to derive bounds on EP similar to \cref{eq:epfreebound}.

Importantly, when there are  multiple different operators that  all satisfy the Pythagorean identity and the commutativity relation for the available generators $\LLL$, one can derive  tighter bounds on EP and work by applying our decompositions in an ``iterative'' manner. For instance, imagine that there are two different operators $\klopBase_1$ and $\klopBase_2$ that satisfy \cref{eq:pyth,eq:comm0} (for example, these might represent operators arising from symmetry constraints and modularity constraints, respectively, as described below). Applying \cref{thm:resInt} iteratively leads to ``stacked'' bounds on EP analogous \cref{eq:epfreebound},
\begin{multline}
\EP(\ptpp)  \ge \big[D(\ps \Vert \phi_1(\ps))+D(\phi_1(\ps) \Vert \phi_2(\phi_1(\ps))) \big] -\\
\big[D(\pf \Vert \phi_1(\pf)) + D(\phi_1(\pf) \Vert \phi_2(\phi_1(\pf)))\big]\ge 0.\label{eq:stackedbounds}
\end{multline}
Similarly, applying \cref{eq:workeq} iteratively leads to stacked bounds on extractable work analogous to \cref{eq:ourworkbound},
\begin{align}
W(\ptpp)\le \NFEs[\phi_2(\phi_1 (\ps))]-\NFEf[\phi_2(\phi_1( \pf ))] .
\label{eq:ourworkboundstacked}
\end{align}
Such stacked bounds  are generally tighter than the bounds provided by either $\phi_1$ or $\phi_2$ alone. %
(Note that one can also reverse the order of operations,  and consider the composition  $\phi_1(\phi_2( \ps ))$ rather than $\phi_2(\phi_1( \ps ))$ in \cref{eq:stackedbounds,eq:ourworkboundstacked}, which will in general lead to different bounds.)

\subsection{Fluctuating entropy production}

As we show in detail in \cref{app:fluct}, 
our results also have implications for stochastic fluctuations of trajectory-level EP, as considered in stochastic thermodynamics~\cite{seifert2012stochastic}.  

Consider any constrained driving protocol over $t\in[0,\ft]$ with an associated operator $\klopBase$. 
Let $\traj$ indicate some stochastically sampled trajectory of the system visited during the driving protocol, 
and let 
$\sigma_{p}(\traj)$ indicate the fluctuating  EP incurred by trajectory $\traj$ when initial states are sampled from the initial distribution $p$.  In the appendix, we consider the difference between this fluctuating EP and the fluctuating EP incurred by the same trajectory when initial states are sampled from the accessible initial distribution $\klop p$,
\begin{align}
m_{p}(\traj):=\sigma_p(\traj)-\sigma_{\klop p}(\traj).
\label{eq:fluctmismatch}
\end{align}

By combining \cref{thm:resInt} with recent results in stochastic thermodynamics~\cite{kolchinsky2021state,kwon2019fluctuation}, we show that the expectation of $m_{p}(\traj)$ is equal to the difference of expected EPs, $\langle m_{p}(\traj)\rangle = \EP(\ptpp)-\EP(\klopptpp)$, where $\langle \cdot \rangle$ indicates expectation over trajectories sampled from initial distribution $p$. We also show that $m_{p}(\traj)$ obeys a detailed fluctuation theorem, which  
implies a trajectory-level version of \cref{eq:bnd3}: the probability that the fluctuating EP  under initial distribution $p$ is $\xi$ less than the fluctuation EP under the accessible initial distribution $\klop p$  %
is exponentially small (i.e., it is less than $e^{-\xi}$). We leave further exploration of the connection between our framework and stochastic thermodynamics for future work.

\section{Thermodynamics of information under protocol constraints}
\label{sec:thermo-of-info}

\newcommand{\Lm}{\LL^{(m)}}
The framework introduced in the previous section has implications for the thermodynamics of information under constraints. Consider the type of %
feedback control setup described in the introduction: first an observation apparatus $M$ measures some system observable, then the system undergoes a driving protocol that depends on the measurement outcome $m$. Let $\Lm(t)$ indicate the driving protocol conditioned on  $m$, and $\pconds$ and $\pcondf$ indicate the distributions over system states at the beginning and end of the corresponding driving protocol. 
As standard in the literature~\cite{parrondo2015thermodynamics}, for simplicity we  assume that all protocols start and end with the same energy functions, $E$ and $E'$, and that during the protocols, the measurement apparatus $M$ and the system $X$ are energetically decoupled and that $M$ does not change state.

Given the above assumptions, it is straightforward to show that the EP incurred by the joint ``supersystem'' $X\times M$ obeys 
\begin{align}
\EP_{XM} = \sum_m p(m) \EP_m,
\label{eq:fcepr}
\end{align}
where 
$\EP_m$ is the EP incurred by protocol $\Lm(t)$ in carrying out the transformation $\dTrans{\pconds}{\pcondf}$. 
Similarly, by taking expectations of  \cref{eq:EPw} and rearranging (see derivation of \cref{eq:workbound-2}), the average extracted work under feedback control can be written as 
\begin{align}
\langle W \rangle =\Delta F+[\IXMs\!-\!\IXMf]-\!\sum_m  p(m) \EP_m,
\label{eq:fcwr}
\end{align}
where for notational convenience we've used $\Delta F=\NFEs - \NFEf$ to indicate the drop of marginal free energy.  Thus, any lower bounds on $\EP_m$ (the EP values incurred by the individual protocols $\Lm(t)$) can be translated into bounds on the overall EP and average extractable work for a feedback control setup.

For example, suppose that there is some single set of constraints that applies to all of the driving protocols, in that 
there is some set of  generators $\LLL$ such that $\Lm(t)\in\LLL$ for all $t$ and $m$, as well as an operator $\klopBase$ that obeys the Pythagorean identity, \cref{eq:pyth}, and the commutativity relation, \cref{eq:comm0}, for all $\LL\in\LLL$. In that case, the framework described in \cref{sec:info-geom-framework} leads to bounds on each $\EP_m$ term. In particular, using  \cref{eq:epfreebound,eq:fcepr} %
 gives the bound
\begin{multline}
\EP_{XM} \ge \\\IinaccS - \IinaccF \ge0,\label{eq:EPmeas}
\end{multline}
where we've defined the conditional KL divergence $\IinaccS=\sum_m p(m) D(\pconds \Vert\klop{\pconds})$, and similarly for $D(\PCONDF \Vert\effCPfM)$. 
Plugging into \cref{eq:fcwr} gives the 
following bound on average extractable work: %
\begin{align}
\left\langle W\right\rangle  \le \Delta F+[\IaccS-\IaccF]/\beta,\label{eq:workboundinfo}
\end{align}
where $\IaccS$ is given by
\begin{align}
\IaccS&=\IXMs - \IinaccS,
\label{eq:accinfodef}
\end{align}
and similarly for $\IaccF$. 

We refer to $\IaccS$ as the \emph{accessible information} in measurement $M$, since any decrease in accessible information can contribute to work extraction (\cref{eq:workboundinfo}).
We refer to the 
conditional KL divergence $\IinaccS$ %
as the \emph{inaccessible information}, since any decrease in inaccessible information must be dissipated as EP,  and not extracted as work (\cref{eq:EPmeas}). 
The inaccessible information is non-negative by properties of KL divergence, so $\IaccS\le \IXMs$. 
In addition, whenever $p\in\imgKLOP$ (e.g., when $p$ is an equilibrium distribution, by \cref{eq:equilinimgKLOP}), the accessible information can be rewritten in simpler form as%
\begin{align}
\IaccS = D(\klop{\PCONDS}\Vert p),
\label{eq:accinfodef2}
\end{align}
as follows from \cref{eq:accinfodef} by writing $\IXMs=D(\PCONDS\Vert p)$ and applying the Pythagorean theorem, \cref{eq:pyth}.

In general, measurements of different observables on the same system will give rise to different amounts of accessible and inaccessible information. At a high level, one should choose measurements that maximize the accessible information $\IaccS$, or alternatively the ``efficiency'' quantified as bits of accessible information per bit of measured information, $\IaccS/\IXMs \le 1$.  Optimal measurements satisfy  $\IaccS=\IXMs$, which happens when the conditional distributions over system states $\pconds$ are invariant under the action of $\klopBase$ (i.e., when $\klop{\pconds}=\pconds$ for each $m$).

Note that similar results can also be derived using other kinds of bounds on  $\EP_m$ (e.g., when the individual protocols obey a combination of constraints, so that \cref{eq:stackedbounds} holds). %

\section{Symmetry constraints}

\label{sec:Symmetry-constraints}

\global\long\def\Vp{V_{\mathrm{p}}}%
\global\long\def\Vw{V_{\mathrm{w}}}%
\global\long\def\G{\mathcal{G}}%
\global\long\def\g{g}%
\newcommandx\actg[1][usedefault, addprefix=\global, 1=g]{{#1}}%

\global\long\def\projGbase{\klopBase_{\G}}%
\global\long\def\projG#1{\projGbase(#1)}%
\newcommand{\actgInv}{\actg[g^{-1}]}

\newcommandx\transP[1][usedefault, addprefix=\global, 1=g]{\Phi_{#1}}%

We now use the general framework introduced above to derive bounds on EP under symmetry constraints. 
Consider a compact group $\G$ that has a measurable action over $\sX$, such that each $\actg\in \G$ is a bijection $\sX\to \sX$~\footnote{A compact group $\G$ has a measurable action over $X$ if the action $\G \times X \to X$ is a measurable function, where we assume $\G$ and $X$ are endowed with their respective Borel algebras.}. 
For continuous $X$, we assume that each  $g\in\G$ is a rigid transformation.  For notational convenience,
for each $g\in\G$ we define the composition operator $\transP$, so that for any function $f:\sX \to\mathbb{R}$,
\begin{align}
\label{eq:transPdef}
\transP(f)(x)=f(g(x)).
\end{align}  

We say that a set of generators $\LLL$ obeys \emph{symmetry constraints} (with respect to the action of group $\G$) if the following commutativity relation holds for all $L\in\LLL$: %
\begin{align}
\transP L = L \transP.\qquad\forall \actg\in\G.
\label{eq:commsymmZ}
\end{align}
In other words, $\LLL$ obey symmetry constraints  when, for each $\LL\in\LLL$ and $\actg \in \G$, it does not matter whether one first applies the generator $\LL$ and then the bijection $\actg$ over the state space, or first applies the bijection $\actg$ over the state space and then the generator $\LL$. In more concrete terms, 
for a (continuous or discrete) master equation $\LL$, 
\cref{eq:commsymmZ}  holds if the transition rates are invariant under the action of $\G$:
\begin{align}
\Lji=\sTrans{\actg(x')}{\actg(x)} \qquad \forall x,x'\in \sX, g\in\G.
\label{eq:symmME}
\end{align}

We can also derive simple sufficient conditions %
for potential-driven  Fokker-Planck equations of the type
\begin{align}
Lp=\nabla\cdot(\nabla E_{\LL}) p+\dL\Delta p,\label{eq:fp10}
\end{align}
where $E_{\LL}$ is the energy function of generator $\LL$.
Then,  \cref{eq:commsymmZ} %
holds if all available  
energy functions are invariant under the action of $\G$, 
\begin{align}
E_\LL(x)=E_\LL(\actg(x))\quad\forall x\in X,g\in\G,\LL\in\LLL\,.
\label{eq:symmFP}
\end{align}
(\cref{eq:commsymmZ} is derived from \cref{eq:symmME,eq:symmFP} in  \cref{app:symm}).

We now define a linear operator $\projGbase$ which satisfies the Pythagorean identity and the commutativity relation, \cref{eq:pyth,eq:comm0}, for symmetry constraints. 
Let $\projGbase$ map each $p \in\Dset$
to its average under the action of $\G$,
\begin{align}
\projG p(x) :=\int_{\G} p(g(x)) \,d\mu(\g),
\label{eq:symmOop}
\end{align}
where $\mu$ is the uniform (normalized Haar) measure over $\G$~\footnote{
Technically, the definition of the twirling operator in \cref{eq:symmOop} applies only when $p$ is a finite-valued probability density function (which excludes things such as the Dirac delta ``function''). A more general formulation of our results can be developed in terms of probability measures rather than probability densities (see Ch.~3 in \cite{eaton_group_1989} for a version of \cref{eq:symmOop} defined in terms of probability measures).}. %
For a finite group, the integral in \cref{eq:symmOop} should be replaced by a summation. 
Following the terminology in quantum physics, we sometimes refer to 
$\projGbase$ as a \emph{twirling operator}~\cite{vollbrecht2001entanglement,vaccaro_tradeoff_2008}.
Intuitively, $\projG p$ symmetrizes $p$, removing all information in $p$ concerning the state of the system
along the ``coordinates'' specified by the symmetry constraints.

In \cref{app:symm}, we show that $\projGbase$ %
obeys the Pythagorean identity and, as long as \cref{eq:commsymmZ} holds, the commutativity relation of \cref{eq:comm0}.
Thus, 
any protocol that carries out the transformation $\ptpp$ while obeying symmetry constraints with respect to  $\G$ %
permits the decomposition of EP found in \cref{thm:resInt}, with $\klopBase = \projGbase$, and satisfies all the bounds on work and EP that follow from that result. %

In particular, using  \cref{eq:feffdecomp}, we can decompose the free energy $\NFEs$ of any distribution $p$ into the accessible free energy  $\NFEs[\projG p]$, which is  the free energy in the twirled (and therefore symmetric) version of $p$, and the  inaccessible free energy $D(p\Vert\projG p)/\beta$.
Note that $D(p\Vert\projG p)$ is a non-negative
measure of the asymmetry in distribution $p$ with respect to the symmetry group $\G$,
which vanishes when $p$ is invariant under $\projGbase$.
Thus, for any protocol that obeys symmetry constraints, the first inequality in \cref{eq:epfreebound} states that any ``drop in asymmetry'' must be dissipated as EP, and not turned into work. %
The second inequality in \cref{eq:epfreebound} states that the asymmetry in the system's distribution can only decrease during  the  protocol. %
(Some of the above results for symmetry constraints have been previously uncovered in quantum thermodynamics~\cite{vaccaro_tradeoff_2008,janzing_quantum_2006}; see \cref{sec:priorwork}.)

We finish by discussing thermodynamics of information under symmetry constraints. In general, the results  derived in \cref{sec:thermo-of-info} apply to the twirling operator $\projGbase$ as a special case.  
We can also exploit special properties of $\projGbase$ to further simplify the expression of the inaccessible information term in \cref{eq:accinfodef,eq:EPmeas}. %
 Suppose that 
distribution $\ps$ is invariant under $\projGbase$, so  $\ps=\projG{\ps}$ (e.g., if $p$ is an equilibrium distribution). 
As shown in \cref{app:symmthermoinfo}, we can then rewrite the inaccessible information term as %
\begin{align}
D(\PCONDS\Vert\projG{\PCONDS}) =
\Bigg\langle\! \ln \frac{q(m\vert x)}{\int_{\G} q(m\vert \g(x)) d\mu(\g)}\!\Bigg\rangle,
\label{eq:decomp4b}
\end{align}
where $q(m\vert x)$ is the measurement channel and  $\langle \cdot \rangle$ is indicates expectation under the joint distribution $p(x,m)=p(x)q(x|m)$. 
\cref{eq:decomp4b} conveniently expresses the inaccessible information in terms of the asymmetry of the measurement channel relative to the action of $\G$ (the right side of \cref{eq:decomp4b} vanishes when $q(m\vert x)$ is invariant under that action), which we will exploit  
in some of our examples below.

\subsection{Example: Szilard box with symmetry constraints}

\label{subsec:Szilard-box-example}

\begin{figure}
\begin{centering}
\includegraphics[height=0.4\columnwidth]{}
\par\end{centering}
\caption{%
A Szilard box with energy functions as in \cref{eq:ham0}.
\label{fig:szilard-ham}}
\end{figure}
We demonstrate our results on symmetry constraints %
using the Szilard
box shown in \cref{fig:szilard0}. 
We assume that the box is coupled to a single
heat bath at inverse temperature $\beta=1$, and %
that the particle inside the box has overdamped Fokker-Planck dynamics, so that 
all generators have the form of \cref{eq:fp10}. The system's
state is represented by a horizontal and a vertical coordinate, $x=(x_{1},x_{2})\in\mathbb{R}^{2}$.

Suppose that all energy functions have the form %
\begin{equation}
E_\paramA(x_{1},x_{2})=\Vp(x_{1}-\paramA)+\Vw(|x_{1}|)+\Vw(|x_{2}|),\label{eq:ham0}
\end{equation}
where $\paramA\in\mathbb{R}$ is a controllable parameter that determines the location of the vertical partition,
$\Vp$ is the partition's repulsion potential, and $\Vw$ is the repulsion
potential of the box walls:
\begin{equation}
\Vw(a)=\begin{cases}
0 & \text{if }a\le \boxRight \\
\infty & \text{otherwise}
\end{cases}\label{eq:potential1}
\end{equation}
meaning that the box extends over $(x_{1},x_2)\in\boxExtent^2$~\footnote{Technically, the wall potential as defined in \cref{eq:potential1} is non-differentiable. To be more accurate, one should imagine it in terms of the limit \unexpanded{$\Vw(|x|)=\lim_{\alpha \to \infty} |x|^\alpha$}~\cite{dhar2019run}.}. 
Assume that $\Vp(x-\paramA)=0$ for some value of $\paramA$ (i.e., the partition can be removed by setting $\paramA$ outside
the box). For such $\paramA$, %
let $\Hunif$ indicate the corresponding energy function, and note that it obeys $\Hunif(x_1,x_2)=0$ within the box (and infinity elsewhere),  
corresponding to a uniform equilibrium distribution
$\po(x_1,x_2)=\mathbf{1}_{[-1,1]^2}(x_1,x_2)/4$ (where $\mathbf{1}$ is the indicator function). %
This Szilard box is shown schematically in \cref{fig:szilard-ham}.

The energy functions in \cref{eq:ham0} obey the vertical reflection symmetry $E(x_{1},x_{2})=E(x_{1},-x_{2})$, corresponding to the %
two-element symmetric group $S_2$ whose action is generated by $g(x_{1},x_{2})=(x_{1},-x_{2})$. 
The corresponding twirling of $p$ is the uniform mixture of $p$ and its reflection,
\begin{align}
\projG{p}(x_1,x_2)=(p(x_1,x_2)+p(x_1,-x_2))/2.
\label{eq:symmPhiDef}
\end{align} 

We can use our results to %
derive bounds on the work that can be extracted from this Szilard box. Intuitively, the set of allowed generators $\LL$ --- that is,  Fokker-Planck operators with energy functions as in \cref{eq:ham0}, corresponding to different horizontal locations 
of the vertical partition --- all obey vertical reflection symmetry. Thus, the  dynamics generated by those Fokker-Planck operators commute with $\projGbase$, the twirling operator defined in \cref{eq:symmPhiDef}. %
Using  \cref{eq:ourworkbound}, we can  bound the work extracted during any transformation $\ptpp$ in terms of the decrease of the accessible free energy, $\NFEs[\projG \ps]-\NFEf[\projG \pf]$.

In more detail, consider some driving protocol which starts and ends with the partition
removed. At intermediate times, the driving protocol manipulates the location of the partition so as to bring the system from some initial distribution $p$ to a final equilibrium distribution $p'=\po$ while extracting work. 
The second law gives bounds on EP, $\EPp\ge 0$, and work: %
\begin{align}
W(\dTrans{p}{\po}) & \le\NFE[p][\Hunif]-\NFE[\po][\Hunif]=D(p\Vert \po),\label{eq:w1}
\end{align}
which follows from  \cref{eq:workbound,eq:Fkl}.  
However, this bound can be too optimistic due to the protocol constraints. 
Given \cref{eq:epfreebound}, as well as the fact that the final distribution obeys $\projG \po = \po$, 
we know that $\EPp \ge D(p\Vert\projG{p})$. Similarly, \cref{eq:ourworkbound} gives a  tighter bound on extractable work %
\begin{align}
W(\dTrans{p}{\po}) \le\NFE[\projG \ps][\Hunif]-\NFE[\po][\Hunif]= D(\projG{p}\Vert \po),\label{eq:symmwork1}
\end{align}
where the second equality follows from \cref{eq:Fkl}.

\begin{figure}
\begin{centering}
\includegraphics[width=0.75\columnwidth]{}
\par\end{centering}
\caption{%
$(a)$ A non-equilibrium distribution $\ptheta$ that is ``rotated'' by an arbitrary angle $\theta$, \cref{eq:degMeas}.
$(b)$ The distribution in (a) under the action of the vertical reflection twirling operator, $\projG{\ptheta}$. 
\label{fig:szilard-symm-angle}}
\end{figure}

It is easy to use these results to resolve the question raised in the introduction: can one show that work can only be extracted from a measurement of whether the particle is in the left or right half of the box, rather than a measurement of whether the particle is in the top or bottom half of the box?
Suppose that the particle's initial
distribution $p$ is uniform across the {left or right half} of the box. Such a 
 distribution $p$ is invariant under vertical reflection, so $p=\projG{p}$  and  \cref{eq:symmwork1} gives $W(\dTrans{p}{\po}) \le D(p\Vert \po) = \ln 2$, 
the same as the bound set by the second  law, \cref{eq:w1}. %
This bound  can be achieved by quickly moving the partition to the middle
of the box, and then slowly moving it rightward. 
Conversely, suppose that under the initial distribution $p$,
the particle is uniformly distributed across the {top or bottom half} of the box. The twirling of such a distribution is a uniform distribution over the box, $\projG{p}=\po$. In this case, \cref{eq:symmwork1} gives 
$W(\dTrans{p}{\po})\le 0$, 
meaning that no work can be extracted.

\newcommand\angleRange{(\frac{\pi}{4},\frac{3\pi}{4})}

We now demonstrate the power of our approach by analyzing extractable work given a more complex family of initial distributions (while using the same energy functions as above). Suppose that the  initial distribution is concentrated
within half the box, as determined by a separating line that is rotated by an arbitrary angle  %
$\theta \in [-\pi,\pi]$ (see \cref{fig:szilard-symm-angle}(a)). 
This initial distribution can be written formally as %
\begin{align}
\ptheta(x_1,x_2)&=\frac{\mathbf{1}_{[-1,1]^2}(x_1,x_2)}{2}\Theta(x_2 \sin \theta - x_1 \cos\theta),
\label{eq:degMeas}
\end{align}
where $\Theta$ is the Heaviside function.  
For instance, $\ptheta$ for $\theta=0$ corresponds to the particle being in the left half of the box, while $\ptheta$ for $\theta=\pi/2$  corresponds to the particle being in the top half of the box.

Because we are considering the same set of generators as above, we can bound the extractable  work in a 
given $\ptheta$ using the same twirling operator as defined above in \cref{eq:symmPhiDef}. 
(For a sample $\ptheta$, the twirling $\projG{\ptheta}$ is illustrated in \cref{fig:szilard-symm-angle}(b).)
Using \cref{eq:symmwork1}, the extractable work obeys $W(\dTrans{\ptheta}{\po})\le D(\projG{\ptheta} \Vert \po)$. Moreover, as we show in \cref{app:rotatedaccessDeriv}, this KL divergence can be written in closed form as
\begin{align}
D(\projG{\ptheta} \Vert \po)\!=\! {\ln2}\cdot\!\begin{cases}
\frac{1}{2}\vert\tan(\theta-\frac{\pi}{2})\vert &\! \text{\ensuremath{\vert\theta\vert\in\angleRange}}\\
1- \frac{1}{2}\vert\tan\theta\vert & \text{otherwise.}
\end{cases}
\label{eq:accFreeEnergyRotated}
\end{align} 
This result is plotted as a function of $\theta$ %
 in \cref{fig:szilard-symm-data}. %

\begin{figure}
\begin{centering}
\includegraphics[height=0.4\columnwidth]{}
\par\end{centering}
\caption{%
Szilard box with symmetry constraints: the bound on extractable work %
 as a function of $\theta$,   
\cref{eq:accFreeEnergyRotated}. %
\label{fig:szilard-symm-data}}
\end{figure}

We can also analyze the thermodynamics of information for different measurements of the Szilard box. Imagine that, starting from a uniform equilibrium distribution, one measures which side of the box contains the particle, as determined by a separating line at some arbitrary angle $\theta \in[-\pi,\pi]$. %
For this measurement, the conditional distribution over system states $\pconds$ is equal to $\ptheta$ half the time (as in \cref{fig:szilard-symm-angle}(a)), and equal to $\pthetapi$ the other half the time. 
 Then, for both measurement outcomes, one manipulates the vertical partition so as to drive the particle back to the equilibrium distribution $\pf=\po$ while extracting work. For simplicity, we assume that the initial and final energy functions are the same.

The general bound on average extractable work for feedback control, \cref{eq:workbound-2}, gives
\begin{align}
\langle W \rangle \le \IXMs = \ln 2,
\end{align}
where we've used that $\ps=\pf$ and $\IXMf=0$.  Our results provide a tighter bound,   showing that
the average extractable work is bounded by the accessible information in the measurement,
\begin{align}
\!\langle W \rangle\! \le\! \IaccS[\projGbase]\!=\!\frac{D(\projG{\ptheta}\Vert \po)\!\!+\!\!D(\projG{\pthetapi}\Vert \po)}{2},\label{eq:szwork0}
\end{align}
where we used \cref{eq:workboundinfo,eq:accinfodef2}. 
It can be verified from \cref{eq:accFreeEnergyRotated} that $D(\projG{\ptheta}\Vert \po)=D(\projG{\pthetapi}\Vert \po)$. Thus, the accessible information for a given $\theta$ is simply equal to $D(\projG \ptheta \Vert \po)$, the right side of \cref{eq:accFreeEnergyRotated}, and shown in 
 \cref{fig:szilard-symm-data}.
 As expected, the accessible information achieves a maximum of $\ln 2$ at $\theta=0$ (or $\theta =\pm \pi$), which corresponds to a  measurement of whether the particle is on the left or right side of the box. %
 The accessible information 
 falls nonlinearly (but continuously) to a minimum of 0 at $\theta=\pm\pi/2$, which corresponds to a measurement of whether the particle is on the top or bottom of the box. 

In the example above, the accessible information quantifies in a very literal way the ``alignment'' between the choice of measurement and the way the system can be manipulated.  
More generally, this example illustrates how our  bounds on EP and work depend on the interplay between the operator $\klopBase$, the initial/final distributions $\ps$ and $\pf$, and (for feedback control protocols) the choice of measurement $M$. This interplay can give rise to highly non-trivial thermodynamic bounds, such as in \cref{eq:accFreeEnergyRotated,fig:szilard-symm-data}, even for very simple operators $\klopBase$, such as in \cref{eq:symmPhiDef}.

Finally, we note that our analysis above only assumes that the energy functions are vertically symmetric, which includes many energy functions that do not have the form of the vertical partition defined in \cref{eq:ham0}. Furthermore, while the bounds on work and EP which we derive here are achievable by \emph{some} vertically symmetric energy functions, they are not necessarily achievable by manipulating the  location of a vertical partition. For instance, achieving the extractable work bound for a given $\theta$, \cref{eq:accFreeEnergyRotated}, generally requires that the corresponding twirled distribution $\projG{p}$, such as the one shown in \cref{fig:szilard-symm-angle}(b), is an equilibrium distribution for some available energy function.

We analyze the same system using a different set of constraints  in \cref{subsec:Szilard-box-example-mod,subsec:Szilard-box-example-cg} below. 
(Also see~\cite{still2021partially} for a different recent analysis of the thermodynamics of the Szilard box with rotated measurements, though from the point of view of partial observability rather than protocol constraints.)

\subsection{Example: Feedback control on the Ising model}
\label{sec:ising}

Our bounds on symmetry constraints can be useful for various multi-particle systems with symmetries, such as gases of indistinguishable particles and  spin systems with symmetries.  We demonstrate this by analyzing the thermodynamics of feedback control on an Ising model.
The reader may also be interested in \cref{app:unicyclic}, where we analyze a simpler and more pedagogical example of a discrete-state system with symmetry constraints.

Consider a 2D Ising model on a square lattice on a torus, containing a total of $N^2=N\times N$ spins. The state of the lattice is indicated as $x\equiv (x_{1},\dots,x_{N^2})$, where $x_{i}\in\{-1,1\}$ is the state of the spin at location $i$. We assume that the energy functions have the following form,
\begin{align}
E(x) = -J \sum_{\mathclap{(i,j)\in\mathcal{N}}} x_{i} x_{j} - H \sum_{i} x_{i}.
\label{eq:ising}
\end{align}
where $\mathcal{N}$ is the set of all nearest neighbors on the lattice, $J$ is the coupling strength, and $H$ is the external magnetic field. 
%

%
Energy functions like these are invariant under the symmetry group $\G$ corresponding to horizontal and vertical translations of the lattice  (for simplicity, we ignore other symmetries of the lattice, such as reflections and rotations). The action of this group is given by a set of $N^2$ bijections $g_{a,b} : X\to X$ for $a,b \in \{0,\dots,N-1\}$, where $g_{a,b}(x)$ translates the lattice state $x$ to the right by $a$ spins and upward by $b$ spins (with periodic boundary conditions). We assume that the system evolves according to Glauber dynamics~\cite{krapivsky_kinetic_2010}, or some other dynamics that respects the translational symmetry of the 2D lattice, such that \cref{eq:symmME} is satisfied.

Given these assumption, we can derive thermodynamic bounds for the 2D Ising model in terms of the following twirling operator, %
\begin{align}
\projG{p}(x)=N^{-2} \sum_{a=0}^{N-1}\sum_{b=0}^{N-1} g_{a,b}(x).
\label{eq:symmPhiDefIsing}
\end{align} 
We use this twirling operator to analyze the thermodynamics of the following feedback-control setup on the Ising model, %
also shown in \cref{fig:ising}. 
The lattice is initially in equilibrium $p$ at some temperature $\beta$ and $J=1,H=0$ (no external field). The state of the spin at location $1$ is then measured under the measurement channel $q(m\vert x)= \deltaFunc[m](x_1)$, where $\delta$ is the Kronecker delta. Since there is no initial external field, the two outcomes $m\in\{-1,1\}$ have equal probability and $\IXMs=\ln 2$. The measured outcome is then used to select a driving protocol, which extracts work from the system by manipulating the control parameters $J$ and $H$.  At the end of the protocol corresponding to each outcome, the system is brought back to the original equilibrium (so $\pcondf=p$ for all $m$). %
For simplicity, we assume that the initial and final energy functions are the same.

\begin{figure}
\includegraphics[valign=c,width=0.18\columnwidth]{\figdir/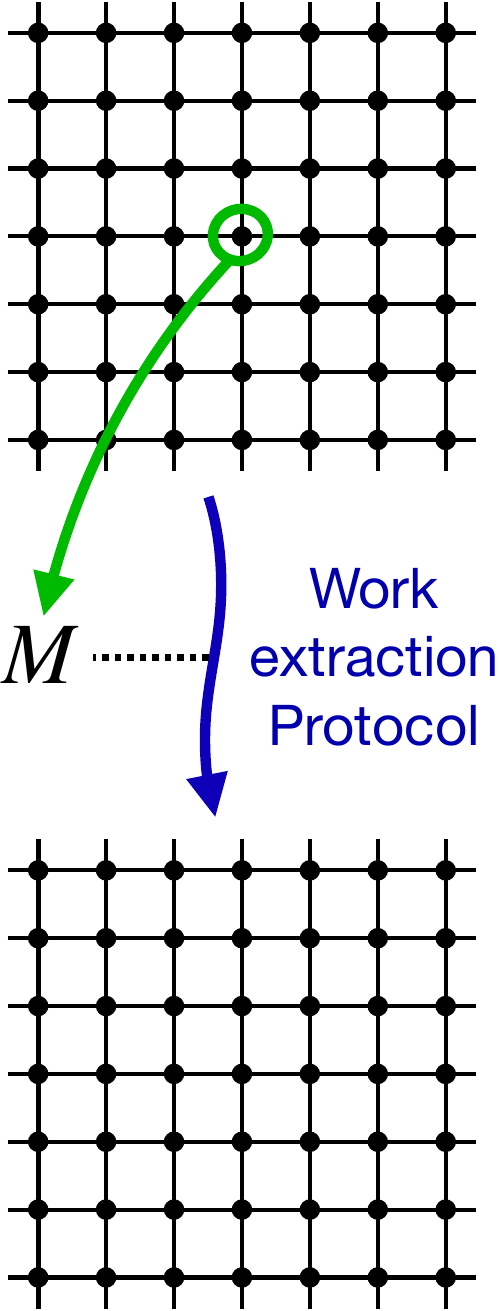}
\hspace{10pt}
\includegraphics[valign=c,width=0.68\columnwidth]{\figdir/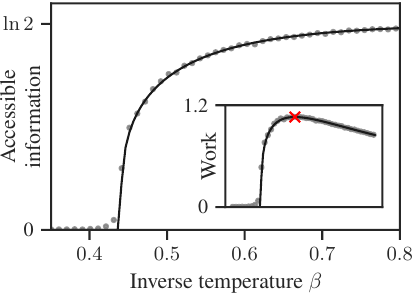}
\caption{Thermodynamics of information on a 2D Ising model. Left: a measurement $M$ is made of the state of a single spin (green), and then used to %
drive the system while extracting work (blue). Right: the accessible information $\IaccS[\projGbase]$ increases with inverse temperature after the critical value $\beta_c \approx 0.44$ (grey circles from Monte Carlo simulations, black line from  closed-form expression, \cref{eq:analyticIsing}). Inset shows the bound on extractable work, $\IaccS[\projGbase]/\beta$, which peaks at $\beta\approx 0.547$ (red cross). %
\label{fig:ising}}
\end{figure}

Under this setup, one can verify that $\IaccF[\projGbase]=0$ and $\NFEs[p]=\NFEf[p']$, so \cref{eq:workboundinfo} bounds 
 average extractable work as %
$\langle W \rangle \le \IaccS[\projGbase]/\beta$,
where $\IaccS[\projGbase]$ is the accessible information from \cref{eq:accinfodef}. 
Using \cref{eq:accinfodef,eq:decomp4b},  we can  write this accessible information as
\begin{align}
&\IaccS[\projGbase] = %
\ln 2 - \Bigg\langle  \ln \frac{q(m\vert x)}{N^{-2}\sum_{a,b} q(m\vert \g_{a,b}(x))} \Bigg\rangle,\label{eq:hg7b}
\end{align}
where $\langle \cdot \rangle$ indicates expectation over the joint distribution $p(x)q(m\vert x)$, where $p(x)$ is the initial equilibrium distribution at inverse temperature $\beta$ and $J=1,H=0$.  We emphasize that the accessible information depends on  $\beta$ (though we leave this dependence implicit in the notation).

In general, one can estimate the accessible information in \cref{eq:hg7b}  using various numerical techniques (e.g., by sampling from the initial equilibrium distribution using Monte Carlo methods). It is also possible to use Onsager's well-known solution of the 2D Ising model to calculate the accessible information in closed form. In particular, in \cref{app:ising} we show that in the thermodynamic limit  $N\to\infty$,
\begin{align}
\IaccS[\projGbase] =  %
\begin{cases}0 &\text{for $\beta \le \beta_c$}\\
 \ln 2 - h_2\Big(\frac{1+\sqrt[8]{1- (\sinh2\beta)^{-4}}}{2}\Big) &\text{for $\beta > \beta_c$.}
\end{cases}
\label{eq:analyticIsing}
\end{align}
where $h_2(x)=-x\ln x -(1-x)\ln(1-x)$ is the binary entropy function %
and $\beta_c=\ln(1+\sqrt{2})/2\approx 0.44$ is the critical inverse temperature of the 2D Ising model. 
This result is verified in \cref{fig:ising}, where we compare \cref{eq:analyticIsing} with a Monte Carlo estimate of \cref{eq:hg7b} on a 100x100 lattice.  It can be seen that 
in the high temperature (low $\beta$) regime, %
the accessible information vanishes. In the low temperature (high $\beta$) regime, the amount of accessible information increases, approaching $\ln 2$ as $\beta \to \infty$.

We also plot the bound on average extractable work, $\langle W\rangle \le \IaccS[\projGbase]/\beta$, in the inset in \cref{fig:ising}. This bound is the ratio of two terms: the accessible information $\IaccS[\projGbase]$ and the inverse temperature $\beta$, both of which are increasing in $\beta$. In fact, it can be seen from \cref{fig:ising} that the bound on extractable work peaks at a finite value of $\beta$, the  optimal inverse temperature for work extraction. Using \cref{eq:analyticIsing} and numerical techniques, we find this optimal value to be $\beta \approx 0.547$, leading to the bound $\langle W\rangle \le 1.06$ joules.

This shows that the amount of accessible information provided by a given measurement can depend on the structure of correlations in the system, and therefore vary dramatically as the system undergoes a phase transition. %
At a high level, 
any driving protocol that is restricted to energy functions like  \cref{eq:ising} can only extract work from ``global'' (i.e., translationally invariant) information. %
If the measurement acquires such information (e.g., if it directly measures the spatially-averaged magnetization), then in principle all of the acquired information may be extractable as work. Measurement of the state of a single spin, however, in general  provides only local information. The temperature dependence observed in  \cref{eq:analyticIsing} and \cref{fig:ising} arises from the presence of long-range order in the magnetic regime ($\beta>\beta_c$). In this regime, the state of each spin is highly correlated with the magnetization of the entire lattice, so local and global information are equivalent.  In the high temperature regime ($\beta<\beta_c$), the state of a single spin is not correlated with any kind of global information, and so most of the measured information is inaccessible. %

For a different kind of analysis of the thermodynamics of a 1D Ising model under constraints, see \cite{lekscha_quantum_2018}.

\section{Modularity constraints}
\label{sec:modularity}
Many systems of interest exhibit modular organization, meaning
that their degrees of freedom can be grouped into decoupled 
subsystems. Examples of modular systems include computational devices such as
digital circuits \citep{gershenfeld1996signal,Boyd:2018aa,wolpert2020thermodynamic}, regulatory networks in biology
\citep{schlosser2004modularity}, and brain
networks \citep{sporns2016modular}.

We use our framework to derive bounds on work and EP for modular systems. %
We begin by introducing some terminology and notation. Consider a system whose degrees
of freedom are indexed by the set $V$, such that the overall state
space can be written as $X=\bigtimes_{v\in V}\sX_{v}$, where $X_{v}$
is the state space of degree of freedom $v$. We use the term \emph{subsystem}
to refer to any subset of the degrees of freedom, $\modSubsys\subseteq V$. 
We use $X_{\modSubsys}$ 
to indicate the random variable representing the state of subsystem $\modSubsys$ and $x_{\modSubsys}$ to indicate an actual state  of $\modSubsys$.  
Given some distribution $p$ over the entire system, we use 
$p_{\modSubsys}$ to indicate a marginal
distribution over subsystem $\modSubsys$, and %
$[Lp]_A$ to indicate the 
derivative of the marginal distribution of subsystem $A$ under the generator $L$.  

We use the term \emph{modular decomposition}
to refer to a set of subsystems $\modDecomp$, %
such that each $v\in V$ belongs
to at least one subsystem $\modSubsys\in\modDecomp$. 
Note that some of the degrees of freedom $v\in V$ can belong to more than one subsystem
in $\modDecomp$. We use
\begin{align}
\label{eq:olapdef}
\olaps = \bigcup_{ A,B\in\modDecomp: A\ne B} (A\cap B)
\end{align}
to indicate those degrees of freedom that belong to more than one subsystem in $\modDecomp$, which we refer to as the \emph{overlap}. We will often write $\olapsShort$ instead of $\olaps$ for notational simplicity.

\newcommandx\Lmod[1][usedefault, addprefix=\global, 1=\modSubsys]{\LL^{(#1)}}
\newcommandx\Rmod[1][usedefault, addprefix=\global, 1=\modSubsys]{R^{(#1)}}

We say that the available driving protocols obey %
\emph{modularity constraints} (with respect to the modular decomposition $\modDecomp$) if each generator $\LL\in\LLL$ can
be written as a sum of generators of the different subsystems in $\modDecomp$, %
\begin{align}
\LL=\sum_{\modSubsys\in\modDecomp}\Lmod,
\label{eq:m0}
\end{align}
and each $\Lmod$ obeys two properties:  
the dynamics over the marginal distribution $p_{\modSubsys}$ are closed under $\Lmod$ (depend only on the marginal distribution over $\modSubsys$),
\begin{equation}
p_{\modSubsys}=q_{\modSubsys}\implies[\Lmod p]_\modSubsys=[\Lmod q]_\modSubsys\qquad\forall p,q\in\Dset,\label{eq:m2}
\end{equation}
and the distribution over other subsystems besides
$\modSubsys$ does not change under $\Lmod$,
\begin{equation}
[\Lmod p]_{\otherS}=0\qquad\quad\forall p\in\Dset,\otherS \in \modDecomp \setminus\{ \modSubsys\}.\label{eq:m3}
\end{equation}
In other words, we require that each subsystem evolves independently, and does not affect the other subsystems.

The role of the degrees of freedom in the overlap is somewhat subtle. It can be verified that \cref{eq:m3} %
implies that the degrees of freedom in the overlap cannot change state when evolving under $\LL$. Importantly, however, the overlap 
may influence the dynamics of those degrees of freedom that \emph{can} change state. For example, consider an inclusive model of a feedback control setup: there are two nested subsystems, $\modDecomp=\{\modSubsys,\otherS\}$ with $\otherS \subseteq \modSubsys$, and the degrees of freedom in $\olapsShort = \otherS$ (the controller) cannot change state but can influence the evolution of $\modSubsys\setminus\otherS$. More elaborate feedback control setups, in which the same controller can control multiple subsystems, can be modeled using decompositions with multiple non-nested subsystems. Other examples of modular decompositions with overlap include circuits~\cite{wolpert2020thermodynamic}, spin systems where some spins are pinned by local magnetic fields, and many-particle systems where some particles have no mobility. %

We can also provide more concrete conditions when \cref{eq:m2,eq:m3} hold for discrete-state master equations and Fokker-Planck equations. 
For discrete-state master equations, it can be verified by inspection that %
\cref{eq:m2,eq:m3} %
 hold when all $L \in\LLL$
can be written in the form
\begin{align}
\Lij=\sum_{\modSubsys\in\modDecomp} \Rmod_{x_{\modSubsys}',x_{\modSubsys}} \deltaFunc[x_{V\setminus \modSubsys}](x_{V\setminus \modSubsys}'),
\label{eq:mcond}
\end{align}
where $\delta$ is the Kronecker delta and $\Rmod$ is some rate matrix over subsystem $\modSubsys$ that does not allow the degrees of freedom in the overlap to change state ($ \Rmod_{x_{\modSubsys}',x_{\modSubsys}} = 0$ if $x_{\modSubsys \cap \olapsShort} \ne x_{\modSubsys\cap \olapsShort}'$).
For Fokker-Planck equations, for simplicity consider overdamped dynamics of the form 
\begin{equation}
Lp =\sum_{v\in V} \gamma_{v}^L\partial_{x_{v}}\Big[(\partial_{x_{v}}E_L)p + \dL\partial_{x_{v}}p\Big],\label{eq:fp9}
\end{equation}
where $\gamma_{v}^L$ is the mobility coefficient along dimension $v$ and $E_L$ is the potential energy function associated with generator $\LL$. 
Such equations can represent potential-driven Brownian
particles coupled to a heat bath, where the different mobility coefficients represent different particle masses or sizes~\footnote{One
can also apply the results in this section to Fokker-Planck equations
that can be put in the form of \cref{eq:fp9} via an appropriate
change of variables, see \citep[Sec. 4.9]{risken1996fokker}.}.  
Now imagine that for all $\LL\in\LLL$, the
energy functions are additive over the subsystems, and that the degrees of freedom in the overlap have no mobility:
\begin{align}
E_L(x)=\sum_{\modSubsys\in\modDecomp}E^{(\modSubsys)}_L(x_{\modSubsys}),\quad\;\;\;\gamma_{v}^L=0 \;\;\; \forall v \in \olapsShort.\label{eq:modenergy2}
\end{align}
In that case, \cref{eq:fp9} can be rewritten in the form of \cref{eq:m0},  with $\Lmod p=\sum_{v\in \modSubsys\setminus \olapsShort} \gamma_{v}^L\partial_{x_{v}}[(\partial_{x_{v}}E^{(\modSubsys)}_L)p_\modSubsys + \dL\partial_{x_{v}}p_\modSubsys]$, 
and satisfies 
\cref{eq:m2,eq:m3}.

We now define  the following nonlinear operator $\projMbase$: %
\begin{align}
\projM{p} = p_{\olapsShort} \prod_{{\modSubsys \in \modDecomp}} p_{\modSubsys \setminus \olapsShort \vert \modSubsys \cap \olapsShort}.
\label{eq:projMbasedef}
\end{align}
This operator preserves the statistical correlations {within} each subsystem  $\modSubsys \in \modDecomp$, as well as within the overlap $\olapsShort$, while destroying %
all other statistical correlations. %
As a simple example, if all the subsystems in $\modDecomp$ are non-overlapping, then $\projM{p}$ has the product form $\projM{p}=\prod_{\modSubsys \in \modDecomp} p_{\modSubsys}$.  
In  \cref{app:mod}, we show that $\projMbase$ obeys the Pythagorean identity, \cref{eq:pyth}. 
We also show that if some generator $\LL(t)$ 
obeys \cref{eq:m2,eq:m3}, then $e^{\inftime L(t)}$ commutes with
$\projMbase$, so \cref{eq:comm0} holds. 
This means that for any protocol that carries out the transformation $\ptpp$ while obeying modularity constraints, %
the decompositions and bounds for EP and work derived in \cref{sec:info-geom-framework} are satisfied for $\klopBase = \projMbase$. %
In particular, using  \cref{eq:feffdecomp}, we can decompose the free energy $\NFEs$ of any distribution $p$ into the accessible free energy  $\NFEs[\projM p]$ %
and the  inaccessible free energy $D(p\Vert\projM p)/\beta$.
Note that $D(p\Vert \projM p)$ is a non-negative measure of the amount of statistical correlations between the subsystems of $\modDecomp$ under distribution $p$, which vanishes when each subsystem is conditionally independent given the overlap $\olapsShort$. Thus, for a protocol that obeys modularity constraints, \cref{eq:epfreebound} states that the drop in those statistical correlations is a lower bound on EP, and that the amount of statistical correlation between the subsystems of $\modDecomp$  cannot increase over the course of the protocol. 
(There is a fair amount of closely related prior work; see \cref{sec:priorwork}.)

A particularly simple application of our bounds %
 occurs when $\modDecomp$ contains two (possibly overlapping) subsystems, $\modDecomp=\{A,B\}$. 
In that case, the bounds in \cref{eq:epfreebound} can be rewritten in terms of the drop of a conditional mutual information between the two subsystems,
\begin{align}
\EPp \ge I(X_A;X_B\vert X_{A\cap B})-I(X_A';X_B'\vert X_{A\cap B}') \ge 0.\label{eq:ep-mod-condmi}
\end{align}
If the subsystems do not overlap, this can be further rewritten as  the drop of the regular mutual information,
\begin{align}
\EPp \ge I(X_A;X_B)-I(X_A';X_B') \ge 0.\label{eq:ep-mod-mi}
\end{align}
More generally, if  $\modDecomp$  contains an arbitrary number of non-overlapping subsystems, the EP can be bound as
\begin{align}
\EPp \ge \mathcal{I}(p)-\mathcal{I}(p') \ge 0,
\label{eq:ep-mod-tc}
\end{align} 
where $\mathcal{I}(p)=\big(\sum_{A\in\modDecomp}S(p_A)\big)-S(p)$ is the {multi-information}
in distribution $p$ with respect to partition $\modDecomp$~\footnote{The multi-information
is a well-known generalization of mutual information, which is also
sometimes called ``total correlation'' \citep{watanabe1960information}.}.

We finish by discussing thermodynamics of information under modularity constraints. In general, the results  derived in \cref{sec:thermo-of-info} apply to modularity constraints as a special case.  However, we can also exploit special properties of the operator $\projMbase$  to further simplify the expression of accessible information. %
Suppose that the distribution $\ps$ is invariant under $\projMbase$, so $\ps=\projM{\ps}$ (e.g., if $p$ is an equilibrium distribution, see \cref{eq:equilinimgKLOP}). Using \cref{eq:projMbasedef}, we can then rewrite \cref{eq:accinfodef2} as
\begin{align}
\IaccS[\projMbase] &=I(X_\olapsShort;M)+\sum_{\mathclap{\modSubsys\in\modDecomp}} I(X_{\modSubsys} ; M \vert X_{\modSubsys\cap \olapsShort}).
\label{eq:modInfo}
\end{align}
Thus, the accessible information in measurement $M$ is the information that $M$ provides about the overlap, plus the conditional mutual information between each subsystem and $M$ given the relevant part of the overlap. This means that only information about individual subsystems --- not about inter-subsystem correlations --- can be turned into work.  
If there is no overlap, \cref{eq:modInfo} can be further simplified as  %
\begin{align}
\IaccS[\projMbase]&=\sum_{\mathclap{\modSubsys\in\modDecomp}} I(X_\modSubsys ; M).
\label{eq:modInfo2}
\end{align}
We will use these expressions in some of our examples below.

\subsection{Example: Szilard box with modularity constraints}
\label{subsec:Szilard-box-example-mod}

We illustrate our results for modularity constraints on a Szilard box. In doing so,  we will demonstrate two important concepts: first, how the same set of generators $\LLL$ can be analyzed under different constraints, resulting in different bounds on work and EP (compare this section to \cref{subsec:Szilard-box-example}); second, how bounds arising from multiple constraints can be stacked on top of each in an iterative manner, as in \cref{eq:stackedbounds} (we will combine bounds from modularity and symmetry constraints).

We consider the same setup as in \cref{subsec:Szilard-box-example}: there is a single overdamped particle in a box  coupled to a bath at inverse temperature $\beta=1$, which evolves under potential energy functions as in  \cref{eq:ham0}. 
This system is driven from some initial distribution $p$ to a final uniform equilibrium distribution, $\pf=\po$ while extracting work.

\begin{figure} 
\begin{centering}
\includegraphics[width=0.75\columnwidth]{}
\par\end{centering}
\caption{%
$(a)$ Given a   ``rotated'' distribution $\ptheta$, as shown above in \cref{fig:szilard-symm-angle}(a), this shows the decorrelated distribution $\projM{\ptheta}$, as in \cref{eq:twirlMsz}. 
$(b)$ The decorrelated and twirled distribution, $\projG{\projM{\ptheta}}$. 
\label{fig:szilard-mod-angle}}
\end{figure}

\begin{figure}[b] 
\begin{centering}
\includegraphics[height=0.4\columnwidth]{}
\par\end{centering}
\caption{%
Bounds on extractable work %
 as a function of $\theta$, as derived from only modularity constraints (in green, 
\cref{eq:cf01}), a combination of modularity+symmetry constraints (in orange, \cref{eq:cf02}), and only symmetry constraints (in blue, \cref{eq:accFreeEnergyRotated}). %
\label{fig:szilard-mod-data}}
\end{figure}

Note that the energy functions in \cref{eq:ham0} have no interaction terms between $x_1$ (the horizontal position of the particle) and $x_2$ (the vertical position of the particle). That means that the allowed driving protocols obey modularity constraints for a decomposition of the system into two subsystems, $\modDecomp = \{\{X_1\},\{X_2\}\}$ (since \cref{eq:modenergy2} is satisfied for the decomposition). 
 This allows us to analyze EP and work using an operator $\projMbase$ which maps each joint distribution over $X_1\times X_2$ into a product distribution,
\begin{align}
\projM{p}(x_1, x_2) = p(x_1)p(x_2).
\label{eq:twirlMsz}
\end{align}
In particular, using the same derivation as in \cref{eq:symmwork1}, we can bound the extractable work in terms of the accessible free energy in $p$,
\begin{align}
W(\dTrans{p}{\po})\le D(\projM{p}\Vert \po).
\label{eq:projMbound}
\end{align}

As discussed in \cref{subsec:Szilard-box-example}, this system also obeys symmetry constraints, corresponding to the vertical reflection twirling operator $\projGbase$ defined in \cref{eq:symmPhiDef}. %
We can use \cref{eq:ourworkboundstacked} to  bound the extractable work using a combination of $\projMbase$ and $\projGbase$, 
 \begin{align}
W(\dTrans{p}{\po})&\le D(\projM{\projG{p}}\Vert \po)
\label{eq:projMbound2}\\
W(\dTrans{p}{\po})&\le D(\projG{\projM{p}}\Vert \po).
\label{eq:projMbound3}
\end{align}

For concreteness, imagine that the initial distribution $p$ is concentrated within half the box, as determined by a separating line  rotated by some arbitrary angle 
$\theta \in [-\pi,\pi]$, so  $\ps=\ptheta$ from \cref{eq:degMeas} (see \cref{fig:szilard-symm-angle}(a) for an illustration). 

We consider the extractable work bound in \cref{eq:projMbound} for the initial distribution $\ptheta$.
For a given $\ptheta$, the corresponding decorrelated initial distribution $\projM{\ptheta}$  is illustrated in \cref{fig:szilard-mod-angle}(a). 
Then, the accessible free energy in \cref{eq:projMbound} can be expressed in closed form as (see \cref{app:szModDeriv}),
\begin{multline}
D(\projM{\ptheta}\Vert \po) = \ln 4 - \frac{1}{2}\Big[
\min\{|\tan \theta|,|\tan(\pi/2-\theta)|\}\\
+f(\max\{|\tan \theta|,|\tan(\pi/2-\theta)|\})\Big],
\label{eq:cf01}
\end{multline}
where for notational convenience we've defined
\begin{align}
f(x)=1-\frac{1+x^2}{2x} \ln \frac{x+1}{x-1} - \ln \frac{x^2-1}{4x^2}.
\label{eq:fffunc}
\end{align}
\cref{eq:cf01} is plotted in \cref{fig:szilard-mod-data} in green.  Note that this function peaks both at $\theta\in \{-\pi,0,\pi\}$ (i.e., when the particle is in the left or right half of the box) as well as $\theta \in \{-\pi/2,\pi/2\}$ (i.e., when the particle is in the top or bottom half of the box) --- precisely those $\theta$ for which $p_\theta$ has no correlations between the horizontal and vertical position of the particle.

Next, we consider the extractable work bound in \cref{eq:projMbound2} for the initial distribution $\ptheta$. 
It can be verified that $\projG{\projM{\ptheta}}(x_1,x_2)=\ptheta(x_1)\po(x_2)$, which is illustrated  in \cref{fig:szilard-mod-angle}(a).  
The right hand side of \cref{eq:projMbound2} can again be expressed in closed form as (see \cref{app:szModDeriv})
\begin{align}
D(\projG{\projM{\ptheta}}\Vert \po) = \ln 2 -\frac{1}{2}\begin{cases}
f(|\tan \theta|) & \text{if $|\theta| \in \angleRange$}\\
|\tan \theta|
& \text{otherwise}
\end{cases}
\label{eq:cf02}
\end{align}
with $f$ defined as in \cref{eq:fffunc}. This result is shown in \cref{fig:szilard-mod-data} in orange. 
Note also that $\projG{\projM{\ptheta}}=\projM{\projG{\ptheta}}$ for all $p_\theta$, so the bounds in \cref{eq:projMbound2,eq:projMbound3} are equivalent.

For comparison  we also plot the extractable work bound derived using symmetry constraints, \cref{eq:accFreeEnergyRotated} (\cref{fig:szilard-mod-data} in blue). 
It is clear that the bound derived by exploiting a combination of modularity and symmetry constraints  (in orange) is strictly tighter  than the bounds derived by using either only modularity (green) or only symmetry constraints (blue) individually.

One can also use the bounds derived in this section to analyze the accessible information in a measurement of the Szilard box. Imagine that, starting from a uniform equilibrium distribution, one measures which side of the box contains the particle, as determined by a separating line at some arbitrary angle $\theta \in[-\pi,\pi]$. %
For this measurement, the conditional distribution over system states $\pconds$ is equal to $\ptheta$ half the time  and equal to $\pthetapi$ the other half the time. One can then derive bounds on accessible information such as \cref{eq:szwork0}, while using the bounds derived in this section (\cref{eq:projMbound,eq:projMbound2,eq:projMbound3}).

\subsection{Example: Generalized Szilard box}

\label{sec:generalizedszilard}

\begin{figure}[t!]
\includegraphics[height=0.3\columnwidth]{\figdir/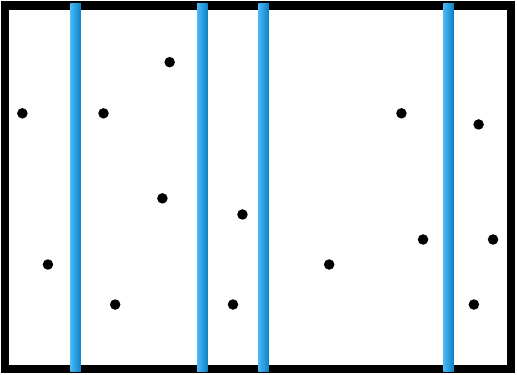}
\caption{
A generalized Szilard box with multiple particles~\cite{song2021optimal}.
\label{fig:genszilard}
}
\end{figure}

\newcommand{\numparticles}{N}

Our results on modularity constraints can be useful for analyzing the thermodynamics of multi-particle systems. %
As an example, consider the ``generalized Szilard box'' feedback-control scenario analyzed in \cite{song2021optimal}. Here, a box containing an ideal gas of $\numparticles$ particles, which are indexed by $v\in V$, begins in uniform equilibrium with a heat bath at inverse temperature $\beta$. Several partitions are inserted into the box, separating the box into separate volumes, and a measurement $M$ is made of the number of particles in each volume (see the illustration in \cref{fig:genszilard}). The box is then separated from the bath and, depending on the outcome of the measurement, the partitions are moved so as to equalize the pressure within each volume while extracting work. To make the process repeatable, suppose that at the end of the protocol, the partitions are removed and the box is again equilibrated with the bath (note that this last step does not contribute to extracted work). 

The ideal gas assumption means that the  particles do not interact, so by \cref{eq:m2,eq:m3} the protocol obeys modularity constraints with respect to a decomposition in which each particle is a separate subsystem.
The corresponding operator $\projMbase$ is given by %
\begin{align}
\projM{p}(x) = \prod_{v=1}^N p(x_v).
\label{eq:indepProjM}
\end{align}
Given \cref{eq:workboundinfo}, the average extractable work for the above feedback-control scenario is bounded by $\langle W\rangle \le\IaccS[\projMbase]/\beta$, which can also be written in terms of the %
information provided by the measurement $M$ about each individual particle,
\begin{align}
\langle W\rangle  \le   \sum_{v=1}^\numparticles I(X_v;M)/\beta,
\label{eq:songresult}
\end{align}
as follows from \cref{eq:modInfo2}. In fact, by symmetry of the initial distribution, the measurement provides the same information about each particle, $I(X_v;M)= I(X_1;M)$ for all $v$, so we can further rewrite \cref{eq:songresult} as $\langle W \rangle \le \numparticles\cdot I(X_1;M)/\beta$. 

This shows that %
\cref{eq:songresult}, which is reported as one of the main results of \cite{song2021optimal} (Eq.~5),   
follows  immediately from our framework. Moreover, our derivation holds under a broader set of conditions than those considered in \cite{song2021optimal}, since it does not rely on any of the details of setup (such as the type of partitions, the particular  work extraction protocol, or even the assumption that the particles are identical).

\subsection{Example: Collective flashing ratchet}
\label{sec:ratchetexample}

As a final example of modularity constraints, we consider the ``collective flashing ratchet'', a classic model in the literature on the thermodynamics of information~\cite{cao2004feedback,craig_feedback_2008}. This system involves $\numparticles$ overdamped particles evolving under an additive potential
\begin{align}
\label{eq:ratchetpotential}
E(x) = \lambda \sum_{v=1}^\numparticles  V(x_v).
\end{align} 
where $V$ is a single-particle potential and $\lambda \in \{0,1\}$ is a control parameter that can be used to turn the potential on/off. The single-particle  potential $V$ is chosen as an asymmetrical sawtooth ``ratchet'' pattern, shown in \cref{fig:ratchetpotential}, where $\alpha\in[0,1/2]$ parameterizes the degree of asymmetry. 

By manipulating  $\lambda$ over time, possibly in a way that depends on measurements of the system, the particles can be driven so as to have a net directional flux, or to do work against the externally applied force~\cite{feito_information_2007}. For instance, in a feedback control setup, $\lambda$ is determined by the outcome of some measurement $M$. The most common strategy involves turning the ratchet potential on when the net force on the particles is positive, and turning it off otherwise, according to the following measurement channel~\cite{cao2004feedback}:
\begin{align}
q(m\vert x) = \deltaFunc[m]\big[{\textstyle \Theta\big(\sum_v V'(x_v)\big)}\big],
\label{eq:ratchetMeasurement}
\end{align}
where  $\Theta$ is the Heaviside function. 
Note that this system has been experimentally realized~\cite{lopez_realization_2008}.

Suppose that starting from some initial distribution $p$, the measurement in \cref{eq:ratchetMeasurement} is performed. As common in the literature~\cite{cao2004feedback}, we assume 
that under $p$ the particles are identically and independently distributed, and that each particle is in the increasing part of the potential ($V'(x_v) \ge 0$) with probability $\alpha$ (see \cref{fig:ratchetpotential}). 
The measurement outcome is then used to drive the system back to distribution $p$ while extracting work by manipulating the system's energy function, all while coupled to a heat bath at inverse temperature $\beta$.
We assume that the driving protocols start and end on the same energy function, and that   only additive potentials (without interaction terms) are applied to the system during the driving (this  assumption allows for potentials such as \cref{eq:ratchetpotential}, as well as many others).  %

\begin{figure}[t!]
\includegraphics[height=0.3\columnwidth]{\figdir/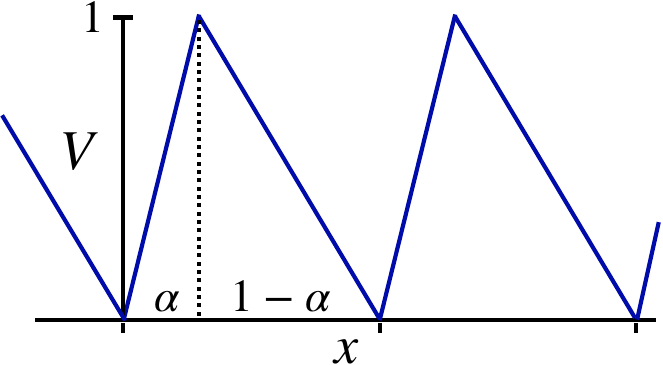}s
\caption{
The sawtooth potential of the flashing ratchet, from \cite{cao2004feedback}. 
\label{fig:ratchetpotential}}
\end{figure}

The driving protocols obey \cref{eq:modenergy2}  for a decomposition where each particle is its own subsystem, corresponding to the same type of $\projMbase$ as in \cref{eq:indepProjM}, $\projM{p}(x) = \prod_{v\in V} p(x_v)$. 
As in \cref{subsec:Szilard-box-example-mod}, we can use \cref{eq:workboundinfo} to bound average extractable work as $\langle W \rangle \le \IaccS[\projMbase]/\beta$. Using \cref{eq:modInfo2}, %
\begin{align}
\IaccS[\projMbase] = \sum_{v=1}^\numparticles I(X_v;M)=\numparticles  \cdot I(X_1;M),
\end{align}
where we've used  
the measurement provides the same information about each particle, $I(X_v;M)= I(X_1;M)$ for all $v$ (as  follows from a symmetry argument).

In  \cref{app:ratchet}, we show that $\IaccS[\projMbase]$ can be computed in closed form. %
Values of $\IaccS[\projMbase]$ for different values of $\numparticles$ (the number of particles) and $\alpha$ (the asymmetry parameter) are plotted in \cref{fig:ratchet}(left). 
Note that the accessible information shows a non-monotonic behavior in the number of particles for $\alpha \ne 0.5$. This occurs because for a highly asymmetric potential, the total amount of acquired information  grows with $N$: $I(X;M)$ grows from a minimum value of $h_2(\alpha)$ for $N=1$ to a maximum value of $\ln 2$ as $N\to\infty$. 
Given this observation, we also calculate the ``efficiency'' of the measurements in terms of the ratio $\IaccS[\projMbase]/\IXMs$. This is shown in \cref{fig:ratchet}(right) for various values of $\numparticles$ and $\alpha$. Interestingly, lower values of $\alpha$ (higher values of asymmetry) have higher efficiency values.

In the $\numparticles\to\infty$ limit, accessible information and efficiency converge to a single value, irrespective of $\alpha$. In \cref{app:ratchet}, we show that the accessible information $\IaccS[\projMbase]$ converges to  $1/\pi\approx 0.32$ nats, while the efficiency $\IaccS[\projMbase]/\IXMs$ converges to $1/(\pi\ln2)\approx 0.46$ (dotted lines in \cref{fig:ratchet}).

For a different (and complementary) theoretical analysis of extracted work in a feedback controlled flashing ratchet, see \cite{feito_information_2007}.

\begin{figure}[t!]
\includegraphics[valign=c,width=1\columnwidth]{\figdir/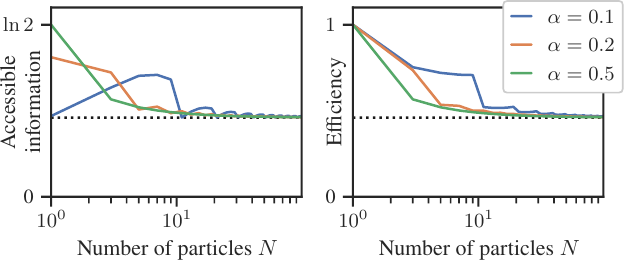}

\caption{
Left: accessible information $\IaccS[\projMbase]$ for the collective flashing ratchet, as a function of $\numparticles$ (number of particles) and $\alpha$ (asymmetry). Right: the efficiency of the measurements, $\IaccS[\projMbase]/\IXMs$. 
\label{fig:ratchet}}
\end{figure}

\section{Coarse-grained constraints}
\label{sec:cg}

In our final results section, we consider bounds on EP and work that arise from coarse-grained constraints.  

We begin by introducing some notation and preliminaries. Let $\cgf:\sX\to\CG$ be some coarse-graining
of the microscopic state space $\sX$, where $\CG$ is a set
of macrostates.  For any distribution $p$ over $\sX$, we use $\cgP(z)=\int \deltaFunc[\cgf(x)](z) p(x) \,dx$  
to indicate the corresponding distribution over the macrostates $\CG$,  
 $p_{X\vert Z}(x|z)=p(x)/\cgP(z)$ to indicate
the conditional probability distribution of microstates within macrostates, and $\cgDset := \{\cgP : p\in \Dset\}$ to indicate the set of all coarse-grained  distributions. Finally,
for any generator $\LL$ and distribution $p$, we use $[\LL p]_Z$  to indicate the resulting instantaneous dynamics of the coarse-grained distribution $\cgP$.

To derive our bounds, we suppose that the dynamics over the coarse-grained distributions are closed, i.e., %
for all $\LL\in\LLL$, 
\begin{align}
\cgP  = \cgP [q] \implies [L p]_Z = [L q]_Z 
\qquad\forall p,q\in\Dset.
\label{eq:closed0}
\end{align}
Given this assumption, the evolution of the coarse-grained distribution $\cgP $ can be represented by a coarse-grained  generator, which we write as $\ppt \cgP = \cgLL \cgP$ (discussed in detail below). 

We can specify more concrete conditions that guarantee that  \cref{eq:closed0} holds 
for a given generator $\LL$ (see \cref{appsec:cg} for details).  
For a discrete-state rate matrix $\LL$, it is satisfied when %
\begin{equation}
\sum_{{x:\cgf(x)=z}} \Lji={\cgLL}_{z,\cgf(x')}\quad\forall x',z\ne \cgf(x'),
\label{eq:closed0ME}
\end{equation}
where ${\cgLL}_{z,z'}$ is some coarse-grained transition rate from macrostate  $z'$ to macrostate $z$. 
\cref{eq:closed0ME} states that %
for each microstate $x'$, 
the total rate of transitions from $x'$ 
to microstates located in another macrostate $z \ne \cgf(x')$ 
 depends only on the macrostate $\cgf(x')$,  not on $x'$ directly. 
This condition has been sometimes called ``lumpability'' in the literature~\cite{nicolis2011transformation}. 

For a continuous-state master equation, \cref{eq:closed0} is satisfied when a continuous-state version of \cref{eq:closed0ME} (with sums replaced by integrals) holds. 
Moreover, for certain Fokker-Planck equation and linear coarse-graining functions, %
\cref{eq:closed0ME} can be replaced by a simple coarse-graining condition on the energy functions. %
Suppose  
each $\LL\in\LLL$ is a Fokker-Planck operator like
\begin{equation}
\LL p =\nabla\cdot(\nabla E_\LL) p +\dL\Delta p,\label{eq:cgFP}
\end{equation}
and that $\cgf$ is a linear function, $\cgf(x) = W x$ 
(where $W$ is some full-rank $m\times n$ matrix, $m\le n$). Without loss of generality, we assume that $W$ is scaled so that $WW^T=I$~\footnote{If \unexpanded{$\cgf(x)=Wx$} and \unexpanded{$WW^T\ne I$}, one can define an equivalent, rescaled coarse-graining function \unexpanded{$\cgf'(x)=W'x$}, where \unexpanded{$W':=(WW^T)^{-1/2}W$}, which obeys \unexpanded{$W'W'^T=I$}.}.
In addition, suppose that each energy function satisfies %
\begin{align}
W \nabla E_L(x)=-\cgEnergy(\cgf (x))\quad \forall x
\label{eq:closed0FP}
\end{align}
for some arbitrary macrostate drift function $\cgEnergy:Z\to \mathbb{R}$.
Then, the coarse-grained generator $\cgLL$  itself will have a Fokker-Planck form (see \cite{duong2018quantification} and \cref{appsec:cg}),
\begin{align}
\cgLL \cgP  = 
- \nabla\cdot\cgEnergy \cgP +\invBeta \Delta\cgP.\label{eq:cgfp2}
\end{align}
The right side of \cref{eq:cgfp2} depends only on $\cgP$ and not the full  microstate distribution $p$, so \cref{eq:closed0} will be satisfied.

Importantly, if \cref{eq:closed0} holds, the EP rate at time $t$ can be bounded as (see \cref{appsec:cg}):
\begin{align}
\EPr(p(t),L(t))\ge -\sum_{z} \ppt \cgP(z,t) \ln \frac{ \cgP(z,t)}{ \cgP[\pi]^{\LL(t)}(z)}
\ge0,%
\label{eq:macroNaineq}
\end{align}
where $\ppt \cgP(t) = \cgLL \cgP(t)$ and 
$\cgP[\pi]^{\LL(t)}$ is the coarse-grained version of $\pi^{\LL(t)}$, the stationary distribution of $\LL(t)$. %
The right hand side of \cref{eq:macroNaineq} is the coarse-grained version of \cref{eq:naBound}, %
which arises from the macrostate distribution $\cgP$ being out of equilibrium. We then define the total ``coarse-grained EP'' over the course of the protocol as the time integral of the middle term in \cref{eq:macroNaineq},
\begin{align}
\cgEPna(\dTrans{\cgP}{\cgPf})=\int_0^\ft -\sum_{z} \ppt \cgP(z,t) \ln \frac{ \cgP(z,t)}{ \cgP[\pi]^{\LL(t)}(z)}\; dt.
\label{eq:coarsegrainedEP}
\end{align}
Given \cref{eq:macroNaineq}, the coarse-grained EP serves as a non-negative lower bound on the total EP,
\begin{align}
\EP(\ptpp) \ge \cgEPna(\dTrans{\cgP}{\cgPf})\ge0. 
\label{eq:cgEP2}
\end{align}
Note that \cite{esposito2012stochastic} previously derived a coarse-grained EP rate for discrete-state master equations, which differs from the one that appears on the right hand side of \cref{eq:macroNaineq}; however, \cref{eq:macroNaineq} can be seen as the ``nonadiabatic component'' of the coarse-grained EP rate from~\cite{esposito2012stochastic}, and is thus a lower-bound on it~\cite{esposito2010three}.

We say that the available driving protocols obey \emph{coarse-grained constraints} %
if the generators $\LL \in\LLL$ exhibit closed  dynamics over $Z$, \cref{eq:closed0},  and there is some operator $\cgPhi : \cgDset \to \cgDset$ that obeys  the Pythagorean identity, \cref{eq:pyth},  and the commutativity relation, \cref{eq:comm0}, with respect to all $\cgLL$. For example, this coarse-grained operator $\cgPhi$ might reflect the presence of symmetry or modularity constraints on the coarse-grained dynamics. 

We can then use \cref{eq:cgEP2} and the framework developed in \cref{sec:info-geom-framework} to derive bounds on work and EP.  
In particular,  \cref{eq:epfreebound}  implies the following bound on coarse-grained EP,
$\cgEPna(\dTrans{\cgP}{\cgPf})\ge D(\cgP\Vert \cgPhi(\cgP)) - D(\cgPf\Vert \cgPhi(\cgPf))\ge0$.
Combined with \cref{eq:cgEP2}, this lets us bound overall EP as
\begin{align}
\EPp \ge D(\cgP\Vert \cgPhi(\cgP)) - D(\cgPf\Vert \cgPhi(\cgPf)) \ge 0.
\label{eq:epfreeboundcg}
\end{align}
Via \cref{eq:EPw}, this also gives a bound on extractable work like
\begin{multline}
\Wp \le \NFEs[\ps]-\NFEf[\pf] - \\ [D(\cgP\Vert \cgPhi(\cgP)) - D(\cgPf\Vert \cgPhi(\cgPf))]/\beta.\label{eq:workfreeboundcg}
\end{multline}
\cref{eq:epfreeboundcg,eq:workfreeboundcg} can also be used to derive bounds on average work extraction in feedback control protocols, using the strategy described in \cref{sec:thermo-of-info}.

If $\cgPhi$ represents coarse-grained symmetry or modularity constraints,  then \cref{eq:epfreeboundcg} implies that any 
asymmetry or inter-subsystem correlation in the macrostate distribution can only be dissipated away, not turned into work.
Another simple application %
occurs when all $\LL\in\LLL$ have the same coarse-grained equilibrium distribution, 
i.e., there is some $\cgP[\pi]$
such that $\cgLL\cgP[\pi]=0$ for all $\LL$. 
In this case, %
$\cgPhi(p)=\cgP[\pi]$ satisfies \cref{eq:pyth,eq:comm0} at the coarse-grained level (compare to the derivation of \cref{eq:e2} above). Applying \cref{eq:epfreeboundcg} then gives  %
\begin{align}
\EPp \ge D(\cgP \Vert \cgP[\pi]) - D(\cgPf\Vert \cgP[\pi]) \ge 0,
\label{eq:cgEPf}
\end{align}
as well as a corresponding extractable work bound, as in \cref{eq:workfreeboundcg}. 
This shows that if the coarse-grained equilibrium distribution $\cgP[\pi]$ cannot change, then any
deviation between the actual coarse-grained distribution $\cgP$ and $\cgP[\pi]$ must be dissipated
as EP, not turned into work.

\subsection{Example: Szilard box}
\label{subsec:Szilard-box-example-cg}

\begin{figure}
\begin{centering}
\includegraphics[width=0.5\columnwidth]{}
\par\end{centering}
\caption{A two-dimensional Szilard box with a Brownian particle, in the presence
of gravity.\label{fig:szilard-grav}}
\end{figure}

We demonstrate our results on coarse-grained constraints using the Szilard box. 
We consider a similar setup as in \cref{subsec:Szilard-box-example,subsec:Szilard-box-example-mod}, where there is a  single overdamped particle in a box  coupled to a bath at inverse temperature $\beta=1$. 
However, we now assume  that there is a vertical gravitational force, as illustrated in \cref{fig:szilard-grav}. Formally, this means that the available potential  energy functions
have the form
\begin{equation}
E_{\paramA}(x_{1},x_{2})=\Vp(x_{1}-\paramA)+\Vw(|x_{1}|)+\Vw(|x_{2}|)+\grav x_{2},\label{eq:modh}
\end{equation}
where $\grav$ is a fixed constant that determines the strength of gravity. %
Unlike \cref{eq:ham0}, this energy function in \cref{eq:modh} no longer obeys
the reflection symmetry $(x_{1},x_{2})\mapsto(x_{1},-x_{2})$.

The microstate of the particle is represented by the horizontal and vertical position, $x=(x_1,x_2)$. We  consider a coarse-graining  in which %
the macrostate is the vertical coordinate of the particle $Z=X_2$, corresponding to the coarse-graining function $\cgf(x_1,x_2)=Wx=x_2$ with $W=[0\;1]$. 
It is easy to check that the potential energy functions in \cref{eq:modh} satisfy %
\begin{align}
W \nabla E_{\paramA}(x)=%
\partial_{x_{2}}[\Vw(|x_{2}|)+\grav x_{2}],
\label{eq:cgZ0}
\end{align}
which obeys \cref{eq:closed0FP} and therefore guarantees that the coarse-grained dynamics are closed.  
In fact, the coarse-grained generators have the Fokker-Planck form of \cref{eq:cgfp2} 
with the coarse-grained drift function $\cgEnergy(x_2) = -\partial_{x_{2}}[\Vw(|x_{2}|)+\grav x_{2}]$, which leads to the following Boltzmann stationary distribution:
\begin{align}
\pi_{X_2}(x_{2})&\propto\boltz{-\beta[\Vw(|x_{2}|)+\grav x_{2}]} \nonumber \\
&=\mathbf{1}_{\boxExtent}(x_{2})\boltz{-\beta\grav x_{2}},
\label{eq:boltz}
\end{align}
where in the second line we used the form of  $\Vw(\cdot)$ from \cref{eq:potential1}. 
Since the coarse-grained equilibrium distribution is the same for all energy functions having the form \cref{eq:modh}, we can use the EP bound in   
\cref{eq:cgEPf}.

\newcommand{\poCG}{\pi^\varnothing}
\newcommand{\eCG}{E^\varnothing}

Suppose that the system starts from some initial distribution $\ps$ and is then driven to a final equilibrium distribution $\pf$ while extracting work. We assume that the partition is removed at the beginning and end of the protocol, corresponding to the energy function $\eCG(x_1,x_2)=\Vw(|x_{1}|)+\Vw(|x_{2}|)+\grav x_{2}$, with the Boltzmann distribution
\begin{align}
\poCG(x_1,x_2) \propto \mathbf{1}_{\boxExtent^2}(x_1,x_{2})\boltz{-\beta\grav x_{2}}.
\end{align}
We will also assume that the final distribution is in equilibrium, so $\pf=\poCG$. Then, the extractable work involved in this transformation can be expressed as
\begin{align}
W(\dTrans{\ps}{\poCG}) &= \NFE[p][\eCG]-\NFE[\poCG][\eCG] -\EP(\dTrans{\ps}{\poCG})\nonumber\\
& =  D(p\Vert \poCG)-\EP(\dTrans{\ps}{\poCG}),\label{eq:ffd2}
\end{align}
where we used \cref{eq:EPw,eq:feffdecomp0}. We can then upper bound extractable work by combining \cref{eq:ffd2} with various lower bounds on $\EP(\dTrans{\ps}{\poCG})$.  

For instance, the second law states that $\EP(\dTrans{\ps}{\poCG})\ge0$, so
\begin{align}
W(\dTrans{\ps}{\poCG})  \le  D(p\Vert \poCG).\label{eq:ffd3}
\end{align}
We can also derive a stronger bound by exploiting coarse-grained constraints. For the coarse-graining described above, \cref{eq:cgEPf} implies that $\EP(\dTrans{\ps}{\poCG}) \ge D(p_{X_2} \Vert \pi_{X_2})$, which gives the bound
\begin{align}
W(\dTrans{\ps}{\poCG}) &\le D(p  \Vert \poCG)-D(p_{X_2} \Vert \pi_{X_2})\nonumber \\
& =D(p_{X_1\vert X_2} \Vert \poCG_{X_1\vert X_2}) .\label{eq:ffd4}
\end{align}
We can also bound EP and work using other kinds of constraints. For instance, the energy functions in \cref{eq:modh} have no interaction terms between $x_1$ and $x_2$, and therefore obey modularity constraints for the decomposition  $\modDecomp = \{\{X_1\},\{X_2\}\}$ (see the analysis in \cref{subsec:Szilard-box-example-mod}).  This allows us to bound EP and work using the operator $\projMbase$, as defined above in \cref{eq:twirlMsz}. 
In particular, using \cref{thm:resInt}, we have that
\begin{align}
\EP(\dTrans{p}{\poCG}) &= D(p\Vert \projM{p}) + \EP(\dTrans{\projM{p}}{\poCG})\label{eq:ffd5}\\
&\ge D(p\Vert \projM{p}).\nonumber
\end{align}
which implies the extractable work bound
\begin{align}
W(\dTrans{\ps}{\poCG}) &\le D(p\Vert \poCG) - D(p \Vert \projM{p})=D(\projM{p} \Vert \poCG) .
\label{eq:ffd6}
\end{align}

Finally, we can also combine modularity and coarse-grained constraints. The coarse-grained constraints implies that $\EP(\dTrans{\projM{p}}{\poCG})\ge D(\projM{p}_{X_2}\Vert \pi_{X_2})$ by \cref{eq:cgEPf}. Plugged into \cref{eq:ffd5}, this gives
\begin{align}
\EP(\dTrans{p}{\poCG}) \ge D(p\Vert \projM{p}) + D(\projM{p}_{X_2}\Vert \pi_{X_2}),
\end{align}
resulting in the extractable work bound
\begin{align}
W(\dTrans{\projM{p}}{\poCG}) \le D(\projM{p}_{X_1\vert X_2}\Vert \poCG_{X_1\vert X_2}),
\label{eq:ffd7}
\end{align}
where we've again used the chain rule of KL divergence.

\begin{figure}
\begin{centering}
\includegraphics[height=0.4\columnwidth]{}
\par\end{centering}
\caption{%
Szilard box with gravity: bounds on extractable work %
 as a function of $\theta$, as derived from the second law (in blue, \cref{eq:ffd3}), coarse-grained constraints (in orange, 
\cref{eq:ffd4}), modularity constraints (in green, \cref{eq:ffd6}), and a combination of modularity+coarse-grained constraints (in red, \cref{eq:ffd7}). %
\label{fig:szilard-cg-data}}
\end{figure}

We now illustrate these bounds using a concrete set of initial distributions.
Imagine that the initial distribution $p$ is the equilibrium distribution $\poCG$ restricted to half the box, as determined by a rotated separating line  at some angle 
$\theta \in [-\pi,\pi]$,
\begin{align}
\ptheta(x_1,x_2)= \frac{1}{2}\poCG(x_1,x_2)\Theta(x_2 \sin \theta - x_1 \cos\theta).
\end{align}
(Compare to \cref{eq:degMeas}, for the Szilard box without gravity). 
For these initial distributions and gravity parameter $\grav=1$, we plot the four extractable work bounds derived above, \cref{eq:ffd3,eq:ffd4,eq:ffd6,eq:ffd7}, as a function of $\theta$ in \cref{fig:szilard-cg-data} (values are calculated numerically). Note that, unlike the results presented in \cref{fig:szilard-symm-data,fig:szilard-mod-data}, the plots are no longer symmetric under the transformation $\theta \mapsto -\theta$. This arises because gravity breaks the vertical reflection symmetry, so the nonequilibrium free energy of a distribution concentrated on the top half of the box ($\theta=\pi/2$) is greater than the nonequilibrium free energy of a distribution concentrated on the bottom half of the box ($\theta=-\pi/2$). It can also be seen that work bounds derived from coarse-grained constraints, \cref{eq:ffd4} (orange), can be either weaker or stronger than the work bounds derived from modularity constraints, \cref{eq:ffd6} (green), depending on the value of $\theta$. For all $\theta$, however, the work bound derived by combining both constraints, \cref{eq:ffd7} (red), is stronger than the work bound derived from either constraint individually.

\section{Relevant literature}
\label{sec:priorwork}

In previous work on the general topic of {thermodynamic bounds under constraints},  Wilming et al.~\citep{wilming_second_2016} considered how extractable work depends on constraints
on the Hamiltonian, given a  quantum system coupled to
a finite-sized heat bath. That
paper derived an upper bound on the work that could be extracted by
carrying out a physical process which consists of sequences
of (1) unitary transformations of the system and bath, and (2) total
relaxations of the system to some equilibrium Gibbs state (see also a similar setup for closed systems in \cite{perarnau-llobet_work_2016}). 
Building on~\citep{wilming_second_2016}, \cite{lekscha_quantum_2018} analyzed the efficiency of a heat engine coupled to two baths and subject to ``local control'' constraints (i.e., a many particle system where local Hamiltonians can be changed but the interaction Hamiltonians cannot). %
In contrast to these works, 
we consider a classical system coupled to idealized reservoir(s). %
We then derive bounds on EP and work for a much broader set of
protocols.

At a high level,  our approach complements previous research on the relationship between EP, extractable work and different aspects of the driving protocol, such as temporal duration~\citep{esposito2010finite,sivak2012thermodynamic,shiraishi_speed_2018,gomez2008optimal,then2008computing,zulkowski2014optimal,schmiedl2007optimal},
stochasticity of control parameters \citep{machta2015dissipation},
non-idealized work reservoirs \citep{verley_work_2014}, cyclic protocols~\cite{schmiedl2007optimal,allahverdyan2004maximal}, the presence of additional conservation laws~\cite{uzdin2021passivity}, and the design of ``optimal protocols''~\citep{solon2018phase,gingrich2016near,aurell2011optimal}.

There is also previous work related to our analysis of {thermodynamics of information under constraints} in \cref{sec:thermo-of-info}.
\citep{still2020thermodynamic} recently analyzed the thermodynamics of feedback control under a somewhat different formulation of constraints~\footnote{That paper proposed to divide the system  into two subsystems $Y$ and $\bar{Y}$, such that the accessible information is given by $I(M;Y)$, under three assumption:  (1) the system's marginal distributions remains constant during all steps of feedback control, (2) the conditional distribution of $\bar{Y}$ given the system and the measurement does not change during the driving, and (3) all conditional information about $\bar{Y}$ is lost by the time that driving begins. After private communication with the author of \cite{still2020thermodynamic}, we think that condition (3) may need to be formalized as \unexpanded{$p(\bar{y}(t_2)\vert y(t_2), z(t_0))=p(\bar{y}(t_2)\vert y(t_2))$}, although this equation does not appear in that paper.}. 
In this work, we analyze the thermodynamics of information for a broader set of constraints. It is not immediately clear how the framework in \cite{still2020thermodynamic} compares to ours, or whether it could be applied to the examples considered in this paper, although such a comparison is an interesting direction for future work.

Some of our results concerning {work extraction under modularity constraints} in \cref{sec:modularity}
have appeared in prior literature. \cref{eq:ep-mod-mi} was derived in~\citep{Boyd:2018aa} 
for the special case of an isothermal processes with two non-overlapping subsystems, where one of the subsystems is held fixed. 
For the more general case of an arbitrary discrete-state system coupled to one or more reservoirs which have rate matrices as in \cref{eq:mcond}, 
\cref{eq:ep-mod-mi}  was also previously derived  in \cite{wolpert_thermo_comp_review_2019,wolpert2020thermodynamic}, while 
\cref{eq:ep-mod-tc} %
was previously derived in \cite{wolpert_thermo_comp_review_2019,wolpert.thermo.bayes.nets.2020,wolpert2020thermodynamic}.
Decompositions with overlap were previously considered in \cite{wolpert2020fluctuation,wolpert2020minimal}. %
In addition, Example\,1 in \cite{wolpert2020strengthened} can be used to derive the first inequality \cref{eq:ep-mod-condmi} for discrete-state systems~\footnote{The reader should be aware that those papers used different terminology
from this paper. In \cite{wolpert2020fluctuation,wolpert2020minimal}, each degree of freedom 
\unexpanded{$v \in V$} is called a ``subsystem'', the modular decomposition \unexpanded{$\modDecomp$} is called a ``unit structure'', while each \unexpanded{$\modSubsys\in\modDecomp$} is called a ``unit''.}. 

Those papers also derived some results that were more general than the ones derived here, in that they apply
even if the overlap changes state. 
Our paper goes beyond this previous work though to include continuous-state systems, and to derive inequalities such as $D(p\Vert \projM {p})-D(p'\Vert \projM{\pf})\ge 0$, albeit for the more
restricted scenario where the overlap does not change state. 

Some of our results concerning {work extraction under symmetry constraints}, presented in \cref{sec:Symmetry-constraints}, 
appeared in previous work on quantum thermodynamics. %
For a finite-state quantum system coupled to a work reservoir and heat bath, Vaccaro et al.~\cite{vaccaro_tradeoff_2008} investigated %
how much work can be extracted by bringing some initial quantum state $\rho$ to a maximally mixed state, with a uniform initial and final Hamiltonian, using discrete-time operations that commute with the action of some symmetry group $\G$. %
It was shown that the work that can be extracted from $\rho$ under such transformations is equal to the work that can be extracted from the (quantum) twirling $\projG \rho$, analogous to \cref{eq:workeq} for symmetry constraints. This research also derived an operational measure of asymmetry that is the quantum equivalent of $D(p\Vert\projG p)$,
and showed that asymmetry can only decrease under operations that commute with $\G$. 
Janzing~\cite{janzing_quantum_2006} extended \cite{vaccaro_tradeoff_2008} to consider arbitrary Hamiltonians, in the process deriving analogues of our decomposition of free energy (\cref{eq:feffdecomp}) for the special case of the twirling operator $\projGbase$.  A similar decomposition of free energy into coherent and incoherent components has recently appeared in \cite{lostaglio_description_2015,santos_role_2019} (this is a special case of the result in \cite{janzing_quantum_2006}, since a decohering map is a twirling operator~\cite{elphick2019spectral}). 
Finally, the idea of probability distributions that are invariant under symmetry groups, as well as a version of the twirling operator $\projGbase$, is a topic of research in probability and statistics; for details, see Ch.~3 in \cite{eaton_group_1989}.

While our approach is restricted to classical systems, in some respects our results for symmetry constraints are more general 
than this earlier work, since they hold for arbitrary (discrete and/or uncountably infinite) state spaces and for systems coupled to more than one reservoir (see \cref{sec:conclusion}). Moreover, for Fokker-Planck dynamics, we derive simple conditions for symmetry constraints stated in terms of the energy functions, which makes these results applicable to a large set of problems in stochastic thermodynamics and biophysics.

More fundamentally, one of the ways in which we go beyond previous literature on symmetry and modularity constraints is that by providing a unified mathematical framework that applies to a broad set of constraints, including symmetry, modularity, and coarse-grained constraints (as well as their combinations) as special cases. A key idea in our framework is that the information-geometric Pythagorean identity, \cref{eq:pyth}, is the essential property that allows an operator $\klopBase$ to uncover the thermodynamically accessible part of any distribution $p$ (assuming also that $\klopBase$ commutes with the dynamics). The Pythagorean identity is satisfied by many $\klopBase$, including both linear operators such as twirling operators $\projGbase$ and nonlinear operators such as modular decomposition operators $\projMbase$.  We believe this idea can be extended to the quantum domain, though we leave this for future work.

Finally, our approach is also related to ``resource theories'', which are an active area of research
in various areas of quantum physics~\cite{chitambar2019quantum}, including quantum thermodynamics~
\cite{wilming_second_2016,gallego_thermodynamic_2016,brandao_resource_2013,lostaglio_stochastic_2015,faist_fundamental_2018,yunger_halpern_beyond_2016}. 
A resource theory quantifies %
a physical resource %
in an operational way, in terms of what transformations are possible when the resource is available. 
Most resource theories are based on a common set of formal elements, such as a %
\emph{resource quantifier} (a real-valued function that measures the amount of a resource), a set of \emph{free states} (statistical states that lack the resource), and \emph{free operations} (transformations between statistical states that do not increase the amount of resource). 
In fact, some previous work on symmetry constraints in quantum thermodynamics~\cite{vaccaro_tradeoff_2008,janzing_quantum_2006} can be seen as part of a broader literature on the resource theory of asymmetry~\cite{marvian_extending_2014,marvian_asymmetry_2014,marvian_modes_2014}. 

Our approach has similar operational motivations as %
resource theories; for example, we define  ``accessible free energy'' in an operational way, as a quantity that governs extractable work under protocol constraints. %
Moreover, many elements of our framework %
are analogous to elements of the resource theory framework: the set of allowed generators (which we call $\LLL$) plays the role of the free operations, the image of the  operator $\klopBase$ plays the role of the set of free states, and the KL divergence $D(p\Vert \klop p)$ serves as the resource quantifier.  In addition,
the commutativity relation \cref{eq:comm0} (see \cref{sec:info-geom-framework}) has recently appeared in work on so-called %
resource destroying maps~\cite{liu2017resource}.  
However, unlike most resource theories, our focus is on the thermodynamics of classical systems modeled as driven continuous-time open systems. 
Further exploration of the connection between our approach and resource theories is left for future work.

\section{Discussion}
\label{sec:conclusion}

In this paper, we analyzed the EP and work incurred by a driving protocol that carries out some transformation $\ptpp$, while subject to constraints on the set of available generators. %
We constructed a general framework that allowed us  derive several decompositions and bounds on EP and extractable work, and demonstrated that this framework has implications for the thermodynamics of feedback control under constraints. Finally, we used our framework to analyze three broad classes of protocol constraints, reflecting symmetry, modularity, and coarse-graining.

Note that our bounds on EP and extractable work, %
such as \cref{eq:epfreebound,eq:ourworkbound}, are expressed in terms of state functions, i.e., they depend only on the initial and final distributions $\ps$ and $\pf$ and not on the path that the system takes in going from $\ps$ to $\pf$.  In general, it may  be possible to derive other %
bounds on work and EP that are not written in this form, which may be tighter. 
Nonetheless, bounds written in terms of state functions have some important advantages. 
In particular, they  allow one to quantify the  inherent ``thermodynamic value''  (in terms of EP and work) of a  distribution $p$ relative to a set of available generators, %
irrespective of what protocol brought the system there or what future protocols that system may undergo (as long as those protocols obey the relevant constraints).

For simplicity, our results were derived for isothermal protocols, where the system is coupled to a single heat bath at a constant inverse temperature $\beta$ and obeys local detailed balance (LDB).  Nonetheless, many of our results continue to hold for more general protocols, in which the system is coupled to any number of thermodynamic reservoirs and/or violates LDB. 
For a general protocol, our EP rate in \cref{eq:naBound} refers to the  so-called \emph{nonadiabatic EP rate}~\cite{van2010three,esposito2010three,lee_fluctuation_2013}, which is a non-negative quantity that reflects the %
contribution to EP that is due to the system being out of the stationary distribution. %
In the general case, our decompositions in \cref{thm:klinst,thm:resInt}, as well as EP lower bounds in \cref{eq:epfreebound,eq:EPmeas}, apply to nonadiabatic EP, rather than overall EP.  
Importantly, the nonadiabatic EP rate is a lower bound on the overall EP rate whenever the stationary distribution of $\LL$ is symmetric under conjugation of odd-parity variables~\cite{lee_fluctuation_2013}, which holds in most cases of interest such as discrete-state master equations (which typically have no odd variables), overdamped dynamics (which have no odd variables), and many types of underdamped dynamics. In such cases, \cref{eq:epfreebound,eq:EPmeas} provide lower bounds not only on  the nonadiabatic EP, but also on the overall EP, regardless of the number of coupled reservoirs or LDB.
However, the relationship between work and EP in \cref{eq:EPw}, as well as our bounds on work which make use of this relationship such as \cref{eq:workeq,eq:ourworkbound}, hold only for isothermal protocols.
Note that our EP bound for closed coarse-grained dynamics, \cref{eq:macroNaineq}, concerns the overall EP rate, \emph{not} the nonadiabatic EP rate, even for non-isothermal protocols (see \cref{subsec:cgNabound} for details).

There are several possible directions for future research.

First, it remains an open question of whether our framework can also be used to analyze other classes of constraints, beyond the three classes (symmetry, modularity, and coarse-graining) considered in this paper.

Second, our results point to a novel connection between entropy production, which plays a central role in nonequilibrium thermodynamics, and the Pythagorean identity in \cref{eq:pyth}, which plays a central role in  information geometry. This contributes to the growing number of existing results that demonstrate formal relationships between information geometry and nonequilibrium thermodynamics~\cite{ito2018stochastic,takahashi2017shortcuts,ito2020unified,nicholson2018nonequilibrium,ito2020stochastic,nakamura2019reconsideration}.  One direction for future work would be to extend the framework developed in this work for classical to quantum systems. In this extension, one would derive bounds on quantum work and EP by considering a quantum operator $\klopBase$ over density matrices which obeys quantum analogues of the Pythagorean identity in \cref{eq:pyth}~\cite[p.~44]{petzQuantumInformationTheory2008} and the commutativity relation in \cref{eq:comm0}.

Finally, our results may also lead to some new treatments of foundational questions in thermodynamics. In stochastic thermodynamics, probability distributions over system states are usually interpreted in  a ``subjective'' sense, in that the distribution $p$ assigned to a system typically reflects \emph{what one knows} about the system (for this reason, this distribution changes once  a measurement is made of the system's state~\cite{parrondo2015thermodynamics}). At the same time, our results show that for constrained driving protocols, one can often assign a different distribution to the system,  $\klop p$, which reflects \emph{what one can control} about the system. This also leads to the difference between the overall nonequilibrium free energy,  defined in terms of the distribution $p$, and the accessible free energy, defined in terms of the distribution $\klop p$. 
Note that thermodynamic entropy is often understood in an operational way, e.g., in terms of constrained  macroscopic control, as has been previously discussed by Jaynes~\cite{jaynes1992gibbs} and others.  An interesting direction for future work would explore whether the distinction between the distributions $p$ and $\klop p$ maps onto the distinction between (microscopic) statistical mechanical entropy and (macroscopic) thermodynamic entropy. In particular, one might ask whether this mapping can resolve some classic paradoxes concerning the relationship between statistical mechanical and thermodynamic entropy, 
such as the Gibbs paradox~\cite{jaynes1992gibbs} (mixing of indistinguishable particles increases statistical mechanical entropy but not thermodynamic entropy) and Loschmidt's paradox (for an isolated Hamiltonian system, statistical mechanical entropy remains constant while the thermodynamic entropy can increase). This direction could also be related to a recent axiomatic treatment of thermodynamic entropy which has been developed within the framework of quantum resource theory~\cite{weilenmann2016axiomatic}.

\section*{Acknowledgments}

We thank Massimiliano Esposito and Henrik Wilming for helpful discussions. This research
was supported by grant number FQXi-RFP-IPW-1912 from the Foundational
Questions Institute and Fetzer Franklin Fund, a donor advised fund
of Silicon Valley Community Foundation. The authors thank the Santa
Fe Institute for helping to support this research.

\noindent \bibliographystyle{IEEEtran}
\bibliography{refs}

\begin{thebibliography}{100}
\providecommand{\url}[1]{#1}
\csname url@samestyle\endcsname
\providecommand{\newblock}{\relax}
\providecommand{\bibinfo}[2]{#2}
\providecommand{\BIBentrySTDinterwordspacing}{\spaceskip=0pt\relax}
\providecommand{\BIBentryALTinterwordstretchfactor}{4}
\providecommand{\BIBentryALTinterwordspacing}{\spaceskip=\fontdimen2\font plus
\BIBentryALTinterwordstretchfactor\fontdimen3\font minus
  \fontdimen4\font\relax}
\providecommand{\BIBforeignlanguage}[2]{{%
\expandafter\ifx\csname l@#1\endcsname\relax
\typeout{** WARNING: IEEEtran.bst: No hyphenation pattern has been}%
\typeout{** loaded for the language `#1'. Using the pattern for}%
\typeout{** the default language instead.}%
\else
\language=\csname l@#1\endcsname
\fi
#2}}
\providecommand{\BIBdecl}{\relax}
\BIBdecl

\bibitem{takara_generalization_2010}
K.~Takara, H.-H. Hasegawa, and D.~Driebe,
  ``\BIBforeignlanguage{en}{Generalization of the second law for a transition
  between nonequilibrium states},'' \emph{\BIBforeignlanguage{en}{Physics
  Letters A}}, vol. 375, no.~2, pp. 88--92, Dec. 2010.

\bibitem{parrondo2015thermodynamics}
J.~M.~R. Parrondo, J.~M. Horowitz, and T.~Sagawa, ``Thermodynamics of
  information,'' \emph{Nature Physics}, vol.~11, no.~2, pp. 131--139, 2015.

\bibitem{esposito2011second}
M.~Esposito and C.~Van~den Broeck, ``Second law and {L}andauer principle far
  from equilibrium,'' \emph{EPL (Europhysics Letters)}, vol.~95, no.~4, p.
  40004, 2011.

\bibitem{Note1}
We use a Brownian model of the Szilard engine, which is similar to setups
  commonly employed in modern nonequilibrium statistical physics~\cite
  {berut2012experimental,roldan2014universal,koski2014experimental,shizume1995heat,gong2016stochastic,parrondo2015thermodynamics},
  as shown in \protect \cref {fig:szilard0}. This model can be justified by
  imagining a box that contains a large colloidal particle, as well as a medium
  of small solvent particles to which the vertical partition is permeable. Note
  that this model differs from Szilard's original proposal~\cite
  {szilard1929entropieverminderung}, in which the box contains a single
  particle in a vacuum, which has been analyzed in~\cite
  {proesmans2015efficiency,hondou2007equation,bhat2017unusual}.

\bibitem{sagawa2008second}
T.~Sagawa and M.~Ueda, ``Second law of thermodynamics with discrete quantum
  feedback control,'' \emph{Physical Review Letters}, vol. 100, no.~8, p.
  080403, 2008.

\bibitem{Note2}
As common in the literature, in \protect \cref {eq:workbound-2a} we consider
  only the work that is extractable from the system after the measurement is
  made. We do not account for the possible work cost of making the measurement,
  nor any work exchanges that may be incurred by the measurement apparatus
  during the driving.

\bibitem{corning1998thermodynamics}
P.~A. Corning and S.~J. Kline, ``Thermodynamics, information and life
  revisited, part {II}: `{T}hermoeconomics' and `{C}ontrol information',''
  \emph{Systems Research and Behavioral Science: The Official Journal of the
  International Federation for Systems Research}, vol.~15, no.~6, pp. 453--482,
  1998.

\bibitem{kolchinsky2018semantic}
A.~Kolchinsky and D.~H. Wolpert, ``Semantic information, autonomous agency and
  non-equilibrium statistical physics,'' \emph{Interface focus}, vol.~8, no.~6,
  p. 20180041, 2018.

\bibitem{kauffman2000investigations}
S.~A. Kauffman, \emph{Investigations}.\hskip 1em plus 0.5em minus 0.4em\relax
  Oxford University Press, 2000.

\bibitem{wolpert2020thermodynamic}
D.~H. Wolpert and A.~Kolchinsky, ``Thermodynamics of computing with circuits,''
  \emph{New Journal of Physics}, vol.~22, no.~6, p. 063047, 2020.

\bibitem{song2021optimal}
J.~Song, S.~Still, R.~D\'{\i}az Hern\'andez~Rojas, I.~P\'erez~Castillo, and
  M.~Marsili, ``Optimal work extraction and mutual information in a generalized
  szil\'ard engine,'' \emph{Phys. Rev. E}, vol. 103, p. 052121, May 2021.

\bibitem{cao2004feedback}
F.~J. Cao, L.~Dinis, and J.~M.~R. Parrondo, ``Feedback control in a collective
  flashing ratchet,'' \emph{Physical Review Letters}, vol.~93, no.~4, p.
  040603, 2004.

\bibitem{janzing_quantum_2006}
D.~Janzing, ``\BIBforeignlanguage{en}{Quantum {Thermodynamics} with {Missing}
  {Reference} {Frames}: {Decompositions} of {Free} {Energy} {Into}
  {Non}-{Increasing} {Components}},'' \emph{\BIBforeignlanguage{en}{Journal of
  Statistical Physics}}, vol. 125, no.~3, pp. 761--776, Nov. 2006.

\bibitem{vaccaro_tradeoff_2008}
J.~A. Vaccaro, F.~Anselmi, H.~M. Wiseman, and K.~Jacobs, ``Tradeoff between
  extractable mechanical work, accessible entanglement, and ability to act as a
  reference system, under arbitrary superselection rules,'' \emph{Phys. Rev.
  A}, vol.~77, p. 032114, Mar 2008.

\bibitem{Note3}
The assumption of unique stationary distributions can be relaxed as long as the
  operator $\klopBase $ (as discussed in \protect \cref
  {sec:info-geom-framework}) satisfies the following weak technical condition:
  for all $p\in \Dset $ and each stationary distribution $\pi $ of each $\LL
  \in \LLL $, $D(p\Vert \klop {\pi })<\infty $ whenever $D(p\Vert \pi )<\infty
  $. Note that $\klop {\pi }$ is also a stationary distribution of $\LL $ by
  \protect \cref {lem:stlem} in \protect \cref {app:theoretical}, so this
  condition is automatically satisfied when the generators have unique
  stationary distributions (since in that case $\pi =\klop {\pi }$). Note also
  that if some $\LL \in \LLL $ have multiple stationary distributions $\pi $,
  the corresponding EP rate in \protect \cref {eq:naBound} can be equivalently
  defined using any $\pi $ such that $D(p\Vert \pi )<\infty $.

\bibitem{esposito2010three}
M.~Esposito and C.~Van~den Broeck, ``Three faces of the second law. {I}.
  {M}aster equation formulation,'' \emph{Physical Review E}, vol.~82, no.~1, p.
  011143, 2010.

\bibitem{van1992stochastic}
N.~G. Van~Kampen, \emph{Stochastic processes in physics and chemistry}.\hskip
  1em plus 0.5em minus 0.4em\relax Elsevier, 1992, vol.~1.

\bibitem{risken1996fokker}
H.~Risken, ``Fokker-{P}lanck equation,'' in \emph{The {F}okker-{P}lanck
  Equation}.\hskip 1em plus 0.5em minus 0.4em\relax Springer, 1996, pp. 63--95.

\bibitem{van2010three}
C.~Van~den Broeck and M.~Esposito, ``Three faces of the second law. {II}.
  {F}okker-{P}lanck formulation,'' \emph{Physical Review E}, vol.~82, no.~1, p.
  011144, 2010.

\bibitem{ermak1978brownian}
D.~L. Ermak and J.~A. McCammon, ``Brownian dynamics with hydrodynamic
  interactions,'' \emph{The Journal of chemical physics}, vol.~69, no.~4, pp.
  1352--1360, 1978.

\bibitem{amari2016information}
S.-i. Amari, \emph{Information geometry and its applications}.\hskip 1em plus
  0.5em minus 0.4em\relax Springer, 2016, vol. 194.

\bibitem{Note4}
This is because $D(p\Vert q)\ge D(p\Vert \klop p)$ for any $q\in \imgKLOP $,
  which follows from \protect \cref {eq:pyth} and the non-negativity of KL
  divergence.

\bibitem{kolchinsky_constraints_paper2}
A.~Kolchinsky and D.~H. Wolpert, ``Entropy production given constraints on the
  energy functions,'' \emph{Phys. Rev. E}, vol. 104, p. 034129, Sep 2021.

\bibitem{seifert2012stochastic}
U.~Seifert, ``Stochastic thermodynamics, fluctuation theorems and molecular
  machines,'' \emph{Reports on Progress in Physics}, vol.~75, no.~12, p.
  126001, 2012.

\bibitem{kolchinsky2021state}
A.~Kolchinsky and D.~H. Wolpert, ``The state dependence of integrated,
  instantaneous, and fluctuating entropy production in quantum and classical
  processes,'' \emph{arXiv preprint arXiv:2103.05734}, 2021.

\bibitem{kwon2019fluctuation}
H.~Kwon and M.~S. Kim, ``Fluctuation theorems for a quantum channel,''
  \emph{Physical Review X}, vol.~9, no.~3, p. 031029, 2019.

\bibitem{Note5}
A compact group $\protect \mathcal {G}$ has a measurable action over $X$ if the
  action $\protect \mathcal {G}\times X \to X$ is a measurable function, where
  we assume $\protect \mathcal {G}$ and $X$ are endowed with their respective
  Borel algebras.

\bibitem{Note6}
Technically, the definition of the twirling operator in \protect \cref
  {eq:symmOop} applies only when $p$ is a finite-valued probability density
  function (which excludes things such as the Dirac delta ``function''). A more
  general formulation of our results can be developed in terms of probability
  measures rather than probability densities (see Ch.~3 in \cite
  {eaton_group_1989} for a version of \protect \cref {eq:symmOop} defined in
  terms of probability measures).

\bibitem{vollbrecht2001entanglement}
K.~G.~H. Vollbrecht and R.~F. Werner, ``Entanglement measures under symmetry,''
  \emph{Physical Review A}, vol.~64, no.~6, p. 062307, 2001.

\bibitem{Note7}
Technically, the wall potential as defined in \protect \cref {eq:potential1} is
  non-differentiable. To be more accurate, one should imagine it in terms of
  the limit $\Vw (|x|)=\lim _{\alpha \to \infty } |x|^\alpha $~\cite
  {dhar2019run}.

\bibitem{still2021partially}
S.~Still and D.~Daimer, ``Partially observable szilard engines,'' \emph{arXiv
  preprint arXiv:2103.15803}, 2021.

\bibitem{krapivsky_kinetic_2010}
P.~L. Krapivsky, S.~Redner, and E.~Ben-Naim, \emph{\BIBforeignlanguage{en}{A
  {Kinetic} {View} of {Statistical} {Physics}}}.\hskip 1em plus 0.5em minus
  0.4em\relax Cambridge University Press, Nov. 2010.

\bibitem{lekscha_quantum_2018}
J.~Lekscha, H.~Wilming, J.~Eisert, and R.~Gallego,
  ``\BIBforeignlanguage{en}{Quantum thermodynamics with local control},''
  \emph{\BIBforeignlanguage{en}{Physical Review E}}, vol.~97, no.~2, p. 022142,
  Feb. 2018.

\bibitem{gershenfeld1996signal}
N.~Gershenfeld, ``Signal entropy and the thermodynamics of computation,''
  \emph{IBM Systems Journal}, vol.~35, no. 3.4, pp. 577--586, 1996.

\bibitem{Boyd:2018aa}
A.~B. Boyd, D.~Mandal, and J.~P. Crutchfield, ``Thermodynamics of modularity:
  Structural costs beyond the landauer bound,'' \emph{Phys. Rev. X}, vol.~8, p.
  031036, Aug 2018.

\bibitem{schlosser2004modularity}
G.~Schlosser and G.~P. Wagner, \emph{Modularity in development and
  evolution}.\hskip 1em plus 0.5em minus 0.4em\relax University of Chicago
  Press, 2004.

\bibitem{sporns2016modular}
O.~Sporns and R.~F. Betzel, ``Modular brain networks,'' \emph{Annual review of
  psychology}, vol.~67, pp. 613--640, 2016.

\bibitem{Note8}
One can also apply the results in this section to Fokker-Planck equations that
  can be put in the form of \protect \cref {eq:fp9} via an appropriate change
  of variables, see \protect \citep [Sec. 4.9]{risken1996fokker}.

\bibitem{Note9}
The multi-information is a well-known generalization of mutual information,
  which is also sometimes called ``total correlation'' \protect \citep
  {watanabe1960information}.

\bibitem{craig_feedback_2008}
E.~Craig, N.~Kuwada, B.~Lopez, and H.~Linke, ``\BIBforeignlanguage{en}{Feedback
  control in flashing ratchets},'' \emph{\BIBforeignlanguage{en}{Annalen der
  Physik}}, vol.~17, no. 2-3, pp. 115--129, Feb. 2008.

\bibitem{feito_information_2007}
M.~Feito and F.~J. Cao, ``\BIBforeignlanguage{en}{Information and maximum power
  in a feedback controlled {Brownian} ratchet},''
  \emph{\BIBforeignlanguage{en}{The European Physical Journal B}}, vol.~59,
  no.~1, pp. 63--68, Sep. 2007.

\bibitem{lopez_realization_2008}
B.~J. Lopez, N.~J. Kuwada, E.~M. Craig, B.~R. Long, and H.~Linke,
  ``\BIBforeignlanguage{en}{Realization of a feedback controlled flashing
  ratchet},'' \emph{\BIBforeignlanguage{en}{Physical Review Letters}}, vol.
  101, no.~22, p. 220601, Nov. 2008.

\bibitem{nicolis2011transformation}
G.~Nicolis, ``Transformation properties of entropy production,'' \emph{Physical
  Review E}, vol.~83, no.~1, p. 011112, 2011.

\bibitem{Note10}
If $\cgf (x)=Wx$ and $WW^T\ne I$, one can define an equivalent, rescaled
  coarse-graining function $\cgf '(x)=W'x$, where $W':=(WW^T)^{-1/2}W$, which
  obeys $W'W'^T=I$.

\bibitem{duong2018quantification}
M.~H. Duong, A.~Lamacz, M.~A. Peletier, A.~Schlichting, and U.~Sharma,
  ``Quantification of coarse-graining error in {L}angevin and overdamped
  {L}angevin dynamics,'' \emph{Nonlinearity}, vol.~31, no.~10, p. 4517, 2018.

\bibitem{esposito2012stochastic}
M.~Esposito, ``Stochastic thermodynamics under coarse graining,''
  \emph{Physical Review E}, vol.~85, no.~4, p. 041125, 2012.

\bibitem{wilming_second_2016}
H.~Wilming, R.~Gallego, and J.~Eisert, ``Second law of thermodynamics under
  control restrictions,'' \emph{Phys. Rev. E}, vol.~93, p. 042126, Apr 2016.

\bibitem{perarnau-llobet_work_2016}
M.~Perarnau-Llobet, A.~Riera, R.~Gallego, H.~Wilming, and J.~Eisert,
  ``\BIBforeignlanguage{en}{Work and entropy production in generalised {Gibbs}
  ensembles},'' \emph{\BIBforeignlanguage{en}{New Journal of Physics}},
  vol.~18, no.~12, p. 123035, Dec. 2016.

\bibitem{esposito2010finite}
M.~Esposito, R.~Kawai, K.~Lindenberg, and C.~Van~den Broeck, ``Finite-time
  thermodynamics for a single-level quantum dot,'' \emph{EPL (Europhysics
  Letters)}, vol.~89, no.~2, p. 20003, 2010.

\bibitem{sivak2012thermodynamic}
D.~A. Sivak and G.~E. Crooks, ``Thermodynamic metrics and optimal paths,''
  \emph{Physical Review Letters}, vol. 108, no.~19, p. 190602, 2012.

\bibitem{shiraishi_speed_2018}
N.~Shiraishi, K.~Funo, and K.~Saito, ``Speed limit for classical stochastic
  processes,'' \emph{Phys. Rev. Lett.}, vol. 121, p. 070601, Aug 2018.

\bibitem{gomez2008optimal}
A.~Gomez-Marin, T.~Schmiedl, and U.~Seifert, ``Optimal protocols for minimal
  work processes in underdamped stochastic thermodynamics,'' \emph{The Journal
  of chemical physics}, vol. 129, no.~2, p. 024114, 2008.

\bibitem{then2008computing}
H.~Then and A.~Engel, ``Computing the optimal protocol for finite-time
  processes in stochastic thermodynamics,'' \emph{Physical Review E}, vol.~77,
  no.~4, p. 041105, 2008.

\bibitem{zulkowski2014optimal}
P.~R. Zulkowski and M.~R. DeWeese, ``Optimal finite-time erasure of a classical
  bit,'' \emph{Physical Review E}, vol.~89, no.~5, p. 052140, 2014.

\bibitem{schmiedl2007optimal}
T.~Schmiedl and U.~Seifert, ``Optimal finite-time processes in stochastic
  thermodynamics,'' \emph{Physical Review Letters}, vol.~98, no.~10, p. 108301,
  2007.

\bibitem{machta2015dissipation}
B.~B. Machta, ``Dissipation bound for thermodynamic control,'' \emph{Physical
  Review Letters}, vol. 115, no.~26, p. 260603, 2015.

\bibitem{verley_work_2014}
G.~Verley, C.~V.~d. Broeck, and M.~Esposito, ``\BIBforeignlanguage{en}{Work
  statistics in stochastically driven systems},''
  \emph{\BIBforeignlanguage{en}{New Journal of Physics}}, vol.~16, no.~9, p.
  095001, 2014.

\bibitem{allahverdyan2004maximal}
A.~E. Allahverdyan, R.~Balian, and T.~M. Nieuwenhuizen, ``Maximal work
  extraction from finite quantum systems,'' \emph{EPL (Europhysics Letters)},
  vol.~67, no.~4, p. 565, 2004.

\bibitem{uzdin2021passivity}
R.~Uzdin and S.~Rahav, ``Passivity deformation approach for the thermodynamics
  of isolated quantum setups,'' \emph{PRX Quantum}, vol.~2, no.~1, p. 010336,
  2021.

\bibitem{solon2018phase}
A.~P. Solon and J.~M. Horowitz, ``Phase transition in protocols minimizing work
  fluctuations,'' \emph{Physical Review Letters}, vol. 120, no.~18, p. 180605,
  2018.

\bibitem{gingrich2016near}
T.~R. Gingrich, G.~M. Rotskoff, G.~E. Crooks, and P.~L. Geissler,
  ``Near-optimal protocols in complex nonequilibrium transformations,''
  \emph{Proceedings of the National Academy of Sciences}, vol. 113, no.~37, pp.
  10\,263--10\,268, 2016.

\bibitem{aurell2011optimal}
E.~Aurell, C.~Mej{\'\i}a-Monasterio, and P.~Muratore-Ginanneschi, ``Optimal
  protocols and optimal transport in stochastic thermodynamics,''
  \emph{Physical Review Letters}, vol. 106, no.~25, p. 250601, 2011.

\bibitem{still2020thermodynamic}
S.~Still, ``Thermodynamic cost and benefit of memory,'' \emph{Physical Review
  Letters}, vol. 124, no.~5, p. 050601, 2020.

\bibitem{Note11}
That paper proposed to divide the system into two subsystems $Y$ and $\protect
  \bar {Y}$, such that the accessible information is given by $I(M;Y)$, under
  three assumption: (1) the system's marginal distributions remains constant
  during all steps of feedback control, (2) the conditional distribution of
  $\protect \bar {Y}$ given the system and the measurement does not change
  during the driving, and (3) all conditional information about $\protect \bar
  {Y}$ is lost by the time that driving begins. After private communication
  with the author of \cite {still2020thermodynamic}, we think that condition
  (3) may need to be formalized as $p(\bar {y}(t_2)\vert y(t_2), z(t_0))=p(\bar
  {y}(t_2)\vert y(t_2))$, although this equation does not appear in that paper.

\bibitem{wolpert_thermo_comp_review_2019}
D.~H. Wolpert, ``The stochastic thermodynamics of computation,'' \emph{Journal
  of Physics A: Mathematical and Theoretical}, 2019.

\bibitem{wolpert.thermo.bayes.nets.2020}
------, ``Uncertainty relations and fluctuation theorems for bayes nets,''
  \emph{Phys. Rev. Lett.}, vol. 125, p. 200602, Nov 2020.

\bibitem{wolpert2020fluctuation}
------, ``Fluctuation theorems for multipartite processes,''
  \emph{arXiv:2003.11144}, 2020.

\bibitem{wolpert2020minimal}
------, ``Minimal entropy production rate of interacting systems,'' \emph{New
  Journal of Physics}, vol.~22, no.~11, p. 113013, 2020.

\bibitem{wolpert2020strengthened}
------, ``Strengthened {L}andauer bound for composite systems,''
  \emph{arXiv:2007.10950}, 2020.

\bibitem{Note12}
The reader should be aware that those papers used different terminology from
  this paper. In \cite {wolpert2020fluctuation,wolpert2020minimal}, each degree
  of freedom $v \in V$ is called a ``subsystem'', the modular decomposition
  $\modDecomp $ is called a ``unit structure'', while each $\modSubsys \in
  \modDecomp $ is called a ``unit''.

\bibitem{lostaglio_description_2015}
M.~Lostaglio, D.~Jennings, and T.~Rudolph,
  ``\BIBforeignlanguage{en}{Description of quantum coherence in thermodynamic
  processes requires constraints beyond free energy},''
  \emph{\BIBforeignlanguage{en}{Nature Communications}}, vol.~6, no.~1, p.
  6383, May 2015.

\bibitem{santos_role_2019}
J.~P. Santos, L.~C. C{\'e}leri, G.~T. Landi, and M.~Paternostro,
  ``\BIBforeignlanguage{en}{The role of quantum coherence in non-equilibrium
  entropy production},'' \emph{\BIBforeignlanguage{en}{npj Quantum
  Information}}, vol.~5, no.~1, p.~23, Dec. 2019.

\bibitem{elphick2019spectral}
C.~Elphick and P.~Wocjan, ``Spectral lower bounds for the quantum chromatic
  number of a graph,'' \emph{Journal of Combinatorial Theory, Series A}, vol.
  168, pp. 338--347, 2019.

\bibitem{eaton_group_1989}
M.~L. Eaton, ``Group invariance applications in statistics,'' in \emph{Regional
  conference series in {Probability} and {Statistics}}.\hskip 1em plus 0.5em
  minus 0.4em\relax JSTOR, 1989, pp. i--133.

\bibitem{chitambar2019quantum}
E.~Chitambar and G.~Gour, ``Quantum resource theories,'' \emph{Reviews of
  Modern Physics}, vol.~91, no.~2, p. 025001, 2019.

\bibitem{gallego_thermodynamic_2016}
R.~Gallego, J.~Eisert, and H.~Wilming, ``\BIBforeignlanguage{en}{Thermodynamic
  work from operational principles},'' \emph{\BIBforeignlanguage{en}{New
  Journal of Physics}}, vol.~18, no.~10, p. 103017, Oct. 2016.

\bibitem{brandao_resource_2013}
F.~G. S.~L. Brand\~ao, M.~Horodecki, J.~Oppenheim, J.~M. Renes, and R.~W.
  Spekkens, ``Resource theory of quantum states out of thermal equilibrium,''
  \emph{Phys. Rev. Lett.}, vol. 111, p. 250404, Dec 2013.

\bibitem{lostaglio_stochastic_2015}
M.~Lostaglio, M.~P. M\"uller, and M.~Pastena, ``Stochastic independence as a
  resource in small-scale thermodynamics,'' \emph{Phys. Rev. Lett.}, vol. 115,
  p. 150402, Oct 2015.

\bibitem{faist_fundamental_2018}
P.~Faist and R.~Renner, ``Fundamental work cost of quantum processes,''
  \emph{Phys. Rev. X}, vol.~8, p. 021011, Apr 2018.

\bibitem{yunger_halpern_beyond_2016}
N.~Yunger~Halpern and J.~M. Renes, ``Beyond heat baths: Generalized resource
  theories for small-scale thermodynamics,'' \emph{Phys. Rev. E}, vol.~93, p.
  022126, Feb 2016.

\bibitem{marvian_extending_2014}
I.~Marvian and R.~W. Spekkens, ``\BIBforeignlanguage{en}{Extending {Noether}'s
  theorem by quantifying the asymmetry of quantum states},''
  \emph{\BIBforeignlanguage{en}{Nature Communications}}, vol.~5, no.~1, Sep.
  2014.

\bibitem{marvian_asymmetry_2014}
------, ``Asymmetry properties of pure quantum states,'' \emph{Phys. Rev. A},
  vol.~90, p. 014102, Jul 2014.

\bibitem{marvian_modes_2014}
------, ``Modes of asymmetry: The application of harmonic analysis to symmetric
  quantum dynamics and quantum reference frames,'' \emph{Phys. Rev. A},
  vol.~90, p. 062110, Dec 2014.

\bibitem{liu2017resource}
Z.-W. Liu, X.~Hu, and S.~Lloyd, ``Resource destroying maps,'' \emph{Physical
  Review Letters}, vol. 118, no.~6, p. 060502, 2017.

\bibitem{lee_fluctuation_2013}
H.~K. Lee, C.~Kwon, and H.~Park, ``Fluctuation theorems and entropy production
  with odd-parity variables,'' \emph{Physical Review Letters}, vol. 110, no.~5,
  p. 050602, 2013.

\bibitem{ito2018stochastic}
S.~Ito, ``Stochastic thermodynamic interpretation of information geometry,''
  \emph{Physical Review Letters}, vol. 121, no.~3, p. 030605, 2018.

\bibitem{takahashi2017shortcuts}
K.~Takahashi, ``Shortcuts to adiabaticity applied to nonequilibrium entropy
  production: an information geometry viewpoint,'' \emph{New Journal of
  Physics}, vol.~19, no.~11, p. 115007, 2017.

\bibitem{ito2020unified}
S.~Ito, M.~Oizumi, and S.-i. Amari, ``Unified framework for the entropy
  production and the stochastic interaction based on information geometry,''
  \emph{Physical Review Research}, vol.~2, no.~3, p. 033048, 2020.

\bibitem{nicholson2018nonequilibrium}
S.~B. Nicholson, A.~del Campo, and J.~R. Green, ``Nonequilibrium uncertainty
  principle from information geometry,'' \emph{Physical Review E}, vol.~98,
  no.~3, p. 032106, 2018.

\bibitem{ito2020stochastic}
S.~Ito and A.~Dechant, ``Stochastic time evolution, information geometry, and
  the cram{\'e}r-rao bound,'' \emph{Physical Review X}, vol.~10, no.~2, p.
  021056, 2020.

\bibitem{nakamura2019reconsideration}
T.~Nakamura, H.~Hasegawa, and D.~Driebe, ``Reconsideration of the generalized
  second law based on information geometry,'' \emph{Journal of Physics
  Communications}, vol.~3, no.~1, p. 015015, 2019.

\bibitem{petzQuantumInformationTheory2008}
D.~Petz, \emph{\BIBforeignlanguage{en}{Quantum Information Theory and Quantum
  Statistics}}, ser. Theoretical and Mathematical Physics.\hskip 1em plus 0.5em
  minus 0.4em\relax {Berlin}: {Springer}, 2008.

\bibitem{jaynes1992gibbs}
E.~T. Jaynes, ``The gibbs paradox,'' in \emph{Maximum entropy and bayesian
  methods}.\hskip 1em plus 0.5em minus 0.4em\relax Springer, 1992, pp. 1--21.

\bibitem{weilenmann2016axiomatic}
M.~Weilenmann, L.~Kraemer, P.~Faist, and R.~Renner, ``Axiomatic relation
  between thermodynamic and information-theoretic entropies,'' \emph{Physical
  Review Letters}, vol. 117, no.~26, p. 260601, 2016.

\bibitem{berut2012experimental}
A.~B{\'e}rut, A.~Arakelyan, A.~Petrosyan, S.~Ciliberto, R.~Dillenschneider, and
  E.~Lutz, ``Experimental verification of {L}andauer's principle linking
  information and thermodynamics,'' \emph{Nature}, vol. 483, no. 7388, pp.
  187--189, 2012.

\bibitem{roldan2014universal}
{\'E}.~Rold{\'a}n, I.~A. Martinez, J.~M. Parrondo, and D.~Petrov, ``Universal
  features in the energetics of symmetry breaking,'' \emph{Nature Physics},
  vol.~10, no.~6, pp. 457--461, 2014.

\bibitem{koski2014experimental}
J.~V. Koski, V.~F. Maisi, T.~Sagawa, and J.~P. Pekola, ``Experimental
  observation of the role of mutual information in the nonequilibrium dynamics
  of a {M}axwell demon,'' \emph{Physical Review Letters}, vol. 113, no.~3, p.
  030601, 2014.

\bibitem{shizume1995heat}
K.~Shizume, ``Heat generation required by information erasure,'' \emph{Physical
  Review E}, vol.~52, no.~4, p. 3495, 1995.

\bibitem{gong2016stochastic}
Z.~Gong, Y.~Lan, and H.~T. Quan, ``Stochastic thermodynamics of a particle in a
  box,'' \emph{Physical Review Letters}, vol. 117, no.~18, p. 180603, 2016.

\bibitem{szilard1929entropieverminderung}
L.~Szilard, ``{\"U}ber die entropieverminderung in einem thermodynamischen
  system bei eingriffen intelligenter wesen,'' \emph{Zeitschrift f{\"u}r
  Physik}, vol.~53, no. 11-12, pp. 840--856, 1929.

\bibitem{proesmans2015efficiency}
K.~Proesmans, C.~Driesen, B.~Cleuren, and C.~Van~den Broeck, ``Efficiency of
  single-particle engines,'' \emph{Physical review E}, vol.~92, no.~3, p.
  032105, 2015.

\bibitem{hondou2007equation}
T.~Hondou, ``Equation of state in a small system: Violation of an assumption of
  {M}axwell's demon,'' \emph{EPL (Europhysics Letters)}, vol.~80, no.~5, p.
  50001, 2007.

\bibitem{bhat2017unusual}
D.~Bhat, S.~Sabhapandit, A.~Kundu, and A.~Dhar, ``Unusual equilibration of a
  particle in a potential with a thermal wall,'' \emph{Journal of Statistical
  Mechanics: Theory and Experiment}, vol. 2017, no.~11, p. 113210, 2017.

\bibitem{dhar2019run}
A.~Dhar, A.~Kundu, S.~N. Majumdar, S.~Sabhapandit, and G.~Schehr,
  ``Run-and-tumble particle in one-dimensional confining potentials:
  Steady-state, relaxation, and first-passage properties,'' \emph{Physical
  Review E}, vol.~99, no.~3, p. 032132, 2019.

\bibitem{watanabe1960information}
S.~Watanabe, ``Information theoretical analysis of multivariate correlation,''
  \emph{IBM Journal of research and development}, vol.~4, no.~1, pp. 66--82,
  1960.

\bibitem{csiszar_information_2011}
I.~Csiszar and J.~K{\"o}rner, \emph{Information theory: coding theorems for
  discrete memoryless systems}.\hskip 1em plus 0.5em minus 0.4em\relax
  Cambridge University Press, 2011.

\bibitem{cover_elements_2006}
T.~M. Cover and J.~A. Thomas, \emph{Elements of information theory}.\hskip 1em
  plus 0.5em minus 0.4em\relax John Wiley \& Sons, 2006.

\bibitem{van2015ensemble}
C.~Van~den Broeck and M.~Esposito, ``Ensemble and trajectory thermodynamics: A
  brief introduction,'' \emph{Physica A: Statistical Mechanics and its
  Applications}, vol. 418, pp. 6--16, 2015.

\bibitem{Seifert2005}
U.~Seifert, ``Entropy production along a stochastic trajectory and an integral
  fluctuation theorem,'' \emph{Physical Review Letters}, vol.~95, no.~4, p.
  040602, 2005.

\bibitem{esposito_three_2010}
M.~Esposito and C.~Van~den Broeck, ``Three detailed fluctuation theorems,''
  \emph{Phys. Rev. Lett.}, vol. 104, p. 090601, Mar 2010.

\bibitem{spinney2012entropy}
R.~E. Spinney and I.~J. Ford, ``Entropy production in full phase space for
  continuous stochastic dynamics,'' \emph{Physical Review E}, vol.~85, no.~5,
  p. 051113, 2012.

\bibitem{murashita2014nonequilibrium}
Y.~Murashita, K.~Funo, and M.~Ueda, ``Nonequilibrium equalities in absolutely
  irreversible processes,'' \emph{Physical Review E}, vol.~90, no.~4, p.
  042110, 2014.

\bibitem{jarzynski_equalities_2011}
C.~Jarzynski, ``Equalities and inequalities: irreversibility and the second law
  of thermodynamics at the nanoscale,'' \emph{Annu. Rev. Condens. Matter
  Phys.}, vol.~2, no.~1, pp. 329--351, 2011.

\bibitem{benedetto_integration_2009}
J.~J. Benedetto and W.~Czaja, \emph{\BIBforeignlanguage{en}{Integration and
  {Modern} {Analysis}}}.\hskip 1em plus 0.5em minus 0.4em\relax Boston:
  Birkh{\"a}user Boston, 2009.

\bibitem{barato2017coherence}
A.~C. Barato and U.~Seifert, ``Coherence of biochemical oscillations is bounded
  by driving force and network topology,'' \emph{Physical Review E}, vol.~95,
  no.~6, p. 062409, 2017.

\bibitem{yang1952spontaneous}
C.~N. Yang, ``The spontaneous magnetization of a two-dimensional ising model,''
  \emph{Physical Review}, vol.~85, no.~5, p. 808, 1952.

\bibitem{engel_one-parameter_2000}
K.-J. Engel and R.~Nagel, \emph{\BIBforeignlanguage{en}{One-{Parameter}
  {Semigroups} for {Linear} {Evolution} {Equations}}}, ser. Graduate {Texts} in
  {Mathematics}.\hskip 1em plus 0.5em minus 0.4em\relax New York:
  Springer-Verlag, 2000.

\bibitem{gomez2008cg}
A.~Gomez-Marin, J.~M. Parrondo, and C.~Van~den Broeck, ``Lower bounds on
  dissipation upon coarse graining,'' \emph{Physical Review E}, vol.~78, no.~1,
  p. 011107, 2008.

\end{thebibliography}

\appendix
\crefalias{section}{appsec}
\crefalias{subsection}{appsubsec}


%
%

\section{Derivations for \cref{sec:info-geom-framework,sec:thermo-of-info}}
\label{app:theoretical}

\subsection{Proofs of \cref{thm:klinst,thm:resInt}}

We first prove a few helpful lemmas.

\begin{lem}
\label{lem:stlem}
If $\LL$ obeys $\eL[\klop p][ ] = \klop{ \eL[p][ ] }$ for all $p\in\Dset$, then $\LL$ has a stationary distribution $\pi\in\imgKLOP$.
\end{lem}
\begin{proof} %
Let $q$ be some stationary distribution of $\LL$.  Then,
\begin{align}
\eL[\klop q][ ] = \klop {\eL[q][ ]} = \klop q .
\end{align} 
Thus, $\klop q \in \imgKLOP$ is stationary under $\LL$. %
\end{proof}

\newcommand{\ppdist}{r}
\newcommand{\pppdist}{s}

\begin{lem}
\label{lem:contlem}
If $\eL[\klop p] = \klop {\eL[p] }$ for all $p\in\Dset$ and $\inftime \ge 0$, then for any $\ppdist,\pppdist \in\Dset$, 
\begin{align*}
-\ddt D(\ppdist(t) \Vert \klop{\pppdist(t)}) \ge 0,
\end{align*}
where $\ppt \ppdist =L\ppdist$ and $\ppt \pppdist= L\pppdist$.
\end{lem}
\begin{proof}
Expand the
derivative as %
\begin{align*}
 & -\ddt D(\ppdist(t)\Vert\klop{\pppdist(t)})\nonumber \\
 & \quad=\lim_{\inftime\to0}\frac{1}{\inftime}\left[D(\ppdist \Vert\klop {\pppdist})-D(\eL[\ppdist]  \Vert\klop{\eL[\pppdist]})\right]\nonumber \\
 & \quad=\lim_{\inftime\to0}\frac{1}{\inftime}\left[D(\ppdist\Vert\klop {\pppdist})-D(\eL[\ppdist] \Vert \eL[\klop \pppdist])\right]\ge0. %
\end{align*}
where in the last line we used the commutativity relation  %
and the data processing inequality for KL divergence~\cite{csiszar_information_2011}.
\end{proof}

\begin{lem} %
\label{lem:commProp}
Consider a protocol $\{\LL(t):t\in[0,\ft]\}$ and an operator $\klopBase$ that obeys \cref{eq:pyth,eq:comm0}. Then
\[\klop{p(t)}=\klop{p}(t),\]
where $p(t)$ is the distribution at time $t$ given initial distribution $p$, and $\klop{p}(t)$ is the distribution at time $t$ given initial distribution $\klop{p}$.
\end{lem}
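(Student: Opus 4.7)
The plan is to reduce the time-dependent protocol to the constant-generator case, where \cref{eq:comm0} applies directly. Write $q(t) := \klop{p(t)}$ and $r(t) := [\klop{p}](t)$; both satisfy $q(0) = \klop{p(0)} = \klop{p} = r(0)$, so it suffices to show they evolve identically.

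For a piecewise-constant protocol with $\LL(t) = L_k \in \LLL$ on $[t_{k-1},t_k)$, one has $p(t_n) = e^{\Delta t_n L_n}\cdots e^{\Delta t_1 L_1} p$ with $\Delta t_k = t_k - t_{k-1}$. Applying $\klopBase$ and invoking \cref{eq:comm0} once for each factor (which is legitimate because each $L_k \in \LLL$ and each $\Delta t_k \ge 0$) yields
\[
\klop{p(t_n)} = e^{\Delta t_n L_n}\klop{e^{\Delta t_{n-1}L_{n-1}}\cdots e^{\Delta t_1 L_1}p} = \cdots = e^{\Delta t_n L_n}\cdots e^{\Delta t_1 L_1}\klop{p} = [\klop{p}](t_n),
\]
so the identity holds exactly on piecewise-constant protocols. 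A general protocol $\LL(t)$ is then handled by Trotter approximation: take a sequence of piecewise-constant refinements that converge to $\LL(t)$, use continuity of $\klopBase$ together with strong continuity of the time-ordered propagator to pass to the limit, and recover $q(t) = r(t)$ for all $t \in [0,1]$.

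A cleaner alternative that avoids discretization is the differential route. Differentiating \cref{eq:comm0} in $\tau$ at $\tau = 0$ gives the infinitesimal commutativity relation $L\,\klop{p} = (D\klopBase)_p(Lp)$ for every $L \in \LLL$ and $p \in \Dset$, where $D\klopBase$ is the Fr\'echet derivative of $\klopBase$. Applied along the trajectory with $L = \LL(t)$ and argument $p(t)$, the chain rule yields
\[
\ppt q(t) = (D\klopBase)_{p(t)}\bigl(\LL(t) p(t)\bigr) = \LL(t)\,\klop{p(t)} = \LL(t)\, q(t).
\]
Thus $q(t)$ satisfies the same linear ODE as $r(t)$ with the same initial condition, and uniqueness of solutions forces $q(t) = r(t)$.

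The main obstacle in either route is the regularity of $\klopBase$: the Trotter argument needs continuity of $\klopBase$ (so that it can be exchanged with the limit of the Trotter product), while the differential argument needs Fr\'echet-type differentiability of $\klopBase$ (so that the chain rule is valid). For the three concrete operators used later in the paper --- twirling $\projGbase$, the factorization map $\projMbase$, and the coarse-graining-based operator of \cref{sec:cg} --- both properties are transparent from the explicit formulas, so at the level of rigor adopted here the lemma follows without further subtlety.
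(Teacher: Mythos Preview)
Your argument is correct, and both the Trotter route and the differential route work at the level of rigor adopted in the paper, with exactly the caveats you flag (continuity of $\klopBase$ for the first, Fr\'echet differentiability for the second). However, the paper takes a genuinely different and somewhat slicker path that avoids both regularity issues entirely.

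The paper's proof does not attempt to show directly that $\klop{p(t)}$ and $[\klop p](t)$ satisfy the same evolution equation. Instead, it appeals to \cref{lem:contlem}, a data-processing argument: for any two distributions $a,b$ evolving under the same $L\in\LLL$, the commutativity relation \cref{eq:comm0} plus the data-processing inequality for KL divergence give $-\ddt D(a(t)\Vert\klop{b(t)})\ge 0$. Applied at each instant with $a=[\klop p](t)$ and $b=p(t)$ (both evolving under $L(t)\in\LLL$), this yields $\ddt D([\klop p](t)\Vert\klop{p(t)})\le 0$. Since this KL divergence is zero at $t=0$ and non-negative for all $t$, it must vanish identically, forcing the two distributions to coincide.

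The trade-off: your approach is more constructive and makes the ``$\klopBase$ commutes with the propagator'' intuition explicit, but it leans on smoothness of $\klopBase$ that is easy to verify case-by-case yet not assumed in general. The paper's approach uses only the hypotheses as stated (no extra regularity on $\klopBase$), at the cost of being less direct --- one has to first notice the monotonicity lemma. Neither proof actually uses the Pythagorean identity \cref{eq:pyth}; only commutativity is needed here.
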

\begin{proof}
Using \cref{lem:contlem} with $\ppdist=\klop{p}(t)$ and $\pppdist=p(t)$, %
\begin{align}
\ddt D(\klop{p}(t) \Vert \klop{p(t)}) \le 0.
\label{eq:appMM1}
\end{align}
Note that  
\[
D([\klop{p}](0) \Vert \klop{p(0)})=D(\klop{p}\Vert \klop{p})=0,
\]
and that $D(\klop{p}(t) \Vert \klop{p(t)})\ge 0$ for all $t$ by non-negativity of KL divergence. 
Combined with \cref{eq:appMM1}, this implies $D(\klop{p}(t) \Vert \klop{p(t)})=0$ for all $t$, 
and therefore $\klop{p}(t) = \klop{p(t)}$~\cite[Thm. 8.6.1]{cover_elements_2006}.  
\end{proof}

We are now ready to prove \cref{thm:klinst,thm:resInt}. Note that in the proof of \cref{thm:klinst}, we make the 
assumption that there is some stationary distribution $\pi^L$  of $L$ such that $D(p\Vert \pi^L)<\infty$, and similarly in \cref{thm:resInt} we make the assumption that $D(p(t)\Vert \pi^{\LL(t)})<\infty$ at all $ t\in[0,\ft]$. These are weak and physically realistic assumptions, which essentially mean that we restrict our attention to distributions with finite nonequilibrium free energy  (see \cref{eq:Fkl}).  

In addition, in these proofs we will use that the EP rate incurred by distribution $p$ under the generator $L$ with stationary distribution $\pi$ can be written as
\begin{align}
\EPr(p,L) & = %
\lim_{\inftime \to 0} \frac{1}{\inftime} \left[ D(p\Vert\pi) - D(\eL[p]\Vert\pi)\right]. \label{eq:app0}
\end{align}
This can be derived from \cref{eq:naBound}, by noting that  the KL divergence can be written as
\begin{align}
D(p\Vert\pi)=-S(p)-\mathbb{E}_p\big[\ln\pi \big],
 \label{eq:app0b}
\end{align}
where $\mathbb{E}_p$ indicates expectation under the distribution $p$, and then using that %
\begin{align}
&-\sum_{x} \ppt p_x(t) \ln p_x=\lim_{\inftime \to 0} \frac{1}{\inftime} \left[  S(\eL[p])-S(p)\right]
 \label{eq:app0c} \\
&\sum_{x} \ppt p_x(t) \ln \pi_x=\lim_{\inftime \to 0} \frac{1}{\inftime} \left[\mathbb{E}_{\eL[p]}[\ln \pi]- \mathbb{E}_p[\ln \pi]\right], \label{eq:app0d}
\end{align}
where $\ppt p_x(t)$ is defined as in \cref{eq:ll}. 
(As usual, summations should be replaced by integrals for continuous-state systems.)

\begin{proof}[Proof of \cref{thm:klinst}]
Consider a generator $\LL$ with a stationary distribution $\pi$, and some distribution $p\in\Dset$ such that $D(p\Vert \pi)<\infty$. 
By \cref{lem:stlem}, $\klop \pi\in\imgKLOP$ is also a stationary distribution of $\LL$. If $\LL$ has a unique stationary distribution, then $\pi=\klop \pi$ and so $\pi\in\imgKLOP$; otherwise, as long as $D(p\Vert \klop \pi)<\infty$ (see~\cite{Note3}), we can assume that $\klop \pi=\pi$ in \cref{eq:app0}. %
Then, assuming that $\pi\in\imgKLOP$, we rewrite the term in the brackets in \cref{eq:app0} as
\begin{align*}
& D(p\Vert \klop{p})+D(\klop{p}\Vert\pi)\nonumber \\
& \qquad\qquad- D(\eL[p]\Vert\klop{\eL[p]})-D(\klop{\eL[p]}\Vert \pi)
\\
& =D(p\Vert \klop{p})-D(\eL[p]\Vert\klop{\eL[p]}) \nonumber \\
& \qquad\qquad+D(\klop{p}\Vert\pi) -D(\klop{\eL[p]}\Vert \pi)
 \\
& =D(p\Vert \klop{p})-D(e^{\inftime L} p\Vert\klop{\eL[p]}) \nonumber \\
& \qquad\qquad+D(\klop{p}\Vert\pi) -D(\eL[\klop{p}] \Vert \pi),
\end{align*}
where we used the Pythagorean identity of \cref{eq:pyth}, rearranged, and then used the commutativity relation of \cref{eq:comm0}. %
Plugging into \cref{eq:app0} gives
\begin{align*}
\EPr(p,L)&=\lim_{\inftime \to 0} \frac{1}{\inftime} \left[ D(p\Vert \klop{p})-D(\eL[p] \Vert \klop{\eL[p]}) \right] \\
& \quad+ \lim_{\inftime \to 0} \frac{1}{\inftime} \left[ D(\klop{p}\Vert\pi) -D(\eL[\klop{p}] \Vert \pi) \right] \\
&=-\ddt D(p(t)\Vert \klop{p(t)}) + \EPr(\klop p,L).
\end{align*}
The non-negativity of $-\ddt D(p(t)\Vert \klop{p(t)})$ follows by taking $\ppdist=\pppdist=p$ in \cref{lem:contlem}.
\end{proof}

\begin{proof}[Proof of of \cref{thm:resInt}]
\newcommand\piLt{\pi^{L(t)}}
Using \cref{eq:totalEP,thm:klinst}, write %
\begin{align*}
&\EPp=\int_{0}^{\ft}\EPr(p(t),\LL(t))\,dt \\
&\;\;=-\int_{0}^\ft \ddt D(p(t)\Vert \klop{p(t)}) \,dt+\int_{0}^\ft \EPr(\klop {p(t)},L(t))\,dt.
\end{align*}
Both integrals have a simple expression. 
First, by the fundamental theorem of calculus, %
\begin{align*}
-\int_{0}^{\ft} \ddt D(p(t)\Vert\klop{p(t)})\,dt=D(p\Vert\klop p)-D(p'\Vert\klop{\pf}). %
\end{align*}
This expression is non-negative, since $-\ddt D(p(t)\Vert\klop{p(t)}) \ge 0$ by \cref{lem:contlem}.  Second, using  \cref{lem:commProp},
\begin{align*}
 \int_0^\ft \EPr(\klop{p(t)},L(t))\, dt&= \int_0^\ft \EPr(\klop{p}(t),L(t))\, dt\\
&\qquad= \EP(\dTrans{\klop{\ps}}{\klop{\pf}}) .
\end{align*}
\end{proof}

\subsection{Trajectory-level version of \cref{eq:bnd3}}
\label{app:fluct}

Stochastic thermodynamics has shown that thermodynamic properties of physical processes (such as heat, work, and EP) can be defined as stochastically fluctuating quantities at the level of individual trajectories. We first briefly review the basic concepts of stochastic thermodynamics (for more details, the reader should consult~\cite{van2015ensemble,seifert2012stochastic,Seifert2005,esposito_three_2010}).

Let $\traj = (x, \dots, x')$ indicate a continuous-time trajectory of system states $\traj$ over time interval $t\in[0,\ft]$, where $x$ and $x'$ indicate the initial and final system states respectively, and let $P(\traj\vert x)$ indicate the conditional probability of observing trajectory $\traj$ given initial state $x$. For a given initial distribution $p(x)$, the probability of observing trajectory $\traj$ is then given by $p(\traj)=p(x)P(\traj\vert x)$, and the corresponding final distribution is given by  $p'(x')=\int P(x'\vert x)p(x) dx$. In addition, let $\tilde{P}(\tilde{\traj}\vert x')$ indicate the conditional probability of observing the time-reversed and  trajectory $\tilde{\traj}=({x'},\dots,{x})$ given the final state ${x'}$ under a ``time-reversed'' driving protocol~\cite{seifert2012stochastic}. 

Trajectory-level EP  is then defined in terms of the asymmetry between forward and reversed trajectory probabilities,
\begin{align}
\label{eq:dft}
\sigma_{p}(\bm{x}) = \ln p(x) - \ln p'(x') + \ln \frac{P(\traj\vert x)}{\tilde{P}(\rtraj\vert x')},
\end{align}
which is sometimes referred to as a detailed fluctuation theorem. 
(The above expression should be slightly modified the presence of odd-parity variables such as momentum, though in a way which does not change our derivations; see~\cite{spinney2012entropy}.) 
The expectation of trajectory-level EP across all trajectories is equal to the standard expression for integrated EP as used in the main text,
\begin{align}
\langle \sigma_{p}(\traj) \rangle = \EP(\ptpp),
\end{align}
where $\langle \cdot\rangle$ refers to expectations under the trajectory distribution $p(\traj)$.  Furthermore, by a simple manipulation, the detailed fluctuation theorem in \cref{eq:dft} leads to the following integral fluctuation theorem for EP,
\begin{align}
\langle e^{-\sigma_{p}} \rangle &= \int_{p(x)>0} p(x) P(\traj\vert x)\frac{p'(x')\tilde{P}(\rtraj\vert x')}{p(x)P(\traj\vert x)}D\traj \nonumber \\
&=\int_{p(x)>0} p'(x')\tilde{P}(\rtraj\vert x') D\traj = \gamma,
\label{eq:app53}
\end{align}
where $\int \;\cdot \;D\traj$ is the path integral. In this result, $\gamma \in (0,1]$ reflects the ``absolute irreversibility'' of the process under initial distribution $p$~\cite{murashita2014nonequilibrium}. When $p$ has full support, $\gamma=1$, giving the standard integral fluctuation theorem, $\langle e^{-\sigma_{p}} \rangle=1$.

Now consider the extra trajectory-level EP incurred by some trajectory $\traj$ on initial distribution $p$, additional to the trajectory-level EP incurred by the same trajectory on initial distribution $\klop p$,
\begin{align}
m(\traj) &:= \sigma_{p}(\traj) -\sigma_{\klop p}(\traj) \label{eq:apptrajM0} \\
& = \ln \frac{p(x)}{ \klop {\ps}(x)} - \ln \frac{\pf(x')}{ \klop {\ps}'(x')}\label{eq:apptrajM1}\\
& = \ln \frac{p(x)}{ \klop {\ps}(x)} - \ln \frac{\pf(x')}{ \klop {\pf}(x')}\label{eq:apptrajM}
\end{align}
where in the second line we used that the last term in \cref{eq:dft} cancels (as it does not depend on the initial or final distributions) and in the third line we used that $\klop{\ps}'=\klop{\pf}$ by \cref{lem:commProp}.  \cref{eq:apptrajM0} appears in the main text as \cref{eq:fluctmismatch}.  It is  easy to verify that $m(\traj)$ agrees in expectation with the contraction of KL divergence between $p$ and $\klop p$,
\begin{align}
\langle m \rangle &=  D( \ps \Vert\klop \ps )-D(\pf \Vert\klop{\pf}),
\end{align}
where, as before, $\langle \cdot\rangle$ refers to expectations under the trajectory distribution $p(\traj)$. 
Then, given \cref{thm:resInt}, this implies that the expectation $m(\traj)$ is also equal to the extra total EP incurred by initial distribution $p$ rather than the accessible distribution $\klop p$,
\begin{align}
\langle m \rangle & =\EP(\ptpp) - \EP(\klopptpp).
\end{align}

In \cite{kolchinsky2021state}, it is shown that $m(\traj)$ obeys a fluctuation theorem (see also \cite{kwon2019fluctuation}). We re-derive the relevant results here. First, a simple rearrangement of \cref{eq:apptrajM1} gives the following detailed fluctuation theorem,
\begin{align}
m(\traj) &:= \ln \frac{p(x)}{\pf(x')} + \ln \frac{P(\traj\vert x)}{Q(\rtraj\vert x')},
\end{align}
where  the conditional distribution $Q(\rtraj \vert x')$ is given by 
\begin{align*}
Q(\rtraj \vert x') := \frac{ P(\traj\vert x)\klop \ps(x) }{\klop{\ps}'(x')}.
\end{align*}
In words, $Q(\rtraj \vert x')$ is the Bayesian posterior probability of trajectory $\bm$ given final state $x'$, when the process begins on initial distribution $\klop p$. 
A similar derivation as in \cref{eq:app53} shows that $m$ obeys an integral fluctuation theorem,
\begin{align}
\langle e^{-m} \rangle=\int_{p(x)>0} p'(x'){Q}(\rtraj\vert x') D\traj = \chi.
\label{eq:app54}
\end{align}
Here $\chi\in(0,1]$ indicates the absolute irreversibility of the process on initial distribution $p$ relative to initial distribution $\klop p$. $\chi$ is equal to 1 when $p$ and $\klop p$ have the same support, which then leads to a standard integral fluctuation theorem $\langle e^{-m} \rangle=1$. 

Importantly,  \cref{eq:app54} implies that the probability that the  trajectory-level EP on initial distribution $p$ is $\xi$ less than the trajectory-level EP on initial distribution $\klop p$ is exponentially suppressed,
\begin{align}
\mathrm{P}[  \sigma_{p}  < \sigma_{\klop p}-\xi  ] & \stackrel{(a)}{=}\mathrm{P}[  m  < -\xi  ] \stackrel{(b)}{\le}  \chi e^{-\xi} \stackrel{(c)}{\le } e^{-\xi}.
\end{align}
Here, $(a)$ uses the definition of $m(\traj)$, $(b)$ uses a standard derivation in stochastic thermodynamics (see \cite{jarzynski_equalities_2011}, or the appendix in \cite{kolchinsky2021state}), while $(c)$ uses that $\chi\in(0,1]$.

\section{Symmetry constraints}
\label{app:symm}

\subsection{$\projGbase$ obeys the Pythagorean identity, \cref{eq:pyth}}
\label{app:symmpyth}

In the following derivations, all integrals should be understood in
the Lebesgue sense. For discrete state systems, integrals over $X$
can be replaced by summations. 

The state space $X$ is assumed to be Borel measurable. Similarly,
we assume that the action of the group $\G$ (i.e., the function $\G\times X\to X:(g,x)\mapsto g(x)$)
is Borel measurable. Note that these assumptions imply that for any
probability distribution $p\in\Dset$, the function $(g,x)\mapsto p(g(x))$
is measurable, since it is the composition of two Borel measurable
functions: $(g,x)\mapsto g(x)$ and $x\mapsto p(x)$.

We begin with a few intermediate results.
\begin{lem}
\label{lem:inv1} For any $p\in\Dset$, $g\in\G$, and $x\in X$,
\[
\projG p(x)=\projG p(g(x)).
\]
\end{lem}

\begin{proof}
Using the definition of $\projGbase$ in \cref{eq:symmOop}, write
\begin{align*}
\projG p(g(x)) & =\int_{\G}p(g'(g(x)))\,d\mu(g')\\
 & ={\textstyle \int_{\G}}\,p(g'(x))\,d\mu(g')=\projG p(x),
\end{align*}
where we performed a change of variables $x\mapsto g^{-1}(x)$ and
used the invariance properties $\G$ and the Haar measure $\mu$.
\end{proof}
\begin{lem}
\label{lem:inv4}For any $p\in\Dset$, measurable set $\Omega\subseteq X$,
and function $f:X\to\mathbb{R}$,
\begin{equation}
\int_{\Omega}p(x)f(x)=\int_{\Omega}\projG p(x)f(x)dx\label{eq:gdh3}
\end{equation}
if the following three conditions hold: (1) $g(\Omega)=\Omega$ for
all $g\in\G$, (2) $f(x)=f(g(x))$ for all $x\in X$ and $g\in\G$,
(3) either $\vert\int_{\Omega}p(x)f(x)\,dx\vert<\infty$, or $f$
is measurable and non-negative.
\end{lem}

\begin{proof}
To begin, write the left hand side of \cref{eq:gdh3} as
\begin{align}
\int_{\Omega}p(x)f(x)\,dx & =\int_{\G}\left[\int_{\Omega}p(x)f(x)\,dx\right]d\mu(g)\nonumber \\
 & =\int_{\G}\left[\int_{g^{-1}(\Omega)}p(g(x))f(g(x))\,dx\right]d\mu(g)\nonumber \\
 & =\int_{\G}\left[\int_{\Omega}p(g(x))f(x)\,dx\right]d\mu(g).\label{eq:int01-1}
\end{align}
In the second line, we substituted $x\mapsto g(x)$ within each inner
integral, while using that each $g$ is a rigid transformation (so
the absolute value of its Jacobian is 1). In the last line, we used
conditions \emph{(1)} and \emph{(2)}.

We now show that we can exchange the order of integrals in \cref{eq:int01-1}
using condition \emph{(3)} and Tonelli's theorem. First, if $f$ is
measurable and non-negative, then the function $x\mapsto p(g(x))f(x)$
is non-negative and measurable (since it is a product of two non-negative
measurable functions), so the integrals can be exchanged by \citep[Thm 3.7.7, ][]{benedetto_integration_2009}.
Alternatively, assume that $\vert\int_{\Omega}p(x)f(x)\,dx\vert<\infty$,
which means that the function $x\mapsto p(x)f(x)$ is integrable.
This implies that
\begin{align}
\infty & >\int_{\Omega}p(x)\vert f(x)\vert\,dx\nonumber \\
 & =\int_{\G}\left[\int_{\Omega}p(x)\vert f(x)\vert\,dx\right]d\mu(g)\\
 & =\int_{\G}\left[\int_{g^{-1}(\Omega)}p(g(x))\vert f(g(x))\vert\,dx\right]d\mu(g)\nonumber \\
 & =\int_{\G}\left[\int_{\Omega}p(g(x))\vert f(x)\vert\,dx\right]d\mu(g)\label{eq:int01-2}
\end{align}
where the first line follows from definition of Lebesgue integrability,
while the rest follows from the same steps as \cref{eq:int01-1}. Given
\cref{eq:int01-2}, the function $(g,x)\mapsto p(g(x))f(x)$ must be
integrable, which again allows us to exchange the order of the integrals
in \cref{eq:int01-1} \citep[Thm 3.7.8,  ][]{benedetto_integration_2009}.

We then derive our result by rewriting \cref{eq:int01-1} as 
\begin{align*}
\int_{\Omega}p(x)f(x)\,dx & =\int_{\Omega}\left[\int_{\G}p(g(x))f(x)\,d\mu(g)\right]dx\\
 & =\int_{\Omega}\projG p(x)f(x)\,dx,
\end{align*}
where we used the definition of $\projGbase$.
\end{proof}
Finally, we prove that $\projGbase$ obeys the Pythagorean identity.
\begin{prop}
\label{prop:projGpyth} For any $p,q\in\Dset$ such that $D(p\Vert\projG q)<\infty$,
\begin{align}
D(p\Vert\projG q)=D(p\Vert\projG p)+D(\projG p\Vert\projG q).\label{eq:pythsymm}
\end{align}
\end{prop}

\begin{proof}
For any $p\in\Dset$, we indicate the support set as $\supp p=\{x\in X:p(x)>0\}$.
We first prove that
\begin{equation}
\supp p\subseteq\supp{\projG p}\subseteq\supp{\projG q}.\label{eq:suppinc}
\end{equation}
By the definition of $\projGbase$ in \cref{eq:symmOop}, if $\projG p(x)>0$
for some $x\in X$, then $p(g(x))>0$ for that $x$ and some $g\in\G$.
In addition, the assumption that $D(p\Vert\projG q)<\infty$ implies
that $\supp p\subseteq\supp{\projG q}$ \citep{cover_elements_2006}
(except for a set of measure 0, which we can safely ignore). Combining
these facts implies that if $\projG p(x)>0$ for some $x$, then $\projG q(g(x))>0$
for that $x$ --- and therefore also $\projG q(x)>0$ since $\projG q$ is invariant 
under $\G$, \cref{lem:inv1}. This proves that $\supp{\projG p}\subseteq\supp{\projG q}$.
Finally, by \cref{lem:inv1} and \cref{lem:inv4}, 
\[
\int_{\supp{\projG p}}p(x)\,dx=\int_{\supp{\projG p}}\projG p(x)\,dx=1,
\]
which implies that $\supp p\subseteq\supp{\projG p}$ (up to a set
of measure 0).

Next, write the KL divergence on the left hand side of \cref{eq:pythsymm}
as \citep[Eq.  8.58, ][]{cover_elements_2006}
\begin{align}
 & D(p\Vert\projG q)=\int_{\supp p}p(x)\ln\frac{p(x)}{\projG q(x)}dx\nonumber \\
 & \quad=D(p\Vert\projG p)+\int_{\supp p}p(x)\ln\frac{\projG p(x)}{\projG q(x)}dx\nonumber \\
 & \quad=D(p\Vert\projG p)+\int_{\supp{\projG p}}p(x)\ln\frac{\projG p(x)}{\projG q(x)}dx,\label{eq:gf2}
\end{align}
where the last line uses \cref{eq:suppinc} (in particular, that $\supp p\subseteq\supp{\projG p}$
and $p(x)\ln\frac{\projG p(x)}{\projG q(x)}=0$ for $x\in\supp{\projG p}\setminus\supp p$). 

The integral in \cref{eq:gf2} is bounded from above by $D(p\Vert\projG q)<\infty$,
since $D(p\Vert\projG p)\ge0$. We also show that this integral is
bounded from below. Note that $\projG p(x)$ and $\projG q(x)$ are
both non-negative measurable functions, which follows from the fact
that $x\mapsto p(g(x))$ and $x\mapsto p(g(x))$ are non-negative
measurable functions, the definition of $\projGbase$, and Tonelli's
theorem \citep[Thm 3.7.7, ][]{benedetto_integration_2009}. Thus,
the function $x\mapsto\frac{\projG q(x)}{\projG p(x)}$ is also non-negative
and measurable, letting us bound the integral in the following way:
\begin{align*}
 & \int_{\supp{\projG p}}p(x)\ln\frac{\projG p(x)}{\projG q(x)}dx\\
 & \qquad\ge-\ln\left[\int_{\supp{\projG p}}p(x)\frac{\projG q(x)}{\projG p(x)}dx\right]\\
 & \qquad=-\ln\left[\int_{\supp{\projG p}}\projG p(x)\frac{\projG q(x)}{\projG p(x)}dx\right]\\
 & \qquad=-\ln\left[\int_{\supp{\projG p}}\projG q(x)\,dx\right]\ge-\ln1=0.
\end{align*}
where in the second line we used Jensen's inequality, while in the
third line we applied \cref{lem:inv4}. Finally, we use \cref{lem:inv4}
to rewrite the integral in \cref{eq:gf2} as
\begin{multline*}
\int_{\supp{\projG p}}p(x)\ln\frac{\projG p(x)}{\projG q(x)}dx=\\
\int_{\supp{\projG p}}\projG p(x)\ln\frac{\projG p(x)}{\projG q(x)}dx=D(\projG p\Vert\projG q).
\end{multline*}
\end{proof}

\subsection{$\projGbase$ obeys the commutativity relation, \cref{eq:comm0}}

It is easy to verify that $\transP$  is a linear operator. It then follows that if $\transP$ commutes with the linear operator $\LL$, as in \cref{eq:commsymmZ}, then it also commutes with the exponential $e^{\inftime \LL}=\sum_k \frac{1}{k!}\inftime^k \LL^k$. We then have
\begin{align*}
\eL[\projG p]& = \eL[\int \transP p \,d\mu(g)]\\
&=\int \eL [\transP p] \,d\mu(g)\\
&=\int \transP \eL [p]  \,d\mu(g)\\
&= \projG {\eL[p]}
\end{align*}
where in the second line we exchanged the bounded operator $e^{\inftime\LL}$ and the (Bochner) integral, and in the third line we used that $\transP$ and $e^{\inftime \LL}$ commute.

\subsection{Derivation of \cref{eq:commsymmZ} from \cref{eq:symmME} and \cref{eq:symmFP}}

Consider some $f :\sX\to\mathbb{R}$ and a continuous-state master equation $\LL$ such that
\begin{align}
[\LL f](x)=\int \left[\Lji f(x')-\Lij f(x)\right]\,dx'.
\label{eq:cme}
\end{align}
(The derivation for discrete-state master equations, as in \cref{eq:ll}, is the same, but with integrals replaced with summations). Then,
\begin{align}
&[\transP\LL f](x)	=[\LL f](\actg(x)) \nonumber \\
	&\quad=\int [\LL_{\actg(x)x'}f(x')-\LL_{x'\actg(x)}f(\actg(x))]dx'\label{eq:appY0}\\
	&\quad=\int [\LL_{\actg(x)\actg(x')}f(\actg(x'))-\LL_{\actg(x')\actg(x)}f(\actg(x))]dx'\label{eq:appY1}\\
	&\quad=\int [\LL_{x x'}f(\actg(x'))-\LL_{x' x}f(\actg(x))]dx'\label{eq:appY2}\\
	&\quad= \int [L_{xx'}[\transP f](y)-L_{x'x}[\transP f](x)]dx'\label{eq:appY3} \\
	&\quad=[\LL\transP f](x),
\end{align}
which implies $\transP \LL = \LL\transP$, \cref{eq:commsymmZ}. Here we used the definition of $\transP$ in the first line and \cref{eq:cme} in \cref{eq:appY0}. In \cref{eq:appY1}, we used the variable substitution $x' \mapsto \actg(x')$, along with the fact that $\actg$ is volume preserving.  In \cref{eq:appY2}, we used \cref{eq:symmME}. %

\global\long\def\pgg{\transP f}
\global\long\def\pggP{(\transP f)}

Next, we show that \cref{eq:symmFP} is sufficient for \cref{eq:commsymmZ} 
to hold, assuming that all $\actg\in\G$ are rigid transformation and the $\LL\in\LLL$ refer to Fokker-Planck equations
of the form \cref{eq:fp10}. First, given some (sufficiently smooth) function $f : \sX\to\mathbb{R}$, 
write \cref{eq:fp10} as
\begin{equation}
\ppt f = \LL f = \nabla\cdot ((\nabla E) f)+\dL\Delta f.\label{eq:appfp2}
\end{equation}
For any $g\in\G$, write the diffusion term in \cref{eq:appfp2}
as
\begin{align}
\Delta f =\Delta(\pgg\circ \actgInv)= \Delta\pggP \circ \actgInv,\label{eq:zzm0}
\end{align}
where we used the identity $f = \transP[\actgInv] \pgg = \pgg\circ \actgInv$ and that the Laplace
operator commutes with rigid transformations. Now consider the
drift term in \cref{eq:appfp2}. Using the product rule, %
\begin{equation}
\nabla\cdot((\nabla E) f)=(\nabla f)^{T}(\nabla E)+f \Delta E.\label{eq:zmm}
\end{equation}
We can rewrite the second term above as
\begin{align}
f \Delta E  & =(\pgg \circ \actgInv) \Delta E \nonumber \\
& =(\pgg \circ \actgInv) \Delta(E\circ \actgInv) \nonumber \\
&=(\pgg \circ \actgInv)((\Delta E)\circ \actgInv) \nonumber \\
&= (\pggP(\Delta E)) \circ \actgInv ,\label{eq:zmm2}
\end{align}
where we used $f = \pgg \circ \actgInv$, 
the invariance of $E$ under $\G$ (\cref{eq:symmFP}), and in the third line that the Laplace operator commutes with rigid transformations. 
Now consider the first term on the right hand side of \cref{eq:zmm}:
\begin{align}
(\nabla f)^{T}(\nabla E)& =(\nabla(\pgg\circ \actgInv)^{T}\nabla(E\circ \actgInv)\nonumber \\
 & =(J^{T}(\nabla\pggP\circ \actgInv))^{T}(J^{T}((\nabla E)\circ \actgInv)) \nonumber \\
 & =(\nabla\pggP \circ \actgInv)^{T}J J^{T}((\nabla E) \circ \actgInv)\nonumber \\
 & =(\nabla\pggP \circ \actgInv)^{T}((\nabla E)\circ \actgInv)\nonumber \\
 & =(\nabla\pggP^{T}(\nabla E)) \circ \actgInv , \label{eq:zmm3}
\end{align}
where $J$ indicates the Jacobian of $\actgInv$.  In the first line, we again used the identity $f = \pgg \circ \actgInv$ and the invariance of $E$ under $\G$, in the second line we used the chain rule, and in the fourth line we used that $J J^T = I$ for rigid transformations. 
Plugging \cref{eq:zmm2,eq:zmm3} back into \cref{eq:zmm} and rearranging gives
\begin{align}
\nabla\cdot((\nabla E) f) = \nabla\cdot((\nabla E) \pggP) \circ \actgInv.
\end{align}
Combined with \cref{eq:zzm0,eq:appfp2}, this in turns implies that $\LL f=(\LL \pgg) \circ \actgInv$, or in other words that
\[
\transP \LL f =\LL \transP f.
\]

\subsection{Derivation of \cref{eq:decomp4b}}
\label{app:symmthermoinfo}

First, write the inaccessible information term in \cref{eq:accinfodef} as
\begin{align}
&D(\PCONDS\Vert\projG{\PCONDS}) = \sum_m p(m) D(\pconds\Vert\projG{\pconds}) \nonumber \\
&   =\sum_m p(m,x) \ln\frac{p(x\vert m)}{\int p(g(x)\vert m)\mu g} \nonumber \\
&	=\sum_m p(m,x) \ln\frac{p(x)q(m\vert x)/p(m)}{\int p(g(x))q(m\vert g(x))/p_{g}(m)\,\mu(g)},\label{eq:appG877}
\end{align}
where we've defined $p(m)=\sum_x p(x)q(m\vert x)$ and  $p_g(m)=\sum_x p(g(x))q(m\vert x)$, and used the definition of $\projGbase$ in \cref{eq:symmOop}. (Here we assume for simplicity that both  $X$ and $M$ are discrete valued;  otherwise the summations in \cref{eq:appG877} should be replaced with integrals.) 

Recall that we assumed that $p$ is invariant under $\G$, so $\projG p=p$. By \cref{lem:inv1},  $p(x)=p(g(x))$ for all $x$ and $g\in\G$, which in turn implies that $p(m)=p_g(m)$.  Plugging into \cref{eq:appG877} then gives
\begin{align*}
D(\PCONDS\Vert\projG{\PCONDS})=\sum_m p(m,x) \ln\frac{q(m\vert x)}{\int  q(m\vert g(x))\,\mu(g)},
\end{align*}
which appears in the main text as  \cref{eq:decomp4b}.

\subsection{Example: Szilard box, derivation of \cref{eq:accFreeEnergyRotated}}
\label{app:rotatedaccessDeriv}

\newcommand{\darkgraymeas}{P_{\theta}}

We derive \cref{eq:accFreeEnergyRotated} using a simple geometric argument.

Consider the twirling of $\ptheta$, as shown in \cref{fig:szilard-symm-angle}(b).  From the definition of $\projGbase$ and \cref{eq:degMeas}, it is easy to see that
 \begin{enumerate}
 	\item The dark gray areas in \cref{fig:szilard-symm-angle}(b) (where both $\ptheta(x_1,x_2)=1/2$ and $\ptheta(x_1,-x_2)=1/2$) have probability density $\projG{\ptheta}(x_1,x_2)=1/2$.
 	\item The light gray areas in \cref{fig:szilard-symm-angle}(b) (where either $\ptheta(x_1,x_2)=1/2$ or $\ptheta(x_1,-x_2)=1/2$, but not both) have probability density $\projG{\ptheta}(x_1,x_2)=1/4=u(x_1,x_4)$.
 	\item The white areas in \cref{fig:szilard-symm-angle}(b) (where $\ptheta(x_1,x_2)=0$ and $\ptheta(x_1,-x_2)=0$) have probability density $\projG{\ptheta}(x_1,x_2)=0$. 
 \end{enumerate}
 Given this, 
\begin{align}
D(\projG{\ptheta} \Vert \po)= \ln 2 \cdot \darkgraymeas,
\label{eq:appnf4}
\end{align}
where $\darkgraymeas$ is the probability assigned by $p$ to the dark gray areas (i.e., those $(x_1,x_2)$ where $\ptheta(x_1,x_2)=1/2=\ptheta(x_1,-x_2)=1/2$).

\begin{figure}[b]
\begin{centering}
\includegraphics[width=0.75\columnwidth]{}
\par\end{centering}
\caption{%
The twirling  $\projM{\ptheta}$ for two cases. Left: $\vert\theta\vert\in\angleRange$.
Right:  $\projM{\ptheta}$ for $\vert\theta\vert\in[-\pi,\pi]\setminus\angleRange$. 
\label{fig:szilard-two-cases}}
\end{figure}

To calculate the value of $\darkgraymeas$, is suffices to consider two separate cases:
\begin{enumerate}
	\item $\vert\theta\vert\in[-\pi,\pi]\setminus\angleRange$
	\item $\vert\theta\vert\in\angleRange$
\end{enumerate}
which are shown visually in \cref{fig:szilard-two-cases}. 
Using this figure, and a bit of trigonometry, it can be shown that $\darkgraymeas=1-\frac{1}{2}\vert \tan \theta\vert$ in the first case, and  $\darkgraymeas=\frac{1}{2}\vert \tan(\theta - \pi/2)\vert$ in the second case.  Combining these results  with \cref{eq:appnf4} gives \cref{eq:accFreeEnergyRotated}.

\subsection{Example: Symmetry constraints on a discrete-state master equation}
\label{app:unicyclic}

Here we  demonstrate  our results on symmetry constraints %
using a simple
finite-state system.  The system contains $n$ states,  $x=\{0,\dots,n-1\}$. 
We consider a group generated by circular shifts, representing  $m$-fold circular symmetry:
\begin{align}
\actg(x)=x+n/m\quad\mathrm{mod}\quad n.
\end{align}
Assume that  the driving protocol obeys the following symmetry group at all $t\in[0,\ft]$:
\begin{equation}
\Lij(t)=L_{\actg(x')\actg(x)}(t),\label{eq:symmRM}
\end{equation}
An example of such
a master equation would be a unicyclic network, where the $n$ states
are arranged in a ring, and transitions between nearest-neighbor
states obey \cref{eq:symmRM}. Such unicyclic networks are often used to model biochemical
oscillators and similar biological systems \citep{barato2017coherence}.
This kind of system is illustrated in \cref{fig:symm}, with $n=12$ and $m=4$. 

Imagine that this system starts from the initial distribution $p(x)\propto x$,
so the probability  grows linearly from 0 (for $x=0$) to maximal
(for $x=n$). For the 12 state system with 4-fold symmetry, this initial distribution
is given by 
\[
p(x) = \frac{x}{\sum_{x'=0}^{11} x'} = \frac{x}{66},
\]
and is shown on the left hand side of \cref{fig:symm}. How much work can
be extracted by bringing this initial distribution to some other
distribution $p'$, while using rate matrices of the form \cref{eq:symmRM}? %
This is bounded by the drop of the accessible free energy, via \cref{eq:ourworkbound}:
\begin{align}
W(\ptpp) \le \NFEs[\projG{\ps}]-\NFEf[\projG{\pf}].
\end{align}
Using the example system with 12 states
and 4-fold symmetry, the twirled distribution $\projG{\ps}$ is given by
\begin{multline*}
\projG{p}(x) = \\\frac{x + (x+3\text{ mod } 12)+(x+6\text{ mod } 12)+(x+9\text{ mod } 12)}{4\times 66}.
\end{multline*}
For example, for the distribution $p(x)=x/66$, 
\begin{align*}
\projG{p}(0)&=(0+3+6+9)/(4\times 66)&\approx 0.068\\
\projG{p}(1)&=(1+4+7+10)/(4\times 66)&\approx 0.083\\
\projG{p}(2)&=(2+5+8+11)/(4\times 66)&\approx 0.098\\
\projG{p}(3)&=(3+6+9+0)/(4\times 66)&\approx 0.068\\
\dots & \dots
\end{align*}
This twirled distribution is shown 
on the right panel of \cref{fig:symm}.

\begin{figure}
\begin{centering}
\includegraphics[width=0.75\columnwidth]{}
\par\end{centering}
\caption{A unicyclic master equation over 12 states with 4-fold symmetry, as
in \cref{eq:symmRM}. Left: an initial distribution $p({x})\propto x$
which does not respect the 4-fold symmetry. Right: the twirling $\projG p$, which is 
invariant to the symmetry. (Colors indicate relative probability
assigned to each of the 12 states.) The extractable work 
depends on the accessible free energy in $p$, which is given by $\NFEs[\projG p]$. \label{fig:symm}}
\end{figure}

\subsection{Example: 2D Ising model, derivation of \cref{eq:analyticIsing}}
\label{app:ising}

We begin by recalling the expression for accessible information in our feedback-control protocol over the 2D Ising model, which appears as \cref{eq:hg7b} in the main text:
\begin{align}
&\IaccS[\projGbase] \!= %
\ln 2 - \Big\langle \!\ln \frac{q(m\vert x)}{N^{-2}\sum_{a,b} q(m\vert g_{a,b}(x))} \Big\rangle.\label{eq:hg7bApp}
\end{align}
Using $q(m\vert x)= \deltaFunc[m](x_1)$, the expectation term in \cref{eq:hg7bApp} can be  rewritten as
\begin{align}
-\sum_x p(x)\sum_{\mathclap{m\in\{-1,1\}}}\deltaFunc[m](x_1) \ln \big[N^{-2}{\sum_{a,b} \deltaFunc[m](\g_{a,b}(x)_1)}\big].
\label{eq:hg9}
\end{align}
Let $z(x) = (1+\sum_{i} x_{i}/N^2)/2$ indicate the magnetization of lattice state $x$, normalized to lie between 0 and 1.  Note that for any lattice state $x$, the frequency that spin 1 is in state 1 averaged across all translations is equal to the magnetization of $x$,
\[
N^{-2}\sum_{a,b} \deltaFunc[1](\g_{a,b}(x)_1)=z(x).
\] 
In addition, by symmetry, the probability that spin 1 is in state 1 averaged across all states that have magnetization $z$ is equal to $z$,
\[
\sum_x p(x \vert z) \deltaFunc[1](x_{1}) = z.
\] 
Using these results and $\deltaFunc[-1](x)=1-\deltaFunc[1](x)$, we can rewrite the expression in  \cref{eq:hg9} as
\begin{align}
&-\sum_x p(x) [\deltaFunc[1](x_1) \ln z(x) + (1-\deltaFunc[1](x_1)) \ln (1-z(x))]\nonumber\\
&=\sum_{z} p(z) [-z\ln z -(1-z)\ln(1-z)] \equiv \langle h_2(z) \rangle,
\label{eq:m23}
\end{align}
where $p(z')=\sum_x p(x)\deltaFunc[z'](z(x))$ is the probability that the system has magnetization $z'$ and $h_2$ is the binary entropy function. 

We now consider the $N\to\infty$ limit, and use Onsager's expression for the spontaneous magnetization for the 2D Ising model~\cite{yang1952spontaneous}. When $\beta$ is below the critical inverse temperature, $\beta_c=\ln(1+\sqrt{2})/2\approx 0.44$, 
 the magnetization distribution $p(z)$ concentrates at $z=1/2$, so \cref{eq:m23} approaches $h_2(1/2)=\ln 2$. 
 When $\beta> \beta_c$, the magnetization distribution concentrates on a uniform 
 mixture of two delta functions at $z=f(\beta)$ and $z=1-f(\beta)$, where 
$f(\beta)=(1+\sqrt[8]{1- (\sinh2\beta)^{-4}})/2$.
In this case, \cref{eq:m23} approaches $(h_2(f(\beta))+h_2(1-f(\beta)))/2=h_2(f(\beta))$. 
Combining these results with \cref{eq:hg7bApp} implies that %
 $\IaccS[\projGbase] =0$ for $\beta \le \beta_c$ and  $\IaccS[\projGbase] =\ln 2 - h_2(f(\beta))$ for  $\beta > \beta_c$, which appears as \cref{eq:analyticIsing} in the main text.

\section{Modularity constraints}
\label{app:mod}

\global\long\def\projMshortBase{\klopBase_\modDecomp}%
\global\long\def\projMshort#1{\projMshortBase(#1)}%
\global\long\def\olapsShort{O}%

\subsection{$\projMbase$ obeys the Pythagorean identity, \cref{eq:pyth}}

\global\long\def\projMapp#1{\projM{#1}}%
We show that $\projMshortBase$ obeys the Pythagorean identity:
\begin{align}
D(p\Vert \projMshort{q})=D(p\Vert\projMshort p)+D(\projMshort p\Vert \projMshort{q}).
\label{eq:pythmod}
\end{align}
for all $p,q\in\Dset$ such that $D(p\Vert \projG{q})<\infty$. 
For any $p,\ppdist \in \Dset$, %
\begin{align}
& \mathbb{E}_p[\ln \projMapp{\ppdist}] =\mathbb{E}_p[\ln \ppdist_{\olapsShort}] + \sum_{{\modSubsys\in\modDecomp}} \mathbb{E}_p [\ln \ppdist_{\modSubsys \setminus \olapsShort \vert \modSubsys \cap \olapsShort}]\nonumber\\
&\quad=\mathbb{E}_{\projMapp{p}}[\ln \ppdist_{\olapsShort}]+ \sum_{{\modSubsys\in\modDecomp}} \mathbb{E}_{\projMapp p} [\ln \ppdist_{\modSubsys \setminus \olapsShort \vert \modSubsys \cap \olapsShort}]\label{eq:appMM9a}\\
&\quad= \mathbb{E}_{\projMapp{p}}[\ln \projMapp{\ppdist}], \label{eq:appMM9}
\end{align}
where $a_{\olapsShort}$ and $a_{\modSubsys \setminus \olapsShort \vert \modSubsys \cap \olapsShort}$ indicate marginal and conditional distributions, respectively. 
In \cref{eq:appMM9a}, we used that $p$ and $\projMapp{p}$ have the same marginals over all subsystems all $\modSubsys\in\modDecomp$ as well as the overlap $\olapsShort$ (this can be verified from the definition of $\projMbase$, \cref{eq:projMbasedef}). %
Then,
\begin{align*}
D(p\Vert \projMapp{q}) &= D(p\Vert\projMapp{p} ) +  \mathbb{E}_p[\ln \projMapp{p} - \ln \projMapp{q}]\\
& = D(p\Vert\projMapp{p} ) +  \mathbb{E}_{\projMapp{p}}[\ln \projMapp{p} - \ln \projMapp{q}]\\
& =  D(p\Vert\projMapp{p} ) +  D(\projMapp{p} \Vert \projMapp{q}),
\end{align*}
where the second line follows by applying \cref{eq:appMM9} twice, first taking $\ppdist=p$ and then taking $\ppdist=q$.

\subsection{$\projMbase$ commutes with $e^{\inftime \LL}$}

\label{app:modcomm}
\global\long\def\subOnly{{\bm{\modSubsys}}}%
\global\long\def\oSsi{{\subOnly^c}}%

\newcommandx\eTA[1][usedefault, addprefix=\global, 1=\modSubsys]{e^{\inftime \Lmod[#1]}}%

\newcommand{\dX}{\deltaFunc[x]}
\def\dXS#1{\deltaFunc[x_#1]}

\newcommand{\dXx}{\dX(x')}
\def\dXSx#1{\dXS{#1}(x'_#1)}

\newcommand{\condP}{p}
\newcommand{\gAXbase}[1]{T^{(#1)}_\inftime}
\newcommand{\gAX}{\gAXbase{\modSubsys}(x'|x)}
\def\gAXmargS#1{\gAXbase{\modSubsys}(x'_#1|x)}

\newcommand{\gAXA}{\gAXbase{\modSubsys}({x_\subOnly'\vert x_\modSubsys})}

\newcommand\unionAllOthers{\bigcup_{\otherS \in \modDecomp\setminus \{\modSubsys \}} \otherS }

We show that if for some generator $\LL$,
\cref{eq:m2,eq:m3} hold for all $\modSubsys\in\modDecomp$, then
 $\projMshortBase$
and $e^{\inftime\LL}$ obey the commutativity relation of \cref{eq:comm0}.  We assume that all $\Lmod$ in \cref{eq:m3} are bounded linear operators. 

Before  deriving our result, 
we introduce some helpful notation:
\begin{enumerate}
\item  $\dXx$ indicates the delta function distribution over $X$ centered at $x$ (this is the Dirac delta for continuous $X$, and the Kronecker delta for discrete $X$).  
For any subsystem $S\subseteq V$, $\dXSx S$ indicates the delta function distribution over $X_S$ centered at $x_S$.  
\item  $\gAX = [\eTA \dX](x')$ indicates the conditional distribution over $X$, given that the system starts on state $x$ and then evolves under $\Lmod$ for time $\inftime$.  
\item For any $\modSubsys\in\modDecomp$, 
\[
{\textstyle \subOnly:=\modSubsys\setminus \big(\unionAllOthers\big)=A\setminus \olaps}
\]
indicates the set of degrees of freedom that belong exclusively to $\modSubsys\in\modDecomp$ (and no other subsystems), and 
\[
{\textstyle \oSsi := V\setminus \subOnly=\unionAllOthers}.
\] 
 indicates the complement of $\subOnly$, which is the set of degrees of freedom that fall into at least one of the other subsystem besides $\modSubsys$.
\end{enumerate}

To derive the commutativity relation, we proceed in three steps, which are described in detail in the subsections below. In the first step, we show that, for all $\inftime \ge 0$ and $\modSubsys\in\modDecomp$, the conditional distribution $\gAX$ can be written in the following product form:
\begin{align}
\gAX  = \gAXA \dXSx \oSsi.
\label{eq:appHa}
\end{align}
In the second step, we show that \cref{eq:appHa} implies the following commutativity relation for any $p\in\Dset$ and each $\modSubsys \in \modDecomp$:
\begin{align}
\eTA \projM{p} =  \projM{\eTA p}.
\label{eq:appHb}
\end{align}
In the third step, we show that the generators corresponding to all subsystems commute:
\begin{align}
\Lmod \Lmod[\otherS] = \Lmod[\otherS]\Lmod \qquad\forall \modSubsys,\otherS\in\modDecomp.
\label{eq:appHc}
\end{align}

We then combine these three results to show that  $\projMbase$ and $\eLb$ commute.  Write
\begin{align*}
\eLb \projM p =e^{\sum_{\modSubsys\in\modDecomp}\inftime\Lmod} \projM p=\prod_{\modSubsys\in\modDecomp}e^{\inftime\Lmod}\projM p.
\end{align*}
where we used  \cref{eq:m0,eq:appHc} to expand the operator exponential. Then, using \cref{eq:appHb}, write %
\[ 
\prod_{\modSubsys\in\modDecomp}e^{\inftime\Lmod}\projM p =\projMbase\Bigg( \prod_{\modSubsys\in\modDecomp}\eTA p\Bigg)=\projM {\eLb p}.
\]
Combining these two results implies that $\eLb \projM p=\projM {\eLb p}$ for all $p\in\Dset$ and  $\inftime \ge 0$, as in \cref{eq:comm0}.

\subsubsection{Derivation of \cref{eq:appHa}}

To derive \cref{eq:appHa}, consider the conditional distribution over $\subOnly$ given initial state $x$, as induced by $\Lmod$: %
\begin{align}
 \gAXmargS{\subOnly}&=[\eTA \dX]_\subOnly(x'_\subOnly) \nonumber \\
 &=[\dX]_\subOnly(x'_\subOnly)+\sum_{k\ge1}\frac{\inftime^{k}}{k!}[{\Lmod}^{k} \dX]_\subOnly(x'_\subOnly) \nonumber \\
 &=\dXSx \subOnly +\sum_{k\ge1}\frac{\inftime^{k}}{k!}[{\Lmod}^{k} \dX]_\subOnly(x'_\subOnly).\label{eq:appH9}
\end{align}
where in the second line we expanded  the operator exponential as $\eTA = \sum_k \inftime^k {\Lmod}^k/k!$.  
Note that $\subOnly\subseteq \modSubsys$, so %
$[\Lmod \dX]_{\subOnly}$ is a function of $[\Lmod \dX]_{\modSubsys}$, 
which in turn is a function of $x_{\modSubsys}$ 
by \cref{eq:m2}. 
Similarly, $\dXSx \subOnly$ depends only on $x_\modSubsys$, not $x$. 
This means the right hand side of \cref{eq:appH9} depends only on $x_\modSubsys$, which we indicate by
\begin{align}
  \gAXmargS{\subOnly} = \gAXA.
\label{eq:appH0}
\end{align}
Now consider the conditional distribution over any other subsystem $\otherS \ne \modSubsys$ given initial state $x$, as induced by $\Lmod$:
\begin{align}
 \gAXmargS{\otherS}&=\dXSx \otherS +\sum_{k\ge1}\frac{\inftime^{k}}{k!}[{\Lmod}^{k}\dX]_{\otherS}(x'_\otherS)\nonumber \\
 &=\dXSx \otherS ,\label{eq:n1}
\end{align}
where we used that $[\Lmod \dX]_{\otherS}=0$ by \cref{eq:m3}. 

Now, it is straightforward to show that if some distribution $p$ over $X_V$ has delta function marginals $p_\otherS = \dXS \otherS$ for all $\otherS \ne \modSubsys$, then $p$ must have following product form:
\begin{align}
\label{eq:appProdDelta0}
p(x') = p_{\subOnly}(x_{\subOnly}')\,\dXSx \oSsi,
\end{align}
where we use hat  $\oSsi=\unionAllOthers$. 
\cref{eq:appHa} follows by taking $p(x') = \gAX$ in \cref{eq:appProdDelta0}, while using \cref{eq:appH0}.

\subsubsection{Derivation of \cref{eq:appHb}}

Consider any $\inftime \ge0 $ and  $\modSubsys\in\modDecomp$. Using \cref{eq:m2} and the identity $\eTA = \sum_k \inftime^k {\Lmod}^k /k!$, one can show that whenever two distributions $p,q\in\Dset$ obey $p_A=q_A$, it must be that $[e^{\inftime\Lmod}p]_\modSubsys=[e^{\inftime\Lmod}q]_\modSubsys$. 
Since $p_A = [\projM p]_A$ (see the definition of $\projMbase$ in \cref{eq:projMbasedef}), 
\begin{align}
[e^{\inftime\Lmod}p]_\modSubsys=[e^{\inftime\Lmod}\klop p]_\modSubsys.
\label{eq:appG0}
\end{align}
In addition, given \cref{eq:n1}, we have 
$[e^{\inftime\Lmod}p]_\oSsi=p_\oSsi$. Given that $\otherS \subseteq \oSsi$ for each  $\otherS\ne \modSubsys$, we have
\begin{align}
[e^{\inftime\Lmod}p]_B&=p_B=\klop{p}_B= [e^{\inftime\Lmod}\klop{p}]_B.
\label{eq:appG1}
\end{align}
Similarly, $\olaps\subseteq \oSsi$ and therefore
\begin{align}
[e^{\inftime\Lmod}p]_{\olaps}&=[e^{\inftime\Lmod}\klop{p}]_{\olaps}.
\label{eq:appG2}
\end{align}

Now, observe that the distribution $\projM p$ does not depend
on the full distribution $p$, but only on the marginal distributions $p_{\olaps}$ and $\{p_\modSubsys\}_{\modSubsys\in\modDecomp}$.
By \cref{eq:appG0,eq:appG1,eq:appG2}, these marginals are the same for $\eTA p$ and  $\eTA \projM p$, which means that
\begin{align}
\projM{\eTA p} = \projM{\eTA\projM p}.
\label{eq:appG5}
\end{align}

Next, using \cref{eq:appHa} and some simple (but rather tedious) algebra, it can be shown that
\begin{align}
\eTA \projM p =  p_{\modSubsys\setminus O \vert \modSubsys \cap O}'\; p_O\; \prod_{{\otherS\ne \modSubsys}}p_{\otherS\setminus O \vert \otherS\cap O} \;,
\label{eq:appG6}
\end{align}
where 
\begin{multline}
p_{\modSubsys\setminus O \vert \modSubsys \cap O}'(x_{\modSubsys\setminus O}' \vert x_{\modSubsys \cap O}') = \\
\int \gAXbase{\modSubsys}({x_\subOnly'\vert x_\subOnly, x_{\modSubsys \cap O}}') p(x_{\subOnly}\vert x_{\modSubsys\cap O}')dx_{\subOnly},
\end{multline}
and we used the conditional distribution $\gAXbase{\modSubsys}({x_\subOnly'\vert x_\subOnly, x_{\modSubsys \cap O}})$ from  \cref{eq:appH0}. The right hand side of \cref{eq:appG6} has the form of the
right hand side of \cref{eq:projMbasedef}, so it is invariant under $\projMshortBase$:  
\begin{align}
\projMshort {e^{\inftime\Lmod}\projMshort {p}}=e^{\inftime\Lmod}\projMshort {p}.
\label{eq:appG6b}
\end{align}
\cref{eq:appHb} follows by combining \cref{eq:appG5,eq:appG6b}.

\subsubsection{Derivation of \cref{eq:appHc}}

Using \cref{eq:appHa} and some algebra, one can verify that for all $\inftime\ge0$ and $\modSubsys, \otherS\in\modDecomp$, 
\begin{multline}
\int \gAXbase{\modSubsys}(x''\vert x') \gAXbase{\otherS}(x'\vert x) \,dx'
\\=\int \gAXbase{\otherS}(x''\vert x') \gAX \, dx',
\end{multline}
which in operator notation can be written as
\begin{align}
\eTA \eTA[\otherS] \dX=  \eTA[\otherS] \eTA\dX.
\label{eq:appGG2}
\end{align}
Then, for any   function $f=\int f(x) \dX \,dx$, write
\begin{align*}
\eTA \eTA[\otherS]  f &= \eTA \eTA[\otherS] \int f(x) \dX\,dx \\
 &= \int f(x)\eTA \eTA[\otherS] \dX\,dx \\
&=\int f(x)  \eTA[\otherS] \eTA \dX \,dx\\
& = \eTA[\otherS] \eTA \int f(x)   \dX \,dx\\
&= \eTA[\otherS] \eTA f,
\end{align*}
where we exchanged the order of the bounded operators $\eTA \eTA[\otherS]$ and $ \eTA[\otherS] \eTA$ with the (Bochner) integral $\int f(x) \dX \,dx$, and used \cref{eq:appGG2}. 
This shows that $e^{\inftime\Lmod}$ and $e^{\inftime\Lmod[\otherS]}$ commute for all $\inftime \ge 0$, so their inverses $e^{-\inftime\Lmod}$ and $e^{-\inftime\Lmod[\otherS]}$ must also commute.  Given that $e^{\inftime\Lmod}$ and $e^{\inftime\Lmod[\otherS]}$ commute for all $\inftime \in \mathbb{R}$,  $\Lmod$ and $\Lmod[\otherS]$ must commute~\cite[p.~23]{engel_one-parameter_2000}.

\subsection{Szilard box: derivation of \cref{eq:cf01,eq:cf02}}
\label{app:szModDeriv}

We first derive \cref{eq:cf01}.  Using  \cref{eq:twirlMsz} and some rearrangement, write
\begin{align}
D(\projM{\ptheta}\Vert \po) =\ln 4 - S(\ptheta(X_1))-S({\ptheta}(X_2)),
\end{align}
where 
$S(\ptheta(X_1))$ and $S({\ptheta}(X_2))$ refer to the marginal entropies under $\ptheta$.  
It is easy to see that by symmetry,
\begin{align}
S(\ptheta(X_1))=S(p_{\frac{\pi}{2} - \theta}(X_2)).
\label{eq:XYsapp}
\end{align}
Therefore, we will derive a closed-form expression for $D(\projM{\ptheta}\Vert \po)$ by finding a closed-form expression for
\begin{align}
S(\ptheta(X_1)):=-\int_{-1}^{1} \ptheta(x_1) \ln \ptheta(x_1) \,dx_1.
\label{eq:Scf}
\end{align}

First, consider the case of $\theta\in [-\pi/2,\pi/2]$, and define $\Atheta:=|\tan\theta|$. It can be verified from \cref{eq:degMeas} that the marginal distribution $\ptheta(x_1)$ always has a piecewise linear form. In particular, if $\Atheta < 1$, then for any $x_1\in[-1,1]$,
\begin{align}
\ptheta(x_1) = \begin{cases}
1 & \text{if $-1 \le x_1\le -\Atheta$}\\
\frac{\Atheta - x_1}{2\Atheta} & \text{if $-\Atheta \le x_1 \le \Atheta$}\\
0 & \text{if $x_1> \Atheta$}
\end{cases}
\end{align}
Otherwise, if $\Atheta > 1$, then for any $x_1\in[-1,1]$,
\begin{align}
\ptheta(x_1) = \frac{\Atheta - x_1}{2\Atheta}.
\end{align}

Plugged into \cref{eq:Scf}, this gives
\begin{align}
S({\ptheta}(X_1)) &= \begin{cases}
-\int_{-1}^{1} \frac{\Atheta - x_1}{2\Atheta}\ln \frac{\Atheta - x_1}{2\Atheta}\, dx_1
& \text{if $\Atheta > 1$}\\
-\int_{-\Atheta}^{\Atheta} \frac{\Atheta - x_1}{2\Atheta}\ln \frac{\Atheta - x_1}{2\Atheta} \, dx_1
& \text{otherwise}
\end{cases}\nonumber
\end{align}
Integrating these two cases separately in \emph{Mathematica}, and plugging in the definition of $\Atheta$, gives
\begin{align}
S({\ptheta}(X_1))&=\frac{1}{2}\begin{cases}
f(|\tan \theta|) & \text{if $|\tan \theta| > 1$}\\
|\tan \theta|
& \text{otherwise}
\end{cases}
\label{eq:app321}
\end{align}
where for convenience we've defined
\begin{align}
f(x)=1-\frac{1+x^2}{2x} \ln \frac{x+1}{x-1} - \ln \frac{x^2-1}{4x^2}.
\end{align}
Recall that so far we assumed that  $\theta\in[-\pi/2,\pi/2]$. However, by \cref{eq:degMeas},  $\ptheta(x_1,x_2)=p_{\pm \pi-\theta}(-x_1,x_2)$, which implies that $\ptheta(x_1)=p_{\pi-\theta}(-x_1)=p_{-\pi-\theta}(-x_1)$ and $S({\ptheta}(X_1))=S(p_{\pi-\theta}(X_1))=S(p_{-\pi-\theta}(X_1))$. It can also be verified that $|\tan \theta|=|\tan (\pi-\theta)|=|\tan (-\pi-\theta)|$, so in fact \cref{eq:app321} holds for all $\theta\in[-\pi,\pi]$.

Finally, %
 if $|\theta|\in\angleRange$, then  \cref{eq:app321,eq:XYsapp} imply 
\begin{align*}
|\tan \theta|>1 ,\quad& S({\ptheta}(X_1)) = \frac{1}{2}f(|\tan \theta|)\\
|\tan ({\textstyle \frac{\pi}{2} - \theta})|\le 1,\quad& S({\ptheta}(X_2)) = \frac{1}{2}|\tan ({\textstyle \frac{\pi}{2} - \theta})|
\end{align*}
Conversely, if $|\theta|\in[0,\pi]\setminus \angleRange$, then
\begin{align*}
|\tan \theta|\le 1 ,\quad& S({\ptheta}(X_1)) = \frac{1}{2}|\tan \theta|\\
|\tan ({\textstyle \frac{\pi}{2} - \theta})|> 1,\quad& S({\ptheta}(X_2)) = \frac{1}{2}f(|\tan ({\textstyle \frac{\pi}{2} - \theta})|)
\end{align*}
\cref{eq:cf01} follows by combining these results  and rearranging.

To derive \cref{eq:cf02}, use $\projG{\projM{\ptheta}}(x_1,x_2)=\ptheta(x_1)\po(x_2)$ to write 
\begin{align}
D(\projG{\projM{\ptheta}}\Vert \po) &=\ln 4 - S(\ptheta(X_1))-S(\po(X_2)) \nonumber \\
&=\ln 2 - S(\ptheta(X_1)),\label{eq:app322}
\end{align}
where we used that $S(\po(X_2))=\ln 2$. \cref{eq:cf02} then follows by combining \cref{eq:app322,eq:app321}.

\subsection{Example: Feedback controlled flashing ratchet}
\label{app:ratchet}

Here we derive a closed-form expression for the accessible information in the feedback-controlled collective flashing ratchet.

For notational convenience, let $a=1/\alpha$ indicate the slope of the increasing part of $V$ in \cref{fig:genszilard}(b), and $b=-1/(1-\alpha)$ indicate the slope of the decreasing part of $V$.  
Note that the net force $\sum_v  V'(x_v)$ can be seen as the sum of $\numparticles$ random variables, where by assumption each $V'(x_v)$ is equal to $a=1/\alpha$ with probability $\alpha$ and equal to $b=-1/(1-\alpha)$ with probability $1-\alpha$.  This implies that the expectation of $V'(x_v)$ is 0 and the variance is $1/(\alpha(1-\alpha))$. 

We will first compute the accessible information $\IaccS[\projMbase]=\sum_v I(X_v;M)=\numparticles\cdot I(X_1;M)$. The mutual information between $M$ and the state of a single particle $X_1$ is given by 
\begin{align}
&I(X_1;M)=S(M)-S(M\vert X) \nonumber \\
&\quad=h_2(p(1))-\alpha h_2(p(1\vert a))-(1-\alpha)h_2(p(1\vert b))
\label{eq:singleMI},
\end{align}
where $p(1)$ is the probability that the net force is positive, $p(1\vert a)$ is the probability that the net force is positive given that particle $X_1$ experiences force $a$, and $p(1\vert b)$ is the probability that the net force is positive given that the particle $X_1$ experiences force $b$. 
We can compute $p(1)$ by considering the case when $k=0,1,2,\dots$ particles experience force $a$. Assuming the particles are independent, this is given by %
\begin{align}
p(1) = \sum_{\mathclap{k=0}}^{\numparticles} B_{\numparticles,\alpha}(k)\Theta(k a + (\numparticles-k)b) 
\label{eq:appP1}
\end{align}
where $B_{\numparticles,\alpha}$ is the binomial probability of $k$ successes, given  $\numparticles$ trials with success probability $\alpha$. 
To compute $p(1\vert a)$, note that, given that $X_1$  experiences force $a$, $M=1$ whenever the other $\numparticles-1$ particles experience a net force larger than $-a$. The probability of this event is
\begin{align}
p(1\vert a)=\sum_{\mathclap{k=0}}^{\numparticles-1} B_{\numparticles-1,\alpha}(k)\Theta(k a + (\numparticles-1-k)b + a).
\label{eq:appP2}
\end{align}
Conversely, if $X_1$ experiences force $b$, then $M=1$ if the other $\numparticles-1$ particles experience a net force larger than $-b$, which has probability 
\begin{align}
p(1\vert b)=\sum_{\mathclap{k=0}}^{\numparticles-1} B_{\numparticles-1,\alpha}(k)\Theta(k a + (\numparticles-1-k)b + b).
\label{eq:appP3}
\end{align}
Plugging \cref{eq:appP1,eq:appP2,eq:appP3} into \cref{eq:singleMI} gives $I(X_1;M)$. Multiplying by $\numparticles$ gives the accessible information,
\begin{align}
&\IaccS[\projMbase]=\numparticles \cdot I(X_1;M)=\label{eq:accFLR}\\
&\numparticles\Bigg[h_2 \Bigg(\sum_{\mathclap{k=0}}^{\numparticles} B_{\numparticles,\alpha}(k)\Theta(k a + (\numparticles-k)b)  \Bigg)-\nonumber\\
&\alpha h_2\Bigg(\sum_{\mathclap{k=0}}^{\numparticles-1} B_{\numparticles-1,\alpha}(k)\Theta(k a + (\numparticles-1-k)b + a) \Bigg)-\nonumber\\
&(1-\alpha) h_2\Bigg(\sum_{\mathclap{k=0}}^{\numparticles-1} B_{\numparticles-1,\alpha}(k)\Theta(k a + (\numparticles-1-k)b + b)\Bigg)\Bigg]\nonumber,
\end{align}
This is shown in \cref{fig:ratchet}(left) for different values of $\numparticles$ and $\alpha$.  

To compute the efficiency values in \cref{fig:ratchet}(right), we simply divide $\IaccS[\projMbase]$ by $\IXMs$ the total mutual information between the measurement and all particles. 
Since the measurement in \cref{eq:ratchetMeasurement} is deterministic, this mutual information is given by the entropy of $M$,
\begin{align}
\IXMs=S(M)=h_2(p(1)),
\end{align}
which can be computed using \cref{eq:appP1}.

We now compute the asymptotic value of accessible information and efficiency in the $\numparticles\to \infty$ limit. The sum of a large number of independent random variables with mean 0 and variance $1/(\alpha(1-\alpha))$ approaches a Gaussian with mean 0 and variance $\numparticles/(\alpha (1-\alpha))$. Thus, in the $\numparticles\to\infty$ limit, the probability that the force is positive converges to $p(1)=1/2$, so $\IXMs=S(M)$ converges to $\ln 2$.  Recall that $p(1\vert a)$ is given by the probability that $\numparticles-1$ particles experience a net force larger than $-a$.  In the $\numparticles\to\infty$ limit, this conditional probability converges to
\[
p(1\vert a)=1-\Phi_{\alpha,\numparticles-1}(-a)=\Phi_{\alpha,\numparticles-1}(a).
\]
where $\Phi_{\alpha,\numparticles-1}$ is the cumulative distribution function of a Gaussian with mean 0 and variance ${\numparticles}/(\alpha (1-\alpha))$. We can similarly calculate %
\[
p(1\vert b)=1-\Phi_{\alpha,\numparticles-1}(-b)=\Phi_{\alpha,\numparticles-1}(b).
\]
Plugging into \cref{eq:singleMI} gives
\begin{multline}
I(X_1;M)=\\\ln 2-\alpha h_2\big(\Phi_{\alpha,\numparticles-1}(a)\big)-(1-\alpha)h_2\big(\Phi_{\alpha,\numparticles-1}(b)\big).
\end{multline}
Using $a=1/\alpha$ and $b=-1/(1-\alpha)$ and some analysis (e.g., by taking limits in \emph{Mathematica}) shows that
\begin{align}
\lim_{\numparticles\to\infty} \numparticles\cdot I(X_1;M) = \frac{1}{\pi},
\end{align}
irrespective of $\alpha$. This is the asymptotic accessible information, which  appears as the dotted line in \cref{fig:ratchet}(left). The asymptotic efficiency, which appears as the dotted line in \cref{fig:ratchet}(right), is given by $1/(\pi\ln2)$  (since $\IXMs=\ln2$ in the $\numparticles\to\infty$ limit).

\section{Coarse-grained constraints}
\label{appsec:cg}

\subsection{Derivation of \cref{eq:closed0} from \cref{eq:closed0ME,eq:closed0FP}}

In general, the microstate distribution $p$ evolves according
to some generator $L$,  $\ppt p(t)=Lp(t)$, the macrostate distribution $\cgP$
evolves according to a coarse-grained generator $\cgLL^p$. In general, the coarse-grained dynamics will not be closed, meaning that $\cgLL^p$ can depend on the microstate distribution $p$.  In this section, we provide concrete conditions on the generators that guarantee that the coarse-grained dynamics are closed.
In the following derivations, for notational simplicity, we omit the dependence of $p(x,t)$ and $p(z,t)$ on $t$.

\def\cgLLp{\cgLL^p} 

For discrete-state master equations, the coarse-grained dynamics are given by~\citep{esposito2012stochastic}
\begin{equation}
\ppt \cgP(z)=\cgLLp \cgP(z)
=\sum_{\z}\Big[\sTrans[\cgLLp]{\zz}{\z}\cgP(\zz)-\sTrans[\cgLLp]{\z}{\zz}\cgP(\z)\Big],\label{eq:cgdiff0}
\end{equation}
 where $\sTrans[\cgLLp]{\zz}{\z}$ is the transition rate
from macrostate $\zz$ to  $\z$, 
\begin{equation}
\sTrans[\cgLLp]{\zz}{\z}=\sum_{x'}p({x'\vert\zz})\sum_{x}\deltaFunc[\cgf(x)](\z)\Lji.\label{eq:cg1}
\end{equation}
By
plugging \cref{eq:closed0ME} into \cref{eq:cg1} and simplifying, one can
verify that $\sTrans[\cgLLp]{\zz}{\z}$ does not depend on the microstate distribution $p$, therefore \cref{eq:closed0} holds.

A similar approach can be used for continuous-state master equations.

\def\cgDiff{\hat{\DL}}
\def\cgDrift{\hat{\uL}}

We now consider Fokker-Planck equations of
the form \cref{eq:cgFP}, given  a linear coarse-graining function $\cgf(x)=Wx$. %
Using 
\cite[Prop.~2.8]{duong2018quantification}, we write 
the evolution of the coarse-grained distribution $\cgP$ as
\begin{equation}
\ppt\cgP(z) =\nabla\cdot(\cgDrift(z) \cgP(z) )+\beta^{-1}\mathrm{tr}(H^T (\cgDiff(z) \cgP(z))),\label{eq:appcgfp1}
\end{equation}
where $H$ is the Hessian matrix of second derivative operators, and we've defined
\begin{align}
\cgDrift(z) & :=\int\left[p(x\vert z)W\nabla E(x)-\beta^{-1}\Delta\cgf(x)\right]dx \label{eq:appJ4}\\
&= \int\left[p(x\vert z)W\nabla E(x)\right]dx \label{eq:appJ5}\\
&= -\cgEnergy(z),\label{eq:appJ6}\\
\cgDiff(z) & :=\int p(x\vert z) WW^T\,dx=I\label{eq:appJ7}.
\end{align}
We used Eq.~2.29 from \cite{duong2018quantification} in \cref{eq:appJ4}, the linearity of $\cgf$ in \cref{eq:appJ5}, and \cref{eq:closed0FP} in \cref{eq:appJ6}. We used Eq.~2.30 from \cite{duong2018quantification} and the assumption that $WW^T=I$ in \cref{eq:appJ7}.  It is easy to check that $\mathrm{tr}(H^T (I \cgP))=\Delta \cgP$; combined with \cref{eq:appJ6,eq:appJ7,eq:appcgfp1}, this gives to \cref{eq:cgfp2}. Since the right hand side of \cref{eq:cgfp2} does not depend on the microstate distribution, the coarse-grained dynamics are closed.

\subsection{Derivation of \cref{eq:macroNaineq}}

\label{subsec:cgNabound}

\def\revTimeMarker{\tilde}
\global\long\def\eT{\inftime}%

\global\long\def\xF{\traj}%
\global\long\def\xR{\revTimeMarker{\bm{x}}^{\dagger}}%
\global\long\def\xRr{\revTimeMarker{\bm{x}}}%
\global\long\def\nuF{\bm{\nu}}%
\global\long\def\nuR{\revTimeMarker{\bm{\nu}}}%

\def\revP{\revTimeMarker{p}}
\global\long\def\sX{x}%
\global\long\def\sXx{x^\dagger}%
\global\long\def\eX{{x'}^{\dagger}}%
\global\long\def\sZ{z}%
\global\long\def\sZz{z^\dagger}%
\global\long\def\eZ{{z'}^\dagger}%
\global\long\def\eZz{z'}%
\global\long\def\eXx{x'}%

\global\long\def\ePp{\pf}%
\global\long\def\cgePp{\cgPf}%

\global\long\def\cgPi{\cgP[\pi]}%

Our derivation below does not assume isothermal protocols, so the inequality in \cref{eq:macroNaineq} holds both for isothermal protocols and for protocols connected to any number of thermodynamic reservoirs.

To derive this result for a given $\LL$, we make two assumptions. 
First, as described in the main text, we assume that the coarse-grained
dynamics are closed, \cref{eq:closed0}. 
Second, we assume that the coarse-grained stationary distribution
$\cgPi$ (where $\pi$ is the stationary distribution of $\LL$),  is invariant under conjugation of odd-parity variables,
\begin{equation}
\cgPi(\cgf(x))=\cgPi(\cgf(x^{\dagger}))\qquad\forall x\in X\label{eq:opvApp1}
\end{equation}
where $x^{\dagger}$ indicate the conjugation of state $x$ in which
all odd-parity variables (such as momentum) have their sign flipped. 
For an isothermal protocol, the stationary distributions are equilibrium distributions, and \cref{eq:opvApp1} is satisfied~\cite{lee_fluctuation_2013}.  For more general protocols, 
\cref{eq:opvApp1}  holds if there are no odd-parity
variables (e.g., overdamped dynamics), so $x=x^{\dagger}$. It  also holds if the coarse-graining function
maps each $x$ and its conjugate to the same macrostate, $\cgf(x)=\cgf(x^{\dagger})$, as well as some other cases.

Now imagine a system that starts from some initial distribution
$\ps$ at time $t=0$, and then undergoes free relaxation under $\LL$ towards a (possibly nonequilibrium)
stationary distribution $\pi$, reaching a final distribution $\pf$ by time $t=\eT$.  
Next, we use existing results in stochastic thermodynamics~\cite{esposito_three_2010,lee_fluctuation_2013} and write the EP incurred over time interval $t\in[0,\eT]$ as
\begin{align}
\EP(\eT)=D(p(\xF,\nuF)\Vert\revP(\xR,\nuR)),%
\label{eq:appS6}
\end{align}
(see also \cref{app:fluct}), where: 
\begin{enumerate}
\item $\xF = (x, \dots, x')$ indicate a continuous-time trajectory of system states over time interval $t\in[0,\eT]$, where $x$ and $x'$ indicate the initial and final system states respectively, and $\xR=(\eX,\dots,x^\dagger)$ is the corresponding
time-reversed and conjugated trajectory;
\item $\nuF$ is a sequence of reservoirs which exchange conserved quantities
with the system during $t\in[0,\eT]$ and $\nuR$ is the corresponding
time-reversed sequence~\cite{esposito_three_2010,van2010three,esposito2010three};
\item $p(\xF,\nuF)=P(\xF,\nuF\vert\sX)\ps(\sX)$ is the probability of forward
trajectory $(\xF,\nuF)$ given initial distribution $\ps$, where
$P(\xF,\nuF\vert\sX)$ is the conditional distribution generated by
the free relaxation;
\item $\revP(\xR,\nuR)=P(\xR,\nuR\vert\eX)\ePp(x')$ is the probability
of reverse trajectory $(\xR,\nuR)$ under a free relaxation that starts with the following distribution:
\begin{align}
p'(x')=\int P(x'\vert x)p(x) dx.\label{eq:appS4}
\end{align}
\end{enumerate}

Using the fact that EP decreases under state-space and temporal
coarse-graining~\cite{esposito2012stochastic,gomez2008cg}, we bound \cref{eq:appS6} as 
\begin{align}
\EP(\eT)\ge D(p(\xF)\Vert p(\xR))\ge D(p(\sZ,\eZz)\Vert\revP(\sZz,\eZ)),
\label{eq:appS8}
\end{align}
where $\sZ=\cgf(\sX)$, $\eZz=\cgf(\eXx)$, $\sZz=\cgf(\sXx)$, and
$\eZ=\cgf(\eX)$. The final KL divergence can be decomposed as
\begin{multline}
D(p(\sZ,\eZz)\Vert\revP(\sZz,\eZ))=\left[D(\cgP\Vert\cgPi)-D(\cgePp\Vert\cgPi)\right]+\\
\int p(\sZ,\eZz)\ln\left[\frac{p(\sZ,\eZz)\cgPi(\sZ)\cgePp(\eZz)}{\revP(\sZz,\eZ)\cgP(\sZ)\cgPi(\eZz)}\right]d\sZ \,d\eZz.\label{eq:appS0}
\end{multline}
Using Jensen's inequality, we lower bound the integral term as
\begin{align}
&\int p(\sZ,\eZz)\ln\left[\frac{p(\sZ,\eZz)\cgPi(\sZ)\cgePp(\eZz)}{\revP(\sZz,\eZ)\cgP(\sZ)\cgPi(\eZz)}\right]d\sZ \,d\eZz\nonumber\\
&\quad=-\int p(\sZ,\eZz)\ln\left[\frac{\revP(\sZz,\eZ)\cgP(\sZ)\cgPi(\eZz)}{p(\sZ,\eZz)\cgPi(\sZ)\cgePp(\eZz)}\right]d\sZ \,d\eZz\nonumber\\
&\quad\ge-\ln\left[\int\frac{\revP(\sZz,\eZ)\cgP(\sZ)\cgPi(\eZz)}{\cgPi(\sZ)\cgePp(\eZz)}d\sZ \,d\eZz\right].\label{eq:appS2}
\end{align}
Note that $\cgPi(\eZz)=\cgPi(\eZ)$ by \cref{eq:opvApp1},
and $\revP_Z(\eZ)=\cgePp(\eZz)$ by the definition of $\cgePp$ in \cref{eq:appS4}, allowing us to 
rewrite the RHS of \cref{eq:appS2} as 
\begin{equation}
-\ln\left[\int\frac{\cgP(\sZ)}{\cgPi(\sZ)}\left[\int\revP(\sZz\vert\eZ)\cgPi(\eZ)d\eZz\right]d\sZ\right].\label{eq:appS1}
\end{equation}
The inner integral can be further rewritten as
\begin{align*}
\int\revP(\sZz\vert\eZ)\cgPi(\eZ)d\eZz & =\int P(\sZz\vert\eX)\revP(\eX\vert\eZ)\cgPi(\eZ)d\eXx\\
 & =\cgPi(\sZz)\\
 & =\cgPi(\sZ),
\end{align*}
where in the second line we used the assumption of closed dynamics (\cref{eq:closed0}) and the stationarity of $\pi$ under $P(\cdot\vert\cdot)$, and in the third line we used \cref{eq:opvApp1}.
We can then rewrite \cref{eq:appS1} as \[
-\ln\left[\int\frac{\cgPi(\sZ)}{\cgPi(\sZ)}\cgPi(\sZ)\,d\sZ\right]=0.
\]
Combined with \cref{eq:appS2}, this implies that the integral term
in \cref{eq:appS0} is non-negative. Combining with \cref{eq:appS8} gives
\[
\EP(\eT)\ge D(\cgP\Vert\cgPi)-D(\cgePp\Vert\cgPi).
\]
Finally, using the definition of the EP rate and the results above,  
\begin{align}
\EPr(p,L)& := \lim_{\eT \to 0} \frac{1}{\eT} \EP(\eT) \nonumber\\
&\ge \lim_{\eT \to 0} \frac{1}{\eT} [D(\cgP\Vert\cgPi)-D(\cgePp\Vert\cgPi)]\nonumber \\
&= -\int \ppt \cgP(t)(z) \ln \frac{ \cgP(z)}{ \cgPi(z)}\,dz \ge0,%
\label{eq:gf76}
\end{align}
where $ \ppt \cgP(t)=\cgLL \cgP$. \cref{eq:gf76} follows from \cref{eq:app0,eq:app0b,eq:app0c,eq:app0d} above (with summations replaced by integrals). The discrete-state form of \cref{eq:gf76}, and also where $p$ and $\LL$ are explicitly time-dependent, appears in the main text as \cref{eq:macroNaineq}.

\clearpage
\end{document}